\def\EMAIL#1{\href{mailto:#1}{#1}}
\def\spi{\bar{\Pi}} 
\def\SPi{\bar{\Pi}} 
\def\SP{\Pi}  
\def\PP{\spi}             
\newtheorem{theorem}{Theorem}[section]
\newtheorem{lemma}[theorem]{Lemma}
\newtheorem{corollary}[theorem]{Corollary}
\newtheorem{proposition}[theorem]{Proposition}
\newtheorem{definition}[theorem]{Definition}
\newmdenv[
  topline=false,
  bottomline=false,
  skipabove=\topsep,
  skipbelow=\topsep
]{siderules}
\begin{document}

\title{Conditions for Stability in Strategic Matching} 
\author{James P. Bailey \\ Texas A\&M University \\ \EMAIL{jamespbailey@tamu.edu} \and Craig A. Tovey \\ Georgia Institute of Technology \\ \EMAIL{craig.tovey@isye.gatech.edu}}

\date{}

\maketitle


\begin{abstract}
We consider the stability of matchings when individuals strategically
submit preference information to a publicly known
algorithm.  Most pure Nash equilibria of the ensuing game yield a matching
that is unstable with respect to the individuals’ sincere preferences.  We
introduce a well-supported minimal dishonesty constraint, and obtain
conditions under which every pure Nash equilibrium
yields a matching that is stable with respect to the sincere
preferences.  The conditions on the matching algorithm are to be either
fully-randomized, or monotonic and independent of non-spouses (INS), an
IIA-like property. These conditions are significant because they support
the use of algorithms other than the Gale-Shapley (man-optimal) algorithm
for kidney exchange and other applications. We prove that the Gale-Shapley
algorithm always yields the woman-optimal matching when individuals are
minimally dishonest. However, we give a negative answer to one of Gusfield
and Irving’s open questions: there is no monotonic INS or
fully-randomized stable matching algorithm that is certain to yield the
egalitarian-optimal matching when individuals are strategic and minimally
dishonest. Finally, we show that these results extend to the student
placement problem, where women are polyandrous but must be honest but do
not extend to the admissions problem, where women are both polyandrous and
strategic.
\end{abstract}


\section{Introduction}\label{sec:Intro}
A matching between two disjoint sets of agents (men and women) is {\it
	stable} if for no man/woman pair does each prefer the other to the one
he/she is matched with. For any profile $\Pi$ of each man's preference
order over the women, and vice-versa, there exists at least one stable
matching.  Gale and Shapley \cite{GaleShapley62} proved this by inventing the first
stable matching algorithm, which on input $\Pi$ produces a matching that
is stable with respect to $\Pi$.  Their algorithm obtains the
``man-optimal'' stable matching, so called because it matches each man to
the woman he most prefers among all women to whom he is matched in at
least one stable matching.  There are several other polynomial time stable matching
algorithms \cite{Gusfield87,irving87};  the polytope (convex hull) of
stable matchings is polynomially separable \cite{VandeVate89,Rothblum92} and thus for many
objective criteria, an optimum stable matching can be found by linear
programming.  On the other hand, it is NP-hard to optimize with respect to
either the {\it sex-equal} and {\it median} criteria \cite{Kato93,Cheng2008}.

This paper considers the problem of obtaining a matching that is stable
with respect to the sincere preferences $\Pi$ -- henceforth called a {\it
	sincerely stable matching} -- when individuals strategically submit
preference data $\SPi$.  The standard way public choice mechanisms deal successfully with strategic behavior is to be strategy-proof, such that rational individuals will selfishly choose to be truthful.  The problem considered here is mathematically interesting because no stable matching algorithm guarantees rational individuals will behave honestly \cite{Irving86}.  Hence one must seek sincere stability without eliciting
honesty. 
To our knowledge there are no results in the literature where a public choice mechanism is proven to achieve its goal when individuals have incentive to be dishonest. 
How to obtain matchings that are stable with
respect to $\SPi$ poses a novel challenge because one has no direct access
to $\Pi$. One can only make inferences about $\Pi$ from $\SPi$.

We provide conditions under which individuals will, in equilibrium, submit
false data $\SPi$ that nonetheless yields a matching that is stable with
respect to $\Pi$.
Our conditions are that the individuals are 
minimally dishonest, and that the public matching algorithm be either fully
randomized (with respect to individual voters), or be monotonic and
independent of non-spouses (INS), an IIA-like property.  
We introduce minimal dishonesty as a Nash equilibrium refinement  and formally define it in 
Section
\ref{sec:Dishonest}. Moreover, we provide a wide array of studies that empirically support our Nash refinement in Section \ref{sec:averse}.
Informally, people are minimally dishonest if 
they  lie only to the extent necessary to maximize their utility. 
A fully randomized algorithm is such that either an individual is assigned the same spouse in every stable matching, or there are at least two different potential spouses who will be assigned to the individual  with positive probability.
The monotonicity and INS properties are similar to rationality
criteria in voting theory.  All but a few of the stable matching algorithms
in the literature satisfy these properties.


Based on the interplay of optimality conditions and rationality
constraints, we characterize the ways in which an individual will
strategically submit preferences. Individually, these characterizations say
nothing about the stability of the outcome. However, we show that these
characterizations collectively imply that the output of every equilibrium
will be a sincerely stable matching even when individuals submit insincere
preferences.

Our main results, good, bad, and perverse,  are then as follows. (1) Under
the stated conditions, every pure Nash equilibrium produces a sincerely
stable matching.  This result is significant theoretically and because it
supports the use of algorithms other than the Gale-Shapley (man-optimal)
algorithm for applications of stable matching.  (2) There does not
exist a monotonic INS or fully-randomized stable matching algorithm that is
certain to yield the egalitarian-optimal matching when individuals are
strategic and minimally dishonest.  This answers a specific open question
of Gusfield and Irving’s in the negative. (3)  The Gale-Shapley
(man-optimal) algorithm always yields the woman-optimal matching when
individuals are locally minimally dishonest. (A weaker form of this
perverse result was proved by Gale and Sotomayer \cite{GaleSotomayor85}.)

Finally, we extend these results to the student placement
problem, where women are polyandrous but must be honest, but show they do not hold for  the admissions problem, where women are both polyandrous and strategic.
Roth has long insisted that the student placement and admissions problems
are qualitatively different.  Our contrasting results support Roth's claim \cite{Roth85},
and  improve insight into their differences.

\subsection{Related Literature on Social Aversion to Lying}\label{sec:averse}

We introduce and propose the notion of minimal dishonesty (formally defined in Section \ref{sec:Dishonest}) to constrain the ways in which an individual can strategically submit preference data.
This notion is supported by a substantial body of empirical evidence from the experimental economics and psychology literature that indicates people are averse to lying.  Gneezy \cite{Gneezy2005} experimentally finds that people do not lie unless there is a benefit.  Hurkens and Kartik \cite{WouldILie} perform additional experiments that confirm an aversion to lying, and show their and Gneezy's data to be consistent with the assumption that some people never lie and others always lie if it is to their benefit.  Charness and Dufwenberg \cite{charness2006promises} experimentally find an aversion to lying and show that it is consistent with guilt avoidance.  Battigalli \cite{battigalli2013deception} experimentally finds some contexts in which guilt is insufficient to explain aversion to deception.  Several papers report evidence of a ``pure'' i.e., context-independent, aversion to lying \cite{lopez2013people,charness2010bare} that is significant but not sufficient to fully explain experimental data.

The set of research results we have cited here is by no means exhaustive.  Two additional ones are of particular relevance to our concept of minimal dishonesty.    Mazar {\it et al.} \cite{mazar2008dishonesty} find that  ``people behave dishonestly enough to profit but
honestly enough to delude themselves of their own integrity.'' Lundquist {\it et al.} \cite{Lundquist2009aversion} find that people have an aversion to lying which increases with the size of the lie.  Both of these studies support our hypothesis that people will not lie more than is necessary to achieve a desirable outcome.  We further suggest that less dishonesty may require less  cognitive effort:  once one has determined a sufficient set of false information that will yield one's desired outcome, it is simpler to fill in the rest of the information truthfully than to invent additional lies.

Some experimental evidence is less confirmatory of our hypothesis.  Several studies, beginning with \cite{Gneezy2005},  have found an aversion to lying if doing so would disbenefit someone else substantially more than the benefit one would accrue.

More recently,  honesty conditions have been explored in the voting literature.  An individual is {\it partially honest} if she is completely honest unless dishonesty can yield a strictly preferred outcome \cite{Dutta10,Dutta12,Kartik14,Laslier17,Nunes15,Matsushima} (see also \cite{Obraztsova13}).
However, partial honesty, though mathematically implied by minimal dishonesty, evaluates honesty in a binary sense, which is inconsistent with the aforementioned experimental literature.  Moreover,  we prove in Section \ref{sec:partialhonest} that partial honesty is insufficient to assure sincere stability.

We may nonetheless seek some other condition that is weaker than minimal dishonesty yet still ensures sincere stability.  In Section \ref{sec:Dishonest} we define such a condition,  {\it locally minimal dishonesty}.  Informally, it forbids a preference order $p$ from being submitted if a simple swap  would yield a more honest $p'$ without decreasing utility.  Interestingly, this condition neither implies nor is implied by partial honesty.

\section{Preliminaries.}\label{sec:definition}

An instance of the Stable Matching Problem entails finite nonempty sets of men $M$ and women $W$. 
Each $m\in M$ has strict preferences $\SP_m$ on  $W\cup\{m\}$. 
$\SP_m$ is a total ordering on $W\cup \{m\}$. 
Symmetrically, woman $w$ has strict preferences $\SP_w$ on the set $M\cup\{w\}$.    
The ordering $\Pi_m$ denotes $m$'s preferences on the set of women. 
The notation  $w_1\Pi_mw_2$ indicates that $m$ strictly prefers $w_1$ to $w_2$ and $m\Pi_m w$ indicates that $m$ strictly prefers $m$ to $w$ (he would rather be unmatched than matched to $w$). 
For readability, we express the full preference list $\SP_m$, $w_1 \SP_m w_2 \SP_m m \SP_m w_3 \SP_m w_4$, as ($\SP_m: \ w_1, w_2, \mathbf{m}, w_3, w_4$) and denote $\SP_{mk}$ as $m$'s $k$th favorite partner. 
The collection of all preferences $\Pi=\{\Pi_i\}_{i\in M\cup W}$ is the preference profile.   
Let ${\cal P}_i$ be the set of possible orderings for $i$ and ${\cal P}={\displaystyle\bigtimes_{i\in M\cup W}}{\cal P}_i$ be the set of all preference profiles.

A matching $\mu$ is a bijection from $M\cup W$ to itself such that $\mu(m)\in W\cup\{m\}$ and $\mu(w)\in M\cup\{w\}$ for each $m\in M$ and $w\in W$. 
Moreover, the relationship is symmetric; $\mu(i)=j$ if and only if $\mu(j)=i$. 
We denote $\mu(i)$ as the spouse of $i$.  
If $\mu(i)=i$ then $i$ is self-matched (equivalently unmatched). 
Individual $i$ strictly prefers $\mu_1$ to $\mu_2$ if and only if $\mu_1(i)\Pi_i\mu_2(i)$.
Denote the set of all matchings as ${\cal M}$.  

A matching $\mu$ is \emph{individually rational} with respect to $\Pi$  if $\mu(y)=y$ or $\mu(y){{\SP}_y}y$ for every individual $y$. 
If a matching $\mu$ is not individually rational then there is an individual $i$ who prefers being unmatched to the matching $\mu$ ($i\Pi_i\mu(i)$).  
As a result $\mu$ is considered unstable since $i$ would leave $\mu(i)$ to be single.

A pair $\{m,w\}\in M\times W$ is a blocking pair for $\mu$ with respect to ${\SP}$ if $w {{\SP}_m} \mu(m)$ and $m {{\SP}_w} \mu(w)$. 
If $\{m,w\}$ is a blocking pair for $\mu$, then $\mu$ is again unstable since $m$ and $w$ would leave their current spouses to be together. 
Stability is a necessary condition for any solution \cite{Knuth76,GusfieldIrving89,Manlove2013}.
The matching $\mu$ is \emph{stable} with respect to $\Pi$ if $\mu$ is individually rational and has no blocking pairs with respect to  $\Pi$. 
Equivalently, $\mu$ is stable if $y \SP_z \mu(z)$ implies $\mu(y) \SP_y z$. 

A deterministic stable matching mechanism is a function $r: {\cal P}\to {\cal M}$ where $r(\Pi)$ is a stable matching for all $\Pi\in {\cal P}$. 
In this paper we also consider randomized stable matching mechanisms.  In general, a stable matching mechanism is a function $r: {\cal P}\to [0,1]^{|{\cal M}|}$ where $r_t(\Pi)$ is the probability of selecting the matching $\mu_t\in {\cal M}$. If $\mu_t$ is unstable with respect to $\Pi$, then $r_t(\Pi)=0$. 
For a given r, let $p_{ij}(\Pi)$ denote the probability that $r(\Pi)$ will match $i$ to $j$.  
Formally, $p_{ij}(\Pi)=\sum_{\mu_t: \mu_t(i)=j}r_t(\Pi)$. 

The ability to select a stable matchings relies on the assumption that individuals are truthfully reporting their preferences. 
However, no stable matching mechanism guarantees strategy-proofness \cite{Roth1984}. 
Individuals might submit a strategic profile $\SPi$ that is not equal to the sincere profile $\Pi$. 
To understand the outcome of this strategic behavior, examine a normal-form game with complete information where individuals can submit whichever ordering they like despite the fact that players have common knowledge about the sincere profile $\Pi$.  Denote this game as the Strategic Stable Matching Game (SSM).

\hrulefill

\centerline{\bf Strategic Stable Matching Game (SSM)}
\begin{itemize} \item Each individual has complete information of the sincere preference profile $\Pi=\{\Pi_i\}_{i\in M\cup W}$.
	\item To play the game, individual $i$ submits putative preference data $\spi_i\in {\cal P}_i$.  The collection of all submitted data is denoted $\SPi$.
	\item It is common knowledge that a central decision mechanism will select the outcome $r(\SPi)$ which is stable with respect to $\SPi$.
	\item Individual $i$ evaluates $r(\SPi)$ according to $i$'s partner(s) in the matching(s) $r(\SPi)$ and $i$'s sincere preferences $\Pi_i$. \end{itemize}
\hrulefill

\begin{definition}A matching $\mu$ is a sincerely  (respectively putatively) stable matching is if it is stable with respect to $\Pi$ (respectively $\SPi$). \end{definition}

There are many different stable matching mechanisms. In this paper we focus on mechanisms that are monotonic and independent of non-spouses, or Fully-Randomized.  Most commonly used algorithms satisfy either the first two properties or the third.

\begin{definition}[Monotonicity]For given $\SP$, $i$ and $j$ let $\SP'$ be the profile obtained by moving $j$ up one position in the ordering $\SP_i$.  A stable matching mechanism $r(\cdot)$ is monotonic iff $p_{ij}(\Pi')\geq p_{ij}(\Pi)$ for all $\SP$, $i$ and $j$.  
\end{definition}

\begin{definition}[Independence of Non-Spouses (INS) ]
	For given $\SP$, $i$, $j$, and $k$ let $\SP'$ be the profile obtained by moving $j$ up one position in the ordering $\SP_i$ thereby moving $k$ down one position. 
	A stable matching mechanism $r(\cdot)$ is independent of non-spouses (INS) iff $p_{ik}(\SP)=0$ implies $p_{il}(\SP')\leq p_{il}(\SP)$ for $l\neq j$ for all $\SP$, $i$, and $j$. 
\end{definition}

Monotonicity guarantees that moving if individual $i$ moves  individual $j$ up in their preference list cannot decrease their probability of being matched to individual $j$. 
Independence of non-spouses (INS) guarantees that only $i$'s probability of matching $j$ can increase if $j$ replaces a non-spouse -- someone $i$ will not be matched with ($P_{ik}(\SP)=0$). 
The rationale behind INS is only $i$'s probability of matching $j$ should increase since $i$ only improved relative position of individual $j$. 
We remark that the condition $p(ik)=0$ is necessary for this property. 
If $p(ik)>0$ then there is a stable matching between $i$ and $k$. 
However, moving $j$ one position before $k$ may cause $\{i,j\}$ to become a blocking pair for this matching causing $i$'s probability of matching with $k$ to decrease.
As such, to ensure that $i$'s probability of being matched is one, $p_{il}$ may need to increase for some $l\neq j$.  
In addition to being natural, both properties occur in most standard matching algorithms.

\begin{definition}[Fully-Randomized] Mechanism $r(\cdot)$ is \emph{Fully-Randomized}  iff for all $\SP$ and each pair of  individuals $i$ and $j$ either (1) $p_{ij}(\Pi)<1$ or (2) $\mu(i)=j$ for all $\mu$ that are stable with respect to $\Pi$. \end{definition}

The most commonly used algorithm for stable matching is the Gale-Shapley (G-S) algorithm.
It is easily shown to be both monotonic and INS.  
In addition to proving results specifically for G-S, we use it as a tool in several of our results.
The G-S algorithm is easily described in words:
Each man and woman begins self-matched ($\mu(i)\gets i \ \forall i$).  
If there is a self-matched man $m$ that hasn't proposed to every woman he is willing to match, then he proposes to his most preferred woman $w$ that has not rejected him. 
If $w$ prefers her current match $\mu(w)$ to $m$ (i.e., if $\mu(w) \Pi_w m$), then she immediately rejects him.
Otherwise, $w$ prefers $m$ and she rejects her current match causing $\mu(w)$ to be self-matched ($\mu(\mu(w))\gets \mu(w)$), and she matches with $m$ ($\mu(w) \gets m$ and $\mu(m) \gets w$). 
The algorithm terminates once each man is either matched to someone in ${W}$ or has been rejected by every woman he is willing to match.

\subsection{Equilibria of SSM.}\label{sec:Equil}

Given that it is well known that there is no strategy-proof stable matching mechanism, it is unsurprising that there are equilibria with sincerely unstable outcomes. 
Given a deterministic stable matching mechanism, Alcalde and S\"{o}nmez  have shown that the set of equilibria correspond to the set of individually rational matchings \cite{Alcalde96,Sonmez97}. 
We generalize this result to randomized algorithms. 

\begin{lemma} \label{lem:deterministic} Let $r(\cdot)$ be a stable matching mechanism. If $\SPi$ is a pure strategy Nash equilibrium for $\Pi\in {\cal P}$ then the outcome $r(\SPi)$ is selected deterministically. \end{lemma}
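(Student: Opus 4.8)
The plan is to argue by contradiction: assume $\SPi$ is a pure Nash equilibrium for which the outcome $r(\SPi)$ is \emph{not} degenerate, and exhibit a unilateral deviation that strictly raises some individual's expected payoff. First I would reduce to a single agent. If $r(\SPi)$ places positive probability on two distinct matchings $\mu_a\neq\mu_b$, then these disagree on some individual, so there is an $i$ and two distinct potential partners $\mu_a(i)\neq\mu_b(i)$; equivalently, $i$'s spouse is a genuine lottery under $r(\SPi)$. Let $j^*$ be the sincerely most-preferred element (under $\Pi_i$) of the support $\{\,j : p_{ij}(\SPi)>0\,\}$ of this lottery. Because preferences are strict and the support has at least two elements, $j^*$ is strictly better than some other outcome that occurs with positive probability.

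The core of the argument is a deviation that lets $i$ \emph{guarantee} $j^*$. I would have $i$ submit $\spi_i'$ declaring $j^*$ its unique acceptable partner (with itself ranked next; if $j^*=i$, $i$ simply declares everyone unacceptable), holding $\spi_{-i}$ fixed. Two facts must be checked. (i) \emph{Achievability}: since $p_{ij^*}(\SPi)>0$, some matching $\mu$ in the support of $r(\SPi)$ has $\mu(i)=j^*$ and is stable with respect to $\SPi$. I would verify that $\mu$ is still stable with respect to $(\spi_i',\spi_{-i})$: it remains individually rational (now $i$ sits atop $\spi_i'$ and no other agent's list changed), no blocking pair avoiding $i$ can form because $\spi_{-i}$ and $\mu$ are untouched, and $i$ itself cannot block since it is matched to the top of $\spi_i'$. (ii) \emph{Exhaustiveness}: under $\spi_i'$ the only acceptable partner of $i$ is $j^*$, so in every stable matching $i$ is either matched to $j^*$ or self-matched.

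The step I expect to be the main obstacle is ruling out the residual possibility in (ii) that the mechanism leaves $i$ self-matched with positive probability after the deviation, since $r$ might select adversarially among the stable matchings of $(\spi_i',\spi_{-i})$. Here I would invoke the Rural Hospitals Theorem: the set of matched (equivalently, self-matched) agents is identical across all stable matchings of a fixed profile \cite{GusfieldIrving89,Manlove2013}. Because fact (i) exhibits one stable matching in which $i$ is matched, $i$ is matched in \emph{every} stable matching of $(\spi_i',\spi_{-i})$, and as $j^*$ is its only acceptable partner, $p_{ij^*}(\spi_i',\spi_{-i})=1$.

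Finally I would close the contradiction by comparing lotteries. The deviation yields the degenerate lottery at $j^*$, which first-order stochastically dominates $i$'s equilibrium spouse-lottery and does so strictly (at the threshold $j^*$ the deviation gives probability $1$ versus $p_{ij^*}(\SPi)<1$). Hence for any cardinal utility consistent with the strict order $\Pi_i$, the deviation strictly increases $i$'s expected utility, contradicting the assumption that $\SPi$ is a Nash equilibrium. Therefore $r(\SPi)$ must be degenerate, i.e., selected deterministically.
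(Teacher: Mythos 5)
Your proof is correct and takes essentially the same approach as the paper's: identify the sincerely most-preferred partner $j^*$ in the support of $i$'s spouse lottery, deviate to a list declaring only $j^*$ acceptable, and invoke the Rural Hospitals Theorem to conclude $i$ is matched to $j^*$ with probability one, a strict improvement contradicting equilibrium. The differences are only presentational (your explicit stochastic-dominance comparison and the $j^*=i$ case are handled implicitly in the paper's version).
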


\begin{proof}
For contradiction, suppose there is an individual $i$ where $p_{ij}(\SPi)<1$ for all $j$. 
Let $k$ be $i$'s most preferred partner where $p_{ik}(\SPi)>0$ and let $\mu$ be a matching in the support of $r(\SPi)$ where $\mu(i)=k$. 
If individual $i$ instead submits $\spi_i'$ where $k\spi_i' i$ and $i\spi_i l$ for all $l\notin \{i,k\}$ then $\mu$ remains stable.  
Moreover, any matching $\mu'$ where $\mu'(i)\notin\{k,i\}$ is not individually rational and therefore not stable. 
By the Rural Hospital theorem \cite{GaleSotomayor85,Roth86}, individual $i$ is matched to $k$ in every matching stable with respect to $[\SPi_{-i},\spi'_i]$ where $[\SPi_{-i},\spi'_i]$ is the profile obtained by replacing $\spi_i$ with $\spi'_i$ in the profile $\SPi$.  
Therefore $i$ can update $\spi_i$ to obtain a better outcome contradicting that $\SPi$ is a Nash equilibrium. \end{proof}  

Now that the outcome has been shown to be selected deterministically, individual rationality follows identically to \cite{Alcalde96,Sonmez97}. 

\begin{corollary}\label{thm:AllIndividuallyRational} Let $r(\cdot)$ be a stable matching mechanism and $\Pi\in {\cal P}$ be arbitrary. There exists a pure strategy Nash equilibrium whose outcome is $\mu$ if and only if $\mu$ is individually rational with respect to $\Pi$. 
\end{corollary}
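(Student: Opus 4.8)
The plan is to prove the two directions of the biconditional separately, using Lemma~\ref{lem:deterministic} throughout to guarantee that ``the outcome of a pure Nash equilibrium'' refers to a single matching $\mu$ (i.e.\ the outcome is deterministic), so that the statement is well posed. Both directions ultimately rest on the same structural fact: the individual rationality of stable matchings, combined with the freedom each player has to render partners unacceptable by ranking itself high.

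For sufficiency (if $\mu$ is individually rational then such an equilibrium exists) I would exhibit an explicit profile and verify it. The natural construction is for each individual $i$ to submit the list ranking $\mu(i)$ first and $i$ itself second, declaring every other partner unacceptable (if $\mu(i)=i$, then $i$ simply ranks itself first). First I would check that $\mu$ is the unique matching stable with respect to this $\SPi$: a self-matched individual finds no one acceptable, and for a matched pair $\{i,\mu(i)\}$ each finds only the other acceptable, so if they were not matched to one another they would form a blocking pair; hence $r(\SPi)=\mu$. Then I would verify that no individual $i$ can profitably deviate. The key observation is that, under $\SPi_{-i}$, the only individual who finds $i$ acceptable is $\mu(i)$; therefore, no matter what list $\spi_i'$ the player $i$ submits, $i$ can only be matched to $\mu(i)$ or remain self-matched. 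Since $\mu$ is individually rational with respect to $\Pi$, we have $\mu(i)=i$ or $\mu(i)\Pi_i i$, so becoming self-matched is never a strict improvement over obtaining $\mu(i)$. Thus $\SPi$ is a Nash equilibrium whose outcome is $\mu$.

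For necessity (an equilibrium outcome is individually rational) I would argue by contradiction. Suppose $\SPi$ is a pure Nash equilibrium whose deterministic outcome $\mu=r(\SPi)$ violates individual rationality, so some $i$ satisfies $i\Pi_i\mu(i)$. Consider the deviation in which $i$ submits a list $\spi_i'$ ranking itself first, i.e.\ declaring all partners unacceptable. Any matching that pairs $i$ with a partner is then not individually rational, hence not stable, so in every matching stable with respect to $[\SPi_{-i},\spi_i']$ individual $i$ is self-matched; by the Rural Hospital theorem this holds across the entire support, and since $r$ returns a matching stable with respect to the submitted profile, $i$ is self-matched with certainty. Because $i\Pi_i\mu(i)$, this deviation strictly improves $i$'s utility, contradicting that $\SPi$ is an equilibrium.

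The main obstacle is less a genuine difficulty than a point requiring care: the deviation analysis in the sufficiency direction must range over \emph{all} possible submissions $\spi_i'$, and the clean resolution comes from noticing that the other players' truncated lists pin down the set of individuals willing to accept $i$ to the single individual $\mu(i)$. Everything else reduces to the individual rationality of stable matchings and the deterministic-outcome guarantee of Lemma~\ref{lem:deterministic}; as the authors note, once determinism is in hand the remaining argument parallels the deterministic case of Alcalde and S\"{o}nmez.
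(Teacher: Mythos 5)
Your proof is correct and follows essentially the same route as the paper: the paper invokes Lemma~\ref{lem:deterministic} to reduce to a deterministic outcome and then defers to the Alcalde--S\"{o}nmez argument, which is exactly the truncation construction (each individual accepting only his or her $\mu$-partner) and the unacceptability deviation that you spell out in full. The only cosmetic remark is that your appeal to the Rural Hospital theorem in the necessity direction is superfluous, since individual rationality with respect to the deviated profile already forces $i$ to be self-matched in every matching in the support.
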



\subsection{The Minimal Dishonesty Refinement.}\label{sec:Dishonest}

The proof of Corollary \ref{thm:AllIndividuallyRational} can require  individuals to lie in absurd ways to obtain results that they least prefer.  
For instance, Corollary \ref{thm:AllIndividuallyRational} permits everyone reporting that they would be prefer to be unmatched to be a perfectly reasonable outcome regardless of the sincere profile. 
This is unsatisfactory in a predictive sense - any individually rational matching can eventuate - and in a normative sense - as discussed in Section \ref{sec:averse} individuals tend to lie only if they benefit.

To improve the theory, we introduce and propose a refinement to the set of equilibria to those where individuals are \emph{minimally dishonest}.  
Empirical and logical  justifications for this refinement were given in Section \ref{sec:averse}.  
Minimal dishonesty assumes that individuals prefer to be as little dishonest as possible without worsening their outcome.
To formalize and apply this concept, one needs a way to measure dishonesty. 
We will use the Bubble Sort or Kendall Tau distance -- the most common way to evaluate the distance between two ordered lists.

\begin{definition}[Bubble Sort or Kendall Tau Distance]\label{def:metric}
	Let $\Pi^1_m$ and $\Pi^2_m$  be two preference lists over a set.  Then the {\it Kendall Tau distance} between $\Pi^1_m$ and $\Pi^2_m$ is
	\begin{align}\label{eqn:metric}
	K(\Pi^1_m,\Pi^2_m) \equiv \left|\left\{ \{i,j\}: i\neq j; i \Pi^1_m j \text{ but } j \Pi^2_m i \right\}\right|.
	\end{align}
\end{definition}

The formal definition of minimal dishonesty can now be stated succinctly in the notation of Definition
\ref{def:metric}.

\begin{definition}[Minimally Dishonest]\label{def:GlobalMD}  Let $\Pi$  be the sincere preferences and let $\SPi$ be an equilibrium in the Strategic Stable Matching game where $r(\SPi)=\mu_1$.   Individual  $y$ is \emph{minimally dishonest} if $K(\PP'_y,\SP_y)< K(\PP_y,\SP_y)$ implies $\mu_1(y) \SP_y \mu_2(y)$ for some $\mu_2(y)\in supp\left(r\left([\SPi_{-y},\spi'_y]\right)\right)$.
\end{definition}

If there is a ${\PP}'_y$ such that $K({\PP}'_y,\SP_y)<K({\PP}_y,\SP_y)$ and $y$ does not sincerely prefer $r({\SPi})$ to  $r([\SPi_{-y},\spi'_y])$, then $y$ can obtain at least as good a result by submitting the more honest ${\PP}'_y$.  
Equivalently, submitting the more honest ${\PP}'_y$ results in positive probability of obtaining the less preferred outcome $\mu_2(y)\in supp\left(r\left([\SPi_{-y},\spi'_y]\right)\right)$.
If an individual is able to be more honest and obtain at least as good a result, we assume the individual would do so since individuals prefer being honest. Thus we only examine \emph{minimally dishonest equilibria} -- equilibria where every individual is minimally dishonest.

We also show all of our results for a weaker version of minimal dishonesty.  To be \emph{locally minimally dishonest} there may not be a pairwise change to the putative preference list that is more honest and results in as good an outcome.

\begin{definition}[Locally Minimally Dishonest]\label{def:LocalMD}  Let $\Pi$  be the sincere preferences and let $\SPi$ be an equilibrium in the Strategic Stable Matching game where $r(\SPi)=\mu_1$.   Individual  $y$ is \emph{minimally dishonest} if $K(\PP'_y,\SP_y)< K(\PP_y,\SP_y)$ implies $\mu_1(y) \SP_y \mu_2(y)\neq \mu_1(y)$ for some $\mu_2(y)\in supp\left(r\left([\SPi_{-y},\spi'_y]\right)\right)$ for all $\PP'_y$ obtained by swapping two individuals in the list $\PP_y$.
\end{definition}

Once again, if the condition fails to hold, then $m$ could obtain at least as good  a result when the more honest $\PP'_y$ is submitted.  Woman $w$ is locally minimally dishonest if symmetric conditions hold for her.

If an individual is not locally minimally dishonest then the individual is also not minimally dishonest.  However, an individual may be locally minimally dishonest without being minimally dishonest. 
Thus every minimally dishonest equilibrium is also a locally minimally dishonest equilibrium.
Therefore, if a result describes a property of all (locally) minimally dishonest equilibria it suffices to show this when individuals are locally minimally dishonest. 
On the other hand, if we show there is an equilibrium with a certain property, then it suffices to show this when individuals are minimally dishonest. 
Therefore all of our results will hold for both minimal dishonesty and locally minimally dishonesty.

\subsection{Other Refinements on the Nash Equilibria.}\label{sec:other}

In this paper we also discuss two other refinements to the set of Nash equilibria. 
The first is known as a \emph{truncation refinement} where individuals act strategically but only by truncating their sincere preferences, i.e., for all $w,w'\in W$, $w \PP_m w'$ implies $w \SP_m w'$, however the position of $\mathbf{m}$ may change.
This refinement is known to yield sincerely stable matchings \cite{Roth99, Ehlers08}. 
However, this refinement is not well supported by experimental evidence.
Moreover, we show the (locally) minimally dishonest refinement and the truncation refinement yield distinct equilibria. 


The second refinement we discuss is from the voting literature. 
An individual is \emph{partial honesty} if she is completely honest unless she can obtain a strictly better outcome by lying.
Locally minimal dishonesty and partial honesty can be viewed as opposite ends of a spectrum;
locally minimal dishonesty only considers preference lists that are slightly more honest while partial honesty considers only the most honest preference list, i.e., the sincere preferences. 
Minimal dishonesty considers all preference lists and therefore implies both locally minimal dishonesty and partial honesty. 
Interestingly however, partial honesty and locally minimal dishonesty together do not imply minimal dishonesty.

There is a recent variant of partial honesty introduced in \cite{Obraztsova17} where individual utilities are penalized by the size of the lie they tell.
In the setting of stable matching, it is relatively straightforward to show that this definition is equivalent to minimal dishonesty.
However, when individuals select strategies from a continuous space, this variant of partial honesty may result in irrational behavior unlike minimal dishonesty \cite{BaileyCOMSOC18}.

\section{Monotonic INS stable matching mechanisms.}
\label{sec:Mono}

Our first contribution is showing that monotonicity and INS are sufficient to gaurantee stability when individuals are minimally or locally minimally dishonest.

\begin{restatable}{theorem}{Main}\label{thm:Main} Suppose individuals are strategic and (locally) minimally dishonest. If $r(\cdot)$ is monotonic and INS, then every outcome of SSM is sincerely stable.  
\end{restatable}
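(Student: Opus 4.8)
The plan is to reduce the claim to the non-existence of a sincere blocking pair and then to rule such a pair out by exhibiting, for one of its members, a strictly more honest submission that exposes them to no strictly worse outcome---contradicting minimal dishonesty---or else to a strictly better outcome, contradicting the equilibrium condition.

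First I would invoke the two results already in hand. By Lemma~\ref{lem:deterministic}, at any pure Nash equilibrium $\SPi$ the outcome $\mu = r(\SPi)$ is selected deterministically, so each individual $i$ has a single well-defined spouse $\mu(i)$; and by Corollary~\ref{thm:AllIndividuallyRational}, $\mu$ is individually rational with respect to the sincere profile $\Pi$. It therefore remains only to show that $\mu$ has no sincere blocking pair. So I would assume for contradiction that $\{m,w\}$ blocks $\mu$ with respect to $\Pi$, i.e.\ $w\,\Pi_m\,\mu(m)$ and $m\,\Pi_w\,\mu(w)$. Since $\mu$ is stable with respect to the submitted profile $\SPi$, this pair cannot block there, so at least one party ranks the other below their own spouse in the submitted list; by the symmetry of the argument in $m$ and $w$, suppose without loss of generality that $\mu(m)\,\spi_m\,w$ even though $w\,\Pi_m\,\mu(m)$. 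Thus $m$ has demoted $w$ below his spouse, and $\spi_m$ is dishonest on this pair.

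The heart of the argument is then to let $m$ deviate to a strictly more honest list by promoting $w$, using one adjacent transposition of a truly inverted pair so that the Kendall--Tau distance to $\Pi_m$ strictly decreases. Monotonicity guarantees that promoting $w$ never decreases $p_{mw}$, and INS guarantees that when $w$ is moved up past an individual $k$ who is unachievable for $m$---that is, $p_{mk}(\SPi)=0$, which holds for everyone other than the deterministic spouse $\mu(m)$---no other match probability $p_{ml}$ with $l\neq w$ can increase. Combining the two, after such a transposition the support of $m$'s match is confined to $\{\mu(m),w\}$, both of which $m$ weakly prefers to $\mu(m)$; hence no outcome strictly worse than $\mu(m)$ appears in the support. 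This is exactly the situation forbidden by (locally) minimal dishonesty, yielding the contradiction---unless $p_{mw}$ has become positive, in which case $m$ strictly benefits and the equilibrium condition is violated instead. Because every minimally dishonest equilibrium is locally minimally dishonest, running the whole argument with single transpositions covers both versions of the hypothesis at once.

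I expect the main obstacle to be keeping this honest deviation outcome-safe under only the two axioms and, for the local notion, under a single two-element swap. The delicate configuration is when the only honesty-improving transposition available to $m$ would demote his own spouse---that is, when $w$ sits immediately below $\mu(m)$ with everything between them already internally honest---since then the individual being passed is the achievable spouse and INS cannot be invoked. I plan to resolve this by passing to the other party: using the symmetric hypothesis $m\,\Pi_w\,\mu(w)$ to argue that once $m$'s demotion of $w$ is removed the pair $\{m,w\}$ blocks the old matching under the new profile, forcing the two to be matched and so strictly improving $m$ (or, after exchanging the roles of $m$ and $w$, strictly improving $w$). Throughout I would take care to phrase every preference comparison over the support of the possibly randomized outcome and to verify that each chosen transposition genuinely reduces the Kendall--Tau distance, which is what legitimizes the appeal to Definition~\ref{def:LocalMD}.
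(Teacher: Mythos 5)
Your main line—swap an adjacent, genuinely inverted pair to promote $w$, then use monotonicity plus INS to confine the support of the deviation to $\{w,\mu(m)\}$ and contradict (locally) minimal dishonesty—is sound, and it is essentially the paper's Lemma~\ref{lem:PartialOrder}. The gap is that this swap need not exist, and the failure mode is broader than the one you flag ($w$ sitting immediately below $\mu(m)$). Consider sincere $\SP_m:\ a, w, \mu(m),\dots$ and submitted $\PP_m:\ \mu(m), a, w,\dots$. Here $w \SP_m \mu(m)$, so $\{m,w\}$ can sincerely block, but the adjacent pair $(a,w)$ is honestly ordered (swapping it \emph{increases} the Kendall--Tau distance), and the only distance-reducing transposition is $(\mu(m),a)$, which demotes the spouse: INS cannot be invoked (the passed individual has $p_{m\mu(m)}=1\neq 0$) and monotonicity only says $p_{ma}$ cannot decrease from $0$, so the two axioms place no restriction on the post-swap support and yield no contradiction. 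This configuration satisfies every constraint your main line can produce, so the theorem does not follow from it.

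Your fallback does not close this case. First, for $\{m,w\}$ to block $\mu$ under the deviated profile you need $m \spi_w \mu(w)$, a statement about $w$'s \emph{submitted} list; you only know the sincere $m \SP_w \mu(w)$, and since $\{m,w\}$ does not block with respect to $\SPi$, it is entirely possible that both parties demote each other, in which case neither your argument nor its role-exchanged version applies. Second, even when $\{m,w\}$ does block $\mu$ under the new profile, that only shows $\mu$ is not selected; nothing forces $r$ to match $m$ with $w$, and the new support may contain partners sincerely worse than $\mu(m)$, so neither the Nash condition nor minimal dishonesty is violated. (Also, hoisting $w$ several positions is not a single transposition and need not even reduce the Kendall--Tau distance, so it is not an admissible deviation under the local notion.) The paper escapes exactly this trap with Corollary~\ref{cor:one}: at any (locally) minimally dishonest equilibrium the submitted profile admits a \emph{unique} stable matching, proved not by swaps but by Knuth's result that some individual putatively prefers a second stable matching, the honesty-above-the-spouse property from Lemma~\ref{lem:PartialOrder}, a truncation deviation, and the Rural Hospital theorem. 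That uniqueness is what makes the spouse-demoting swap outcome-safe in Corollary~\ref{cor:CompleteOrder} (any matching giving $z$ a partner outside $\{y_1,\mu(z)\}$ keeps its blocking pair after the swap), and it—or some substitute for it—is the idea missing from your proposal.
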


We first  establish necessary conditions on the strategic $\PP$ as they relate to the sincere $\SP$.
Since every minimally dishonest equilibrium is also a locally minimally dishonest equilibrium, it suffices to show each result when individuals are locally minimally dishonest.

\begin{lemma} Let $r(\cdot)$ be as in the statement of Theorem \ref{thm:Main}. Given the sincere $\SP$ and (locally) minimally dishonest Nash equilibrium $\PP$, let $\mu=r(\PP)$. Let $y_1$ and $y_2\neq \mu(z)$ be adjacent in $\PP_z$.  If $y_1 \SP_z \mu(z)$ and $y_1 \SP_z y_2$, then $y_1 \PP y_2$. \label{lem:PartialOrder}
\end{lemma}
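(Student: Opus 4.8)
The plan is to argue by contradiction, driving everything through the INS property together with $z$'s local minimal dishonesty. Suppose the conclusion fails. Since $y_1$ and $y_2$ are adjacent in $\PP_z$, the negation of $y_1 \PP_z y_2$ is that $y_2 \PP_z y_1$, with $y_2$ sitting immediately above $y_1$. Let $\PP'_z$ be obtained from $\PP_z$ by transposing this adjacent pair, so that $y_1$ now sits immediately above $y_2$. Because $y_1 \SP_z y_2$, this single adjacent swap repairs exactly one inversion relative to $\SP_z$ and creates none, so $K(\PP'_z,\SP_z)=K(\PP_z,\SP_z)-1 < K(\PP_z,\SP_z)$; moreover $\PP'_z$ is obtained by swapping two individuals in $\PP_z$, so it is exactly of the form considered in Definition \ref{def:LocalMD}.

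Next I would pin down the supports before and after the swap. Since $\PP$ is a pure Nash equilibrium, Lemma \ref{lem:deterministic} tells us that $r(\PP)=\mu$ is selected deterministically, so $p_{z\mu(z)}(\PP)=1$ and $p_{zl}(\PP)=0$ for every $l\neq\mu(z)$. In particular $p_{zy_2}(\PP)=0$, using the hypothesis $y_2\neq\mu(z)$. This vanishing probability is precisely the hypothesis needed to invoke INS with $i=z$, $j=y_1$, $k=y_2$: passing from $\PP$ to $[\PP_{-z},\PP'_z]$ moves $y_1$ up one position over $y_2$, so INS yields $p_{zl}([\PP_{-z},\PP'_z])\le p_{zl}(\PP)$ for every $l\neq y_1$. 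Combining this with the deterministic values above forces $p_{zl}([\PP_{-z},\PP'_z])=0$ for every $l\notin\{y_1,\mu(z)\}$ (note $y_1\neq\mu(z)$ because $y_1 \SP_z \mu(z)$ is strict).

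Consequently, every spouse that $z$ can receive with positive probability under $[\PP_{-z},\PP'_z]$ lies in $\{y_1,\mu(z)\}$, and neither is strictly worse than $\mu(z)$ under $\SP_z$: indeed $y_1 \SP_z \mu(z)$ by hypothesis, while $\mu(z)$ is $z$'s current spouse. Hence there is no $\mu_2(z)\in supp\left(r\left([\PP_{-z},\PP'_z]\right)\right)$ with $\mu(z) \SP_z \mu_2(z)\neq\mu(z)$, directly contradicting the requirement of Definition \ref{def:LocalMD} that $z$ be locally minimally dishonest. Therefore $y_1 \PP_z y_2$, as claimed. The delicate step, and the one I would verify most carefully, is the justification that the INS hypothesis $p_{zy_2}(\PP)=0$ actually holds: it rests entirely on the determinism supplied by Lemma \ref{lem:deterministic} together with $y_2\neq\mu(z)$, and it is this determinism that collapses the INS inequality into the outright vanishing of all ``worse'' match probabilities. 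I expect monotonicity to play no role in this particular lemma; INS alone suffices to drive the argument.
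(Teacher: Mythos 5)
Your proof is correct and follows essentially the same route as the paper's: contradiction via the adjacent transposition, the Kendall Tau distance dropping by exactly one, determinism of $r(\PP)$ from Lemma \ref{lem:deterministic}, and INS collapsing the post-swap support of $z$'s match into $\{y_1,\mu(z)\}$, which contradicts local minimal dishonesty since neither option is strictly worse than $\mu(z)$. The only deviations are cosmetic and legitimate: you skip the paper's case $y_1=\mu(z)$ (vacuous anyway, since the strict hypothesis $y_1 \SP_z \mu(z)$ forces $y_1\neq\mu(z)$), and where the paper additionally invokes monotonicity to get $p_{zy_2}(\PP')\leq p_{zy_2}(\PP)=0$, you obtain the same bound from the INS conclusion itself (its quantifier ``for $l\neq j$'' covers $l=y_2$), so your observation that monotonicity is dispensable in this lemma is valid under the definition as stated.
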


\begin{proof}[Proof of Lemma \ref{lem:PartialOrder}] For contradiction suppose $y_2\PP_z y_1$.  
Let $\PP'$ be the profile obtained after $z$ switches $y_1$ and $y_2$ in the ordering $\PP_z$. 
By adjacency of $y_1$ and $y_2$, $K(\PP',\SP)=K(\PP,\SP)-1$.
Minimal dishonesty implies that $z$ should obtain a strictly worse outcome with this update.  
However, this is not the case:
By Lemma \ref{lem:deterministic}, $p_{z\mu(z)}(\PP)=1$. 
If $y_1=\mu(z)$, monotonicity implies $p_{zy_1}(\PP)\geq p_{zy_1}(\PP')=1$.
If $y_1\neq \mu(z)$, INS implies $p_{zy_3}(\PP')\leq p_{zy}(\PP)=0$ for all $y\notin \{y_1, y_2, \mu(z)\}$ and monotonicity implies $0=p_{zy_2}(\PP) \geq p_{zy_2}(\PP')$. 
In both cases, $z$ is matched to either $y_1$ or $\mu(z)$ in the outcome $r(\PP')$, implying that $z$ obtains at least as good a spouse.  
This contradicts locally minimal dishonesty.
\end{proof}  

Lemma \ref{lem:PartialOrder} indicates that $z$ is relatively honest about the the individuals $z$ prefers to $\mu(z)$. 
Formally, if $\mu(z)$ is $z$'s kth most preferred partner, then Lemma \ref{lem:PartialOrder} only guarantees that the first $k$ elements of $\PP_z$ are a permutation of the first $k$ elements of $\SP_z$.

%

\begin{corollary}
	Let $r(\cdot)$, $\SP$ and $\PP$ be as in the statement of Lemma \ref{lem:PartialOrder}.  Only one matching is stable with respect to $\PP$.  \label{cor:one}
\end{corollary}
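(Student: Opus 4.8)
The plan is to argue by contradiction: assuming $\PP$ admits two distinct stable matchings, I will exhibit an individual who can profitably deviate, contradicting that $\PP$ is a Nash equilibrium. First I would invoke the standard lattice structure of stable matchings: if more than one matching is stable with respect to $\PP$, then the man-optimal stable matching $\mu_M$ and the woman-optimal stable matching $\mu_W$ of $\PP$ are distinct. Writing $\mu=r(\PP)$, I would split into two cases. If $\mu\neq\mu_M$, then since every man weakly prefers $\mu_M$ to any stable matching, some man $z$ has $\mu_M(z)\,\PP_z\,\mu(z)$; take $\nu=\mu_M$. Otherwise $\mu=\mu_M\neq\mu_W$, and by woman-optimality some woman $z$ has $\mu_W(z)\,\PP_z\,\mu(z)$; take $\nu=\mu_W$. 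In either case I obtain an individual $z$ and a matching $\nu$, stable with respect to $\PP$, with $\nu(z)\,\PP_z\,\mu(z)$.

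The deviation I would analyze is a truncation: let $\PP'_z$ agree with $\PP_z$ on everyone $z$ ranks at least as high as $\nu(z)$ and declare everyone below $\nu(z)$ unacceptable (placing $z$ itself just under $\nu(z)$). The first key step is to verify that $\nu$ remains stable with respect to $[\PP_{-z},\PP'_z]$: individual rationality is immediate since $\nu(z)$ stays acceptable, and any blocking pair in the truncated instance either avoids $z$, in which case it already blocks $\nu$ under $\PP$, or involves $z$ together with a partner ranked strictly above $\nu(z)$, which again already blocks $\nu$ under $\PP$. With $\nu$ stable in the truncated instance, $z$ is matched there, so by the Rural Hospital theorem \cite{GaleSotomayor85,Roth86} $z$ is matched in every stable matching of $[\PP_{-z},\PP'_z]$. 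Since $r$ always returns a stable matching, $r([\PP_{-z},\PP'_z])$ matches $z$ with probability one to an acceptable partner, i.e. to someone $z$ ranks putatively at least as high as $\nu(z)$, and hence strictly above $\mu(z)$.

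The remaining and most delicate step is that a player's payoff is measured by the sincere preferences $\SP_z$, not the submitted $\PP_z$, so a merely putative improvement is not yet a profitable deviation. This is exactly where Lemma \ref{lem:PartialOrder} is indispensable: as noted immediately after its proof, the top $k$ entries of $\PP_z$ form a permutation of the top $k$ entries of $\SP_z$, where $\mu(z)$ is $z$'s $k$th sincere choice, so everyone $z$ ranks putatively above $\mu(z)$ is ranked sincerely above $\mu(z)$ as well. The partner $z$ secures after deviating is putatively at least as good as $\nu(z)$, hence putatively above $\mu(z)$, and therefore sincerely strictly preferred to $\mu(z)$. Thus the deviation yields a sincerely strictly better outcome with certainty, whereas by Lemma \ref{lem:deterministic} the equilibrium $\PP$ gives $z$ the partner $\mu(z)$ with probability one; this contradicts $\PP$ being a Nash equilibrium, so $\mu$ is the unique matching stable with respect to $\PP$. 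I expect the main obstacle to be precisely this conversion of the putative improvement delivered by the Rural Hospital theorem into a sincere one, which relies entirely on the honesty-at-the-top conclusion of Lemma \ref{lem:PartialOrder}; checking that the truncation preserves stability of $\nu$ is the other point requiring care.
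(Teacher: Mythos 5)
Your proof is correct and takes essentially the same route as the paper's: both arguments locate an individual $z$ who putatively prefers some other putatively stable matching $\nu$, use Lemma \ref{lem:PartialOrder} (through the top-$k$ permutation remark) to turn that putative preference into a sincere one, and then contradict the Nash property via a truncation deviation that, by the Rural Hospital theorem, secures $z$ a sincerely strictly better partner with probability one. The only cosmetic differences are that the paper invokes \cite{Knuth76} for the existence of such a $z$ instead of your lattice-based case split, and its deviation declares everyone except the preferred partner unacceptable, whereas you truncate just below $\nu(z)$.
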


\begin{proof} Suppose there is a $\mu_2\neq \mu_1=r(\PP)$ that is stable with respect to $\PP$.  Since $\mu_2$ is stable, there exists a $z$ such that $\mu_2(z) \PP_z \mu_1(z)\neq \mu_2(z)$ \cite{Knuth76}, i.e., with respect to the putative $\PP$, there is at least one individual $z$ that reports $\mu_2$ is preferred to $\mu_1$.  
By Lemma \ref{lem:PartialOrder}, $\mu_2(z) \SP_z \mu_1(z)\neq \mu_2(z)$.  
However, this implies that $z$ can obtain a strictly better outcome by submitting $\mu_2(z) \PP'_z z \PP'_z y$ for all $y\neq \mu_2(z)$ contradicting that $\PP$ is a Nash equilibrium. 
\end{proof}

Using this corollary, we refine Lemma \ref{lem:PartialOrder} to show $\SP_z$ and $\PP_z$ will be identical up to $z$'s assigned match $\mu(x)$, i.e., $z$ honestly reveals their preferences until their match $\mu(z)$. 

\begin{corollary}
	Let $r(\cdot)$, $\SP$ and $\PP$ be as in the statement of Lemma \ref{lem:PartialOrder} and let $\mu=r(\PP)$. If $y_1 \SP_z \mu(z)$ and $y_1 \SP_z y_2$, then $y_1 \PP y_2$. \label{cor:CompleteOrder}
\end{corollary}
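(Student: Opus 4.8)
The plan is to derive Corollary \ref{cor:CompleteOrder} from Lemma \ref{lem:PartialOrder} by stripping away its two hypotheses -- adjacency of $y_1,y_2$ and the restriction $y_2\neq\mu(z)$ -- one at a time. Call a pair $(y_1,y_2)$ a \emph{violation} if $y_1\SP_z\mu(z)$, $y_1\SP_z y_2$, yet $y_2\PP_z y_1$; the corollary asserts that no violation exists. First I would discharge the adjacency hypothesis by a minimal--counterexample (bubble--sort) argument: among all violations, select one minimizing the number of positions separating $y_2$ and $y_1$ in $\PP_z$, and show this minimal gap must equal $1$.

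For the reduction, suppose the minimal gap exceeds $1$ and let $w$ be the element immediately below $y_2$ in $\PP_z$, so that $y_2\PP_z w\PP_z y_1$ with $w\neq y_1$. Since $\SP_z$ is a strict total order, either $w\SP_z y_1$ or $y_1\SP_z w$. In the former case $(w,y_2)$ is a violation (using $w\SP_z y_1\SP_z\mu(z)$ and $w\SP_z y_1\SP_z y_2$, together with $y_2\PP_z w$) of gap $1$; in the latter case $(y_1,w)$ is a violation of strictly smaller gap. Either way minimality is contradicted, so the minimal violation is adjacent. With an adjacent violation in hand, $y_2$ sits immediately above $y_1$ in $\PP_z$, and I would split on whether $y_2=\mu(z)$. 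If $y_2\neq\mu(z)$, Lemma \ref{lem:PartialOrder} applies verbatim and yields $y_1\PP_z y_2$, contradicting $y_2\PP_z y_1$. This settles everything except the single residual configuration in which $z$ has placed its own assigned partner immediately above someone it sincerely prefers, i.e.\ $y_2=\mu(z)$ with $\mu(z)\PP_z y_1$ and $y_1\SP_z\mu(z)$.

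This residual case is precisely the one Lemma \ref{lem:PartialOrder} excludes -- because INS is vacuous once $p_{z\mu(z)}(\PP)=1$ -- and is where I expect the real work to lie. To handle it I would invoke Corollary \ref{cor:one}: $\mu$ is the \emph{unique} matching stable with respect to $\PP$. Let $\PP'$ be obtained from $\PP$ by the single adjacent swap that moves $y_1$ just above $\mu(z)$ in $z$'s list; since $y_1\SP_z\mu(z)$, this strictly lowers the Kendall--Tau distance to $\SP_z$. The key claim is that every matching stable with respect to $\PP'$ assigns $z$ either $y_1$ or $\mu(z)$. Indeed, $\PP$ and $\PP'$ differ only in the relative order of $\mu(z)$ and $y_1$ in $z$'s list, so for any $\PP'$-stable $\nu$ with $\nu(z)\neq y_1$, reverting to $\PP$ creates no new blocking pair: the only candidate, $\{z,\mu(z)\}$, would require $\nu(z)=y_1$. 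Hence such a $\nu$ is also $\PP$-stable and, by Corollary \ref{cor:one}, equals $\mu$, giving $\nu(z)=\mu(z)$.

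Consequently no matching in the support of $r(\PP')$ leaves $z$ strictly worse off than $\mu(z)$, since both possible partners $y_1$ and $\mu(z)$ satisfy $y_1\SP_z\mu(z)$ and $\mu(z)=\mu_1(z)$. This contradicts (locally) minimal dishonesty, which would force the more honest $\PP'$ to place a strictly inferior partner in the support. The main obstacle is establishing exactly this claim, and its resolution turns on the observation that making $z$ prefer $y_1$ more can only destroy -- never create -- stable matchings that give $z$ a partner other than $y_1$; uniqueness under $\PP$ is therefore inherited as the control needed under $\PP'$ at precisely the point where monotonicity and INS fail to apply. Because the whole argument uses only a single adjacency swap and the existence of a strictly worse outcome, it goes through under local minimal dishonesty, and hence under minimal dishonesty as well.
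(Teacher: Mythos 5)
Your proof is correct and follows essentially the same route as the paper's: both reduce to an adjacent violation via a bubble-sort argument, dispatch the case $y_2\neq\mu(z)$ with Lemma \ref{lem:PartialOrder}, and handle $y_2=\mu(z)$ by swapping $y_1$ and $\mu(z)$, using Corollary \ref{cor:one} to show every matching stable under the swapped profile assigns $z$ either $y_1$ or $\mu(z)$, contradicting (locally) minimal dishonesty. Your contrapositive phrasing of that last step (a $\PP'$-stable $\nu$ with $\nu(z)\neq y_1$ remains $\PP$-stable) is logically the same argument the paper runs in the forward direction.
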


\begin{proof} For contradiction, suppose $y_1 \SP_z \mu(z)$ and $y_1 \SP_z y_2$ but $y_2 \PP_z y_1$. 
It suffices to consider only $y_1$ and $y_2$ adjacent in $\PP_z$.  
If not, there is a $y_3\notin \{y_1,y_2\}$ such that $y_2 \PP_z y_3 \PP_z y_1$. 
Either $y_3\SP_z y_1$ and we can proceed with $y_2$ and $y_3$ or $y_1 \SP_z y_3$ and we can proceed with $y_1$ and $y_3$. Thus, thus we consider $y_1$ adjacent to $y_2$ in $\PP_z$.

Lemma \ref{lem:PartialOrder} implies the statement of the corollary if $y_2\neq \mu(z)$ and therefore we only consider $y_2=\mu(z)$.    
Let $\PP'$ be the profile obtained when $z$ swaps $y_1$ and $\mu(z)$ in $\PP_z$.  
If $r(\PP')$ assigns $z$ to either $y_1$ or $\mu(z)$ then we have a contradiction to locally minimal dishonesty. 
Therefore it suffices to show $r(\PP')$ assigns $z$ to either $y_1$ or $\mu(z)$.  

Let $\mu'$ be such that $\mu'(z)\notin \{y_1,\mu(z)\}$.  
By Corollary \ref{cor:one}, $\mu'$ is unstable with respect to $\PP$. 
Therefore there exists $a$ and $b\neq \mu'(a)$ such that $b \PP_a \mu'(a)$ and $a \PP_b \mu'(b)$.  
If $z\neq a$ then $b \PP_a \mu'(a)$ implies $b \PP'_a \mu'(a)$ since $\PP_a=\PP'_a$.  
If $z=a$ then $b \PP_z \mu'(z)$ implies $b \PP'_z \mu'(z)$ since $\mu'(z)\notin \{y_1,\mu(z)\}$ and since $\PP'_z$ is obtained from $\PP_z$ by swapping the adjacent pair $\{y_1,\mu(z)\}$.  
In both cases, $\mu'$ is also unstable with respect to $\PP'$.  
Thus if $\mu'$ is stable with respect to $\PP'$ then $\mu'(z)\in \{y_1,\mu(z)\}$.
As result, swapping $y_1$ and $\mu(z)$ causes $z$ to be matched with either $y_1$ or $\mu(z)$, completing the proof of the corollary. 
\end{proof}

Thus at every minimally and locally minimally dishonest equilibrium, every individual is honest up to the spouse assigned to them by $r(\SPi)$.  
We now proceed to prove Theorem \ref{thm:Main}.

\begin{proof}[Proof of Theorem \ref{thm:Main}.] Let $\PP$ be a minimally dishonest Nash equilibrium for the sincere $\SP$. 
Let $r(\PP)=\mu$ be the matching assigned. We now show that $\mu$ is stable with respect to $\SP$. 
By Corollary \ref{thm:AllIndividuallyRational}, $\mu$ is individually rational with respect to $\SP$. 

Next, for contradiction, suppose that $\mu$ is not stable with respect to $\Pi$ and there is a pair $\{m,w\}$ where $w \SP_m \mu(m)$ and $m \SP_w \mu(w)$.  
Taking $w=y_1$, $m=z$, and $y_2=\mu(m)$, Corollary \ref{cor:CompleteOrder} implies $w \PP_m \mu(m)$. 
Similarly $m\PP_w\mu(w)$ and therefore $\{m,w\}$ blocks $\mu$ with respect to $\PP$ contradicting that $\mu=r(\PP)$. 
\end{proof}

Theorem \ref{thm:Main} shows that every monotonic INS stable matching algorithm achieves stability when individuals are strategic and (locally) minimally dishonest. 
The most desirable monotonic INS stable matching is the egalitarian stable matching -- a procedure that places equal weight on the preferences of men and women. 
Gusfield and Irving \cite{GusfieldIrving89} specifically call for an algorithm that guarantees an egalitarian stable matching. 
We provide a negative result, showing there is no monotonic INS stable matching algorithm that always outputs an egalitarian stable matching when individuals are strategic, even if they are (locally) minimally dishonest.

\begin{restatable}{theorem}{negativeresult}
	There does not exist a monotonic INS $r(\cdot)$ where (i) there is always a (locally) minimally dishonest equilibrium, and (ii) every (locally) minimally dishonest equilibrium yields a sincere egalitarian stable matching.\label{thm:Neg}
\end{restatable}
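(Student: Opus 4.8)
The plan is to prove the theorem by exhibiting a single instance that defeats every monotonic INS mechanism. I would first construct a sincere profile $\SP$ whose stable matchings include a man-optimal matching $\mu_M$, a woman-optimal matching $\mu_W$, and a \emph{unique} egalitarian-optimal matching $\mu_E$ that is strictly interior, meaning $\mu_E\notin\{\mu_M,\mu_W\}$ and the egalitarian cost of $\mu_E$ is strictly smaller than that of every other stable matching. The three-stable-matching ``Latin square'' families are the natural starting point, but they are too symmetric (all stable matchings tie on egalitarian cost), so I would perturb the lists to break the tie and make $\mu_E$ the unique minimizer while keeping $\mu_M,\mu_W,\mu_E$ all stable and pairwise distinct. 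The goal is then: for every monotonic INS $r$, this instance admits a (locally) minimally dishonest equilibrium whose outcome is not $\mu_E$. This contradicts condition (ii), and since the exhibited profile is itself such an equilibrium it simultaneously witnesses (i), so no $r$ can satisfy both.

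Next I would pin down the shape of any (locally) minimally dishonest equilibrium $\PP$ with outcome $\mu=r(\PP)$ using the results already established: $\mu$ is selected deterministically (Lemma \ref{lem:deterministic}), is the \emph{unique} stable matching of $\PP$ (Corollary \ref{cor:one}), is sincerely stable (Theorem \ref{thm:Main}), and every individual reports honestly down to their assigned spouse (Corollary \ref{cor:CompleteOrder}). The uniqueness point is the key lever for making the argument independent of $r$: whenever I can engineer a reported profile whose \emph{only} stable matching is a target matching, then $r(\PP)$ equals that target for every stable mechanism, so the only place $r$ enters is the deviation analysis.

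The core step is to support one extreme, say $\mu_M$, as a (locally) minimally dishonest equilibrium. I would take the profile $\PP$ in which men report honestly and each woman truncates her list just below her $\mu_M$-partner, and choose the instance so that these truncations make $\mu_M$ the unique stable matching of $\PP$ (hence $r(\PP)=\mu_M$ for every $r$). For the Nash condition I would argue that no woman can improve: each woman still lists every man she prefers to $\mu_M(w)$, but none of them will pair with her, because at $\mu_M$ every such man holds a partner he weakly prefers, and monotonicity (as used in Lemma \ref{lem:PartialOrder}) then blocks any improving unilateral report; men are already at their optimum and have no incentive to move. For minimality I must rule out that a smaller lie yields an equally good outcome --- the very subtlety that makes small symmetric examples collapse, since there honest reporting already forces the target. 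I would therefore calibrate the instance so that reducing any woman's truncation re-introduces a competing stable matching in which that woman is strictly worse off, and so that this competing matching is again forced (by uniqueness) to be $r$'s output, making the smaller lie strictly worse and the truncation genuinely minimal.

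I expect the main obstacle to be exactly this last point: establishing minimality and the Nash property \emph{uniformly over all monotonic INS $r$}. A reduced lie can leave the reported profile with several stable matchings, and then $r$'s choice among them is not forced by stability alone; I would need monotonicity and INS (deployed as in Lemma \ref{lem:PartialOrder} and Corollary \ref{cor:CompleteOrder}) to constrain which of these $r$ may select, and the instance must guarantee that every admissible choice is weakly worse for the deviating woman. Balancing these requirements --- enough interior structure that locking in an extreme strictly requires beneficial dishonesty, yet enough rigidity that the extreme is the unique stable matching of the supporting profile --- is where essentially all of the difficulty lies. Once the instance is calibrated, the contradiction with (ii), together with the simultaneous verification of (i), is immediate.
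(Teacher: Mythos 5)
There is a genuine gap, and it is exactly at the point you flagged as ``the main obstacle.'' Your whole plan rests on a single reported profile --- men honest, each woman truncated just below her $\mu_M$-partner --- being a (locally) minimally dishonest equilibrium \emph{simultaneously for every} monotonic INS mechanism. That is impossible, and the woman-proposing Gale-Shapley mechanism is a concrete counterexample: it is monotonic and INS, yet under it your profile is never minimally dishonest. To see why, let $\PP$ be your truncation profile, with $\mu_M$ its unique stable matching, and let a woman $w$ deviate to any strictly more honest list obtained by extending her report below $\mu_M(w)$ (e.g.\ restoring $\SP_w$). Then $\mu_M$ remains stable in the deviated profile (any new blocking pair $\{m',w\}$ would need $m'$ above $\mu_M(w)$, where her list is unchanged), and no matching stable for the deviated profile can give $w$ anyone she ranks above $\mu_M(w)$ --- such a matching would already have been stable for $\PP$, contradicting uniqueness. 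So $w$'s best putatively stable partner after the deviation is still $\mu_M(w)$, and the woman-optimal mechanism therefore hands her \emph{exactly the same spouse} after a strictly more honest report, violating (local) minimal dishonesty. Symmetrically, supporting $\mu_W$ by men's truncations fails under man-proposing Gale-Shapley. No ``calibration'' of the instance can repair this: which profiles are minimally dishonest equilibria genuinely depends on $r$ (compare Theorem \ref{thm:GSFavorsWomen}: under man-proposing G-S every minimally dishonest equilibrium yields the \emph{woman}-optimal matching, so no profile whose outcome is $\mu_M$ can ever serve there), and monotonicity and INS only constrain probabilities under single-position moves --- they say nothing about which of several stable matchings $r$ selects on a deviated profile, which is precisely what your minimality argument would need to control.

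The paper's proof avoids this trap by never attempting to construct an equilibrium for an unknown $r$ from scratch. It assumes (i) and (ii) hold and works with two instances: on a first profile $\SP^1$ whose unique egalitarian matching is $\mu_2$, assumptions (i) and (ii) guarantee a minimally dishonest equilibrium $\PP^1$ with $r(\PP^1)=\mu_2$; Corollaries \ref{cor:one} and \ref{cor:CompleteOrder} pin $\PP^1$ down, and the minimal dishonesty of $m_1$ and $w_1$ at $\PP^1$ then \emph{forces} facts about how $r$ behaves on specific more-honest deviations (e.g.\ $\mu_3 \in supp\left(r\left([\PP^1_{-m_1},\SP^1_{m_1}]\right)\right)$). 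These extracted facts are transferred, via monotonicity and INS, to a perturbed second instance $\SP^2$ in which the same matching $\mu_2$ is still stable but no longer egalitarian, and for which the perturbed profile $\PP^2$ is now verifiably a minimally dishonest equilibrium with $r(\PP^2)=\mu_2$ --- contradicting (ii). This ``learn $r$'s deviation behavior from the hypothesis, then transfer it to a perturbed instance'' structure is the essential idea your single-instance proposal is missing, and without it the uniformity over all monotonic INS mechanisms cannot be obtained.
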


\begin{proof} We begin by describing a set of preferences $\SP^1$ with a single egalitarian matching $\mu_2$ and consider a monotonic INS $r(\cdot)$ such that there exists a (locally) minimally dishonest equilibrium $\PP^1$ where $r(\PP^1)=\mu_2$.  
Using (locally) minimal dishonesty, monotonicity, and INS we are able to determine $\PP^1$.  
We then create $\SP^2$ by modifying $\SP^1$ slightly so that $\mu_2$ is stable but not egalitarian with respect to $\SP^2$. 
We also create $\PP^2$ by modifying $\PP^1$ slightly so that $\PP^2$ is a (locally) minimally dishonest equilibrium for $\SP^2$ where $r(\PP^2)=\mu_2$. 
Thus there is a (locally) minimally dishonest equilibrium that does not yield a sincere egalitarian stable matching thereby completing the proof of the theorem.

\begin{table}[ht]	
	\centering\caption	
	{Sincere Preferences $\SP^1$ for Theorem \ref{thm:Neg}. \label{tab:1Neg}}	
	{\begin{tabular}{|c|c|}	
			\hline
			$\SP^1_{m_1}: \ w_1,w_2,w_4,w_3,\mathbf{m_1}$&
			$\SP^1_{w_1}: \ m_2,m_3,m_4,m_1,\mathbf{w_1}$\\
			$\SP^1_{m_2}: \ w_2,w_3,w_4,w_1,\mathbf{m_2}$&
			$\SP^1_{w_2}: \ m_3,m_1,m_4,m_2,\mathbf{w_2}$\\
			$\SP^1_{m_3}: \ w_3,w_1,w_4,w_2,\mathbf{m_3}$&
			$\SP^1_{w_3}: \ m_1,m_2,m_4,m_3,\mathbf{w_3}$\\
			$\SP^1_{m_4}: \ w_4,w_1,w_2,w_3,\mathbf{m_4}$&
			$\SP^1_{w_4}: \ m_4,m_1,m_2,m_3,\mathbf{w_4}$\\
			\hline
	\end{tabular}}	
	{}	
\end{table}	

\begin{table}[ht]	
	\centering\caption	
	{Stable Matchings with respect to $\SP^1$ and $\SP^2$. \label{tab:NegStable}}	
	{\begin{tabular}{|r | l |}	
			\hline
			$\mu_1$&$  \mu_1(m_1)=w_1, \ \mu_1(m_2)=w_2, \ \mu_1(m_3)=w_3, \ \mu_1(m_4)=w_4$\\
			$\mu_2$&$  \mu_2(m_1)=w_2, \ \mu_2(m_2)=w_3, \ \mu_2(m_3)=w_1, \ \mu_2(m_4)=w_4$\\
			$\mu_3$&$  \mu_3(m_1)=w_3, \ \mu_3(m_2)=w_1, \ \mu_3(m_3)=w_2, \ \mu_3(m_4)=w_4$\\
			\hline
	\end{tabular}}	
	{}	
\end{table}	

Consider the sincere preferences in Table \ref{tab:1Neg}.
The set of stable matchings is given in Table \ref{tab:NegStable}.
With respect to $\SP^1$, $\mu_2$ is the only egalitarian stable matching.  Let $r(\cdot)$ be a monotonic INS stable matching mechanism where there is a (locally) minimally dishonest equilibrium $\PP^1$ such that $r(\PP^1)=\mu_2$. 

\indent By Corollary \ref{cor:CompleteOrder}, $\PP_y^1$ and $\SP^1_y$ agree on the first two elements for $y\in \{m_1,m_2,m_3,w_1,w_2,w_3\}$ and $\PP_y$ and $\SP^1_y$ agree on the first element for $y\in \{m_4,w_4\}$. 
Moreover since $m_4$ and $w_4$ are each other's first choice, $m_4$ will be assigned $w_4$ regardless of the remainder of $\PP^1_{m_4}$.  
Therefore by (locally) minimal dishonesty, $\PP_{m_4}^1=\SP_{m_4}^1$.  
Symmetrically $\PP_{w_4}^1=\SP_{w_4}^1$. Therefore $\PP^1$ satisfies the relationship shown in Table \ref{tab:P1Start}. 

\begin{table}[ht]	
	\centering\caption	
	{Beginning of Putative Preferences $\PP^1$. \label{tab:P1Start}}
	{\begin{tabular}{|r l |r l |}	
			\hline
			$\PP^1_{m_1}:$&$ w_1,w_2,\mathbf{m_1}$&
			$\PP^1_{w_1}:$&$ m_2,m_3,\mathbf{w_1}$\\
			$\PP^1_{m_2}:$&$ w_2,w_3,\mathbf{m_2}$&
			$\PP^1_{w_2}:$&$ m_3,m_1,\mathbf{w_2}$\\
			$\PP^1_{m_3}:$&$ w_3,w_1,\mathbf{m_3}$&
			$\PP^1_{w_3}:$&$ m_1,m_2,\mathbf{w_3}$\\
			$\PP^1_{m_4}:$&$ w_4,w_1,w_2,w_3,\mathbf{m_4}$&
			$\PP^1_{w_4}:$&$ m_4,m_1,m_2,m_3,\mathbf{w_4}$\\
			\hline
	\end{tabular}}
	{}
\end{table}	

Next, at least one woman excludes her least preferred man.  
Otherwise the matching $\mu_1$ is stable with respect to $\PP$, a contradiction to Corollary \ref{cor:one}.  
Without loss of generality we assume $w_1$ indicates she is unwilling to match $m_1$.  
We also assume $m_1$ indicates he is unwilling to match $w_3$ and the preferences satisfy the relationship in Table \ref{tab:P1End}.

\begin{table}[ht]	
	\centering\caption	
	{Updated Putative Preferences $\PP^1$. \label{tab:P1End}}
	{\begin{tabular}{|r l |r l |}	
			\hline
			$\PP^1_{m_1}:$&$w_1,w_2,\mathbf{m_1},w_3$&
			$\PP^1_{w_1}:$&$m_2,m_3,\mathbf{w_1},m_1$\\
			$\PP^1_{m_2}:$&$w_2,w_3,\mathbf{m_2}$&
			$\PP^1_{w_2}:$&$m_3,m_1,\mathbf{w_2}$\\
			$\PP^1_{m_3}:$&$w_3,w_1,\mathbf{m_3}$&
			$\PP^1_{w_3}:$&$m_1,m_2,\mathbf{w_3}$\\
			$\PP^1_{m_4}:$&$w_4,w_1,w_2,w_3,\mathbf{m_4}$&
			$\PP^1_{w_4}:$&$m_4,m_1,m_2,m_3,\mathbf{w_4}$\\
			\hline
	\end{tabular}}
	{}	
\end{table}	

Finally we claim that $m_2, m_3, w_2,$ and $w_3$ are honest. Let $\PP$ be any profile that matches $\PP^1$ in the first two entries for each individual and agrees with Table \ref{tab:P1End}. Next, we run the Gale-Shapley algorithm (Man-Optimal algorithm) on $\PP$:
\begin{align}
&m_1 \text{ proposes to } w_1: && w_1 \text{ declines }	 	&&(w_1 \PP_{w_1} m_1)\\
&m_1 \text{ proposes to } w_2: && w_2 \text{ accepts } 		&&(m_1 \PP_{w_2} w_2)\\
&m_2 \text{ proposes to } w_2: && w_2 \text{ declines} 		&&(m_1 \PP_{w_2} m_2)\\
&m_2 \text{ proposes to } w_3: && w_3 \text{ accepts} 		&&(m_2 \PP_{w_3} w_3)\\
&m_3 \text{ proposes to } w_3: && w_3 \text{ declines}	 	&&(m_2 \PP_{w_3} m_3)\\
&m_3 \text{ proposes to } w_1: && w_1 \text{ accepts}	 	&&(m_3 \PP_{w_1} w_1)\\
&m_4 \text{ proposes to } w_4: && w_4 \text{ accepts}		&&(m_4 \PP_{w_4} w_4)
\end{align} 
Therefore we can conclude that $\mu_2$ is the man-optimal matching for $\PP$ without knowing the remaining details of $\PP$.  
Similarly, $\mu_2$ is woman-optimal and and we conclude $\mu_2$ is the only matching stable with respect to $\PP$ regardless of how the remainder of how $m_2, m_3, w_2,$ and $w_3$ fill out their remaining preferences.  
Thus, by (locally) minimal dishonesty, each of these individual are honest completing the proof of the claim.

The preferences are now given by Table  \ref{tab:P1}.  
Man $m_1$ and woman $w_1$'s are one of the lists given in Table \ref{tab:P1alt}. 
Regardless of which list $m_1$ submits, $m_1$ can be more honest by swapping $\mathbf{m_1}$ and $w_4$ and submitting $\SP_{m_1}$.   
Only $\mu_2$ and $\mu_3$ are stable with respect to $[\PP_{-m_1}^1, \SP_{m_1}^1]$ and therefore by (locally) minimal dishonesty $\mu_3 \in supp\left(r\left( [\PP_{-m_1}^1, \SP_{m_1}^1] \right)\right)$. 
Symmetrically, $\mu_1 \in supp\left(r\left( [\PP_{-w_1}^1, \SP_{w_1}^1] \right)\right)$.  

\begin{table}[ht]	
	\centering\caption	
	{Putative Preferences $\PP^1$. \label{tab:P1}}	
	{\begin{tabular}{|r l |r l |}	
			\hline
			$\PP^1_{m_1}:$&$w_1,w_2,\mathbf{m_1},w_3$&
			$\PP^1_{w_1}:$&$m_2,m_3,\mathbf{w_1},m_1$\\
			$\PP^1_{m_2}:$&$w_2,w_3,w_4,w_1,\mathbf{m_2}$&
			$\PP^1_{w_2}:$&$m_3,m_1,m_4,m_2,\mathbf{w_2}$\\
			$\PP^1_{m_3}:$&$w_3,w_1,w_4,w_2,\mathbf{m_3}$&
			$\PP^1_{w_3}:$&$m_1,m_2,m_4,m_3,\mathbf{w_3}$\\
			$\PP^1_{m_4}:$&$w_4,w_1,w_2,w_3,\mathbf{m_4}$&
			$\PP^1_{w_4}:$&$m_4,m_1,m_2,m_3,\mathbf{w_4}$\\
			\hline
	\end{tabular} }	
	{}	
\end{table}	

\begin{table}[ht]	
	\centering\caption	
	{Putative Preferences $\PP^1$ for $m_1$ and $w_1$. \label{tab:P1alt}}	
	{\begin{tabular}{|r l |r l |}	
			\hline
			$\PP^1_{m_1}:$&$w_1,w_2,w_4,\mathbf{m_1},w_3$&
			$\PP^1_{w_1}:$&$m_2,m_3,m_4,\mathbf{w_1},m_1$\\
			$\PP^1_{m_1}:$&$w_1,w_2,\mathbf{m_1},w_3,w_4$&
			$\PP^1_{w_1}:$&$m_2,m_3,\mathbf{w_1},m_1,m_4$\\
			\hline
	\end{tabular} }	
	{}	
\end{table}	

This condition on $r(\cdot)$ is sufficient to guarantee that the outcome of SSM using $r(\cdot)$ is not always a sincerely stable egalitarian matching.  
Specifically, we present the sincere preferences, $\Pi_2$, in Table \ref{tab:SP2} and a (locally) minimally dishonest equilibrium that selects a non-egalitarian stable matching.

\begin{table}[h]	
	\centering\caption	
	{Sincere Preferences $\SP^2$ for Theorem \ref{thm:Neg}.\label{tab:SP2}}	
	{\begin{tabular}{|r l |r l |}	
			\hline
			$\SP^2_{m_1}:w_1,w_4,w_2,w_3,\mathbf{m_1}$&
			$\SP^2_{w_1}:m_2,m_4,m_3,m_1,\mathbf{w_1}$\\
			$\SP^2_{m_1}:w_2,w_4,w_3,w_1,\mathbf{m_2}$&
			$\SP^2_{w_1}:m_3,m_4,m_1,m_2,\mathbf{w_2}$\\
			$\SP^2_{m_1}:w_3,w_4,w_1,w_2,\mathbf{m_3}$&
			$\SP^2_{w_1}:m_1,m_4,m_2,m_3,\mathbf{w_3}$\\
			$\SP^2_{m_1}:w_4,w_1,w_2,w_3,\mathbf{m_4}$&
			$\SP^2_{w_1}:m_4,m_1,m_2,m_3,\mathbf{w_4}$\\
			\hline
	\end{tabular} }	
	{}	
\end{table}	

The profile $\SP^2$ is identical to $\SP^1$ except $m_1, m_2, m_3, w_1, w_2,$ and $w_3$ move $w_4$ and $m_4$ up one position in their orderings. 
Again $\mu_1$, $\mu_2$ and $\mu_3$ from Table \ref{tab:NegStable} are stable with respect to $\SP^2$.  
However, $\mu_1$ and $\mu_3$ are the only egalitarian stable matchings. 
We claim that the putative profile $\PP^2$ in Table \ref{tab:PP2} is a (locally) minimally dishonest equilibrium for $\SP^2$ where $r(\PP^2)=\mu_2$ contradicting that $r(\cdot)$ always selects a sincere egalitarian stable matching. 

\begin{table}[ht]	
	\centering\caption	
	{Minimally Dishonest Equilibrium for $\SP^2$. \label{tab:PP2}}	
	{\begin{tabular}{|r l |r l |}	
			\hline
			$\PP^2_{m_1}:w_1,w_2,w_4,\mathbf{m_1},w_3$&
			$\PP^2_{w_1}:m_2,m_3,m_4,\mathbf{w_1},m_1$\\
			$\PP^2_{m_2}:w_2,w_3,w_4,w_1,\mathbf{m_2}$&
			$\PP^2_{w_2}:m_3,m_1,m_4,m_2,\mathbf{w_2}$\\
			$\PP^2_{m_3}:w_3,w_1,w_4,w_2,\mathbf{m_3}$&
			$\PP^2_{w_3}:m_1,m_2,m_4,m_3,\mathbf{w_3}$\\
			$\PP^2_{m_4}:w_4,w_1,w_2,w_3,\mathbf{m_4}$&
			$\PP^2_{w_4}:m_4,m_1,m_2,m_3,\mathbf{w_4}$\\
			\hline
	\end{tabular} }	
	{}	
\end{table}

The profile $\PP^2$ is again similar to $\PP^1$ except $m_2, m_3, w_2,$ and $w_3$ move $w_4$ and $m_4$ up one position in their ordering.
In addition, $m_1$ and $w_1$ move $w_4$ and $m_4$ in their preference lists. 
Similar to $\PP^1$ and $\SP^1$, only individual $m_1$ and $w_1$ are dishonest.  
The only matching stable with respect to $\PP^2$ is $\mu_2$ and therefore $r(\PP^2)=\mu_2$.
We now show $\PP^2$ is a minimally dishonest equilibrium for $\SP^2$ where $r(\PP^2)=\mu_2$, a matching that is not egalitarian with respect to $\SP^2$ thereby completing the proof of the theorem. 

First, $\PP^2$ is a Nash equilibrium:  No man can alter his preferences to get a partner that he prefers to his man-optimal partner \cite{GaleShapley62}.  With respect to the women's submitted preferences, each man is already receiving his man-optimal partner and therefore is providing a best response.  Symmetrically no woman can alter her preferences to receive a better outcome and $\PP^2$ is a Nash equilibrium.

It remains to show that each individual is (locally) minimally dishonest.  
Without loss of generality we examine only $m_1$.  
Man $m_1$ can only be more honest by submitting $\SP^2_{m_1}$.  
Next we will show $r([\PP^2_{-m_1}, \SP^2_{m_1}])=r([\PP^1_{-m_1}, \SP^1_{m_1}])$ implying that $\mu_3\in supp(r([\PP^2_{-m_1}, \SP^2_{m_1}]))$.  
This implies $m_1$ receives a strictly worse result if he is more honest. 
Therefore $\PP^2$ is a minimally dishonest equilibrium for $\SP^2$ that yields a non-egalitarian matching. 

To complete the proof, we show $r([\PP^2_{-m_1}, \SP^2_{m_1}])=r([\PP^1_{-m_1}, \SP^1_{m_1}])$. 
First, $r([\PP^2_{-m_1}, \SP^2_{m_1}])=r([\PP^2_{-m_1}, \SP^1_{m_1}])$:
Only $\mu_2$ and $\mu_3$ are stable with respect to $[\PP^2_{-m_1}, \SP^2_{m_1}]$ and $[\PP^2_{-m_1}, \SP^1_{m_1}]$.  
Therefore $p_{m_1y}([\PP^2_{-m_1}, \SP^2_{m_1}])=p_{m_1y}([\PP^2_{-m_1}, \SP^1_{m_1}])=0$ for all $y\notin \{\mu_2(m_1)=w_2, \mu_3(m_1)=w_3\}$.  
To transition from $\SP^2_{m_1}$ to $\SP^1_{m_1}$, $m_1$ must move $w_2$ up one position. 
Therefore by monotonicity, $p_{m_1w_2}([\PP^2_{-m_1}, \SP^1_{m_1}])\leq p_{m_1w_2}([\PP^2_{-m_1}, \SP^2_{m_1}])$.

Next,  INS implies $p_{m_1w_3}([\PP^2_{-m_1}, \SP^1_{m_1}])\leq p_{m_1w_3}([\PP^2_{-m_1}, \SP^2_{m_1}])$ since $p_{m_1w_4}([\PP^2_{-m_1}, \SP^1_{m_1}])=0$.
Therefore $p_{m_1y}([\PP^2_{-m_1}, \SP^2_{m_1}])= p_{m_1y}([\PP^2_{-m_1}, \SP^1_{m_1}])$ for all $y$.  
Since only $\mu_2$ and $\mu_3$ are stable and $\mu_2(m_1)\neq \mu_3(m_1)$, $r([\PP^2_{-m_1}, \SP^2_{m_1}])=r([\PP^1_{-m_1}, \SP^1_{m_1}])$. We can repeat this process for individuals $m_2, m_3, w_2$ and $w_3$ to show that  $r([\PP^2_{-m_1}, \SP^2_{m_1}])=r([\PP^1_{-\{m_1,w_1\}}, \SP^1_{m_1}, \PP^2_{w_1}])$. 
Moreover, INS implies the outcome is the same independent of which list from Table \ref{tab:P1alt} $w_1$ submits. Thus the process can be extended  to include $w_1$ so that  $r([\PP^2_{-m_1}, \SP^2_{m_1}])=r([\PP^1_{-m_1}, \SP^1_{m_1}])$. 
Thus $\PP_2$ is a (locally) minimally dishonest equilibrium for $\SP_2$ where $r(\PP_2)$ is not an egalitarian stable matching.
This is a contradiction, completing the proof of the theorem.
\end{proof}

\section{Fully-Randomized mechanisms}
\label{sec:Random}
Next, we examine Fully-Randomized stable matching mechanisms.  
Similar to Theorem \ref{thm:Main}, we show that stability is obtained when individuals are strategic and (locally) minimally honest.
In addition, we show that any sincerely stable matching can be obtained at a (locally)  minimally dishonest equilibrium.
Thus, similar to Theorem \ref{thm:Neg}, no Fully-Randomized stable matching mechanisms guarantees a sincere egalitarian stable matching. 

\begin{restatable}{theorem}{MainTwo}\label{thm:MGhasEQ} Suppose individuals are strategic and (locally) minimally dishonest.  Let $r(\cdot)$ be a Fully-Randomized stable matching mechanism.   Then  $\mu$ is sincerely stable with respect to $\SP$ if and only if there exists a (locally) minimally dishonest equilibrium $\PP$ for $\SP$ where $r(\PP)=\mu$.\end{restatable}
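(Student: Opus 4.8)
The plan is to prove the two implications separately, reusing the architecture of Section~\ref{sec:Mono} but with the Fully-Randomized hypothesis playing the structural role that monotonicity and INS played there.

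\emph{Necessity} ($\Rightarrow$ sincere stability). Suppose $\PP$ is a (locally) minimally dishonest equilibrium and set $\mu=r(\PP)$. I would first record that $\mu$ is the \emph{unique} matching stable with respect to $\PP$: by Lemma~\ref{lem:deterministic} the outcome is deterministic, so $p_{y\mu(y)}(\PP)=1$ for every $y$, and the Fully-Randomized property then excludes alternative $\PP$-stable partners for $y$, forcing $\mu(y)$ to be $y$'s partner in every $\PP$-stable matching. This is the analogue of Corollary~\ref{cor:one}, now read off directly from full randomization rather than from INS. I would then re-prove the ``honest up to one's spouse'' statement of Corollary~\ref{cor:CompleteOrder}: if $y_1\SP_z\mu(z)$ and $y_1\SP_z y_2$ then $y_1\PP_z y_2$. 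Arguing by contradiction with $y_2\PP_z y_1$ adjacent (reducing to the adjacent case exactly as in Corollary~\ref{cor:CompleteOrder}), let $\PP'$ promote $y_1$ past $y_2$ in $z$'s list, a strictly more honest report. The key observation is that the only partner whose availability or rank this single transposition improves for $z$ is $y_1$, so any matching that is stable for $\PP'$ but not for $\PP$ must assign $z$ to $y_1$; combined with uniqueness, every matching in $\mathrm{supp}\,r(\PP')$ is therefore either $\mu$ (giving $z$ the partner $\mu(z)$) or assigns $z$ the partner $y_1\SP_z\mu(z)$. In neither case is $z$ made strictly worse off, so $z$ could have been more honest at no cost, contradicting (local) minimal dishonesty. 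With honesty-up-to-spouse in hand the proof finishes as in Theorem~\ref{thm:Main}: $\mu$ is individually rational by Corollary~\ref{thm:AllIndividuallyRational}, and a sincere blocking pair $\{m,w\}$ would satisfy $w\PP_m\mu(m)$ and $m\PP_w\mu(w)$, hence block $\mu$ under $\PP$, which is impossible.

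\emph{Sufficiency} (sincere stability $\Rightarrow$ existence). Given a sincerely stable $\mu$, the natural first candidate is the \emph{truncation} report, in which each $i$ lists, honestly, all partners $\SP_i$-preferred to $\mu(i)$, then $\mu(i)$, then $\mathbf{i}$. A blocking pair of $\mu$ under this profile would be a sincere blocking pair, so $\mu$ is stable; and since $\mu(i)$ is then $i$'s least-preferred acceptable partner, both the man-optimal and woman-optimal stable matchings must equal $\mu$ (the Rural Hospital theorem \cite{GaleSotomayor85,Roth86} forbids anyone from becoming unmatched), so $\mu$ is the unique stable matching and $r(\PP)=\mu$. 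As in the proof of Theorem~\ref{thm:Neg}, this is a Nash equilibrium: uniqueness makes $\mu$ man-optimal, so each man already holds his man-optimal partner and cannot beat it against the fixed reports of the others \cite{GaleShapley62}, and symmetrically for women. I would note that this equilibrium argument uses only the honest-above-match structure together with uniqueness of $\mu$ (extending one's own list downward cannot improve a best stable partner once $\mu$ is the sole stable matching), so it applies verbatim to \emph{any} honest-above-match profile for which $\mu$ is uniquely stable.

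\emph{The crux --- minimal dishonesty and the role of full randomization.} The truncation report is generally \emph{too} dishonest: reinstating one player's honesty may leave $\mu$ uniquely stable, letting that player be more honest for free. I would therefore pass to an honest-above-match profile $\PP$ of minimum total Kendall Tau distance $\sum_i K(\PP_i,\SP_i)$ subject to $\mu$ remaining the unique stable matching; the truncation report witnesses that this set is nonempty, and by the remark above $r(\PP)=\mu$ and the equilibrium property are inherited. For such a minimizing $\PP$, any distance-reducing adjacent swap by $i$ stays honest-above-match, so by minimality it must destroy uniqueness: reinstating the best truncated partner $w'$ (with $\mu(i)\SP_i w'$) produces a matching $\nu\neq\mu$, and any such new stable matching must use the reinstated pair, i.e.\ $\nu(i)=w'$, which is strictly worse for $i$ than $\mu(i)$. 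Here the Fully-Randomized hypothesis is decisive: because $\mu(i)$ is no longer $i$'s partner in every stable matching of the perturbed profile, we must have $p_{i\mu(i)}<1$, so positive probability falls on the strictly worse partner $w'$ and $i$ is made strictly worse off. Thus no honesty-restoring swap is costless and $\PP$ is (locally) minimally dishonest. I expect this interplay to be the main obstacle: the dishonesty needed to pin $\mu$ down as the sole stable matching is typically more than minimal dishonesty tolerates, and it is precisely full randomization --- guaranteeing that a reinstated stable matching receives positive weight --- that certifies each surviving lie as necessary. Without the hypothesis the reinstated matching could carry zero probability, the lie would be gratuitous, and the construction would break; this is the structural reason the ``only if'' direction requires full randomization.
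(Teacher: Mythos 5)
Your \emph{necessity} direction is essentially the paper's own argument: Lemma~\ref{lem:deterministic} gives determinism, the Fully-Randomized property upgrades this to uniqueness of the $\PP$-stable matching (the analogue of Corollary~\ref{cor:one}), the adjacent-swap blocking-pair analysis re-establishes Corollary~\ref{cor:CompleteOrder}, and stability follows as in Theorem~\ref{thm:Main}. Your \emph{sufficiency} direction, however, deviates from the paper: where the paper starts from the truncation profile $\PP^\mu$ and runs Algorithm~\ref{Nash}, repairing honesty violations one at a time with Lemmas~\ref{lem:AlgorithmTerminates}, \ref{lem:InvarientEquil} and \ref{lem:InvarientDishonest} as invariants, you take a one-shot extremal argument: minimize $\sum_i K(\PP_i,\SP_i)$ over honest-above-match profiles keeping $\mu$ uniquely stable. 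That substitution is reasonable in spirit (both arguments drive the same potential down, and your equilibrium and uniqueness observations are correct), but as written it has a genuine gap.

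The gap is that your construction certifies only \emph{locally} minimal dishonesty, while the theorem also asserts the existence of a (fully) \emph{minimally} dishonest equilibrium --- and that is the stronger claim needed, since a locally minimally dishonest equilibrium need not be minimally dishonest. Full minimal dishonesty quantifies over \emph{every} more honest report $\PP'_i$, not just adjacent swaps, and your constrained minimality controls only deviations that stay honest-above-match: a more honest $\PP'_i$ that disturbs the portion of the list above $\mu(i)$ (e.g.\ permutes two women above the match while repairing enough disorder below it for a net decrease in Kendall Tau distance) leaves your feasible set, so minimality says nothing about it, and your blocking-pair analysis does not reach it. This is exactly the step the paper isolates as Lemma~\ref{lem:InvarientDishonest} ($Inv=\emptyset$ iff $Inv'=\emptyset$), and your proof needs its analogue: if $\PP'_i$ is more honest and does not hurt $i$, replace its above-match portion by the sincere order to obtain $\PP''_i$; then $K(\PP''_i,\SP_i)\leq K(\PP'_i,\SP_i)<K(\PP_i,\SP_i)$, $\PP''_i$ is honest-above-match, and any matching stable under $[\PP_{-i},\PP''_i]$ giving $i$ a partner other than $\mu(i)$ (necessarily sincerely worse, by the limits-of-manipulation fact you already invoke) is also stable under $[\PP_{-i},\PP'_i]$, because a blocking pair lost in passing from $\PP'_i$ to $\PP''_i$ would require a flipped pair straddling the above-match set, which the construction of $\PP''_i$ forbids; hence $\PP''_i$ does not hurt $i$ either, contradicting your minimality. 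A second, more minor, inaccuracy: not every distance-reducing adjacent swap ``reinstates the best truncated partner''; it may reorder two women already below $\mu(i)$ or demote a falsely-listed-acceptable one. Your argument survives because the blocking-pair analysis shows any newly stable matching must match $i$ to the woman \emph{promoted} by the swap, who in all cases lies below $\mu(i)$ in $\PP_i$ and is hence sincerely worse, but the claim should be stated in that generality.
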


The second part of Theorem \ref{thm:MGhasEQ} follows almost identically to Theorem \ref{thm:Main}. 
Lemma \ref{lem:deterministic} guarantees that the outcome is selected deterministically at an equilibrium $\PP$.  
Therefore, $\PP$ has only one stable marriage since $r(\cdot)$ is Fully-Randomized and Corollary \ref{cor:one} extends to this setting.
We can immediately extend Corollary \ref{cor:CompleteOrder} to guarantee each individual is honest up to their spouse assigned by $r(\PP)$.  
The proof of sincere stability then immediately follows. 

Recall that every minimally dishonest Nash equilibrium is also locally minimally dishonest.
Thus, to show the first part of Theorem \ref{thm:MGhasEQ}, it is sufficient to show for any matching $\mu$ that is stable with respect to the sincere $\SP$, there is a minimally dishonest Nash equilibrium $\PP$ such that $r(\PP)=\mu$. 
To accomplish this, consider the preference profile $\PP^\mu$ obtained after each individual truncates their sincere profiles after their partner in $\mu$.  By construction, $\mu$ will be the only stable matching with respect to $\PP^{{\mu}}$.

We begin by presenting an algorithm to find a minimally dishonest equilibrium $\PP$ given $\PP^\mu$ where $r(\PP)=\mu$.  
Each iteration of the algorithm corrects a violation to minimal dishonesty and decreases the distance between an individual's putative and sincere preferences.  Since the distances are finite and integer, the algorithm terminates in finite time (Lemma \ref{lem:AlgorithmTerminates}).  
Next, we show that if $\PP$ is the profile obtained at the end of an iteration, then $\PP$ is an equilibrium and $\mu$ is the unique stable matching with respect to $\PP$ (Lemma \ref{lem:InvarientEquil}). 
This implies the algorithm outputs an equilibrium $\PP$ that yields the sincere stable matching $\mu$.  
We show that the equilibrium is minimally dishonest by showing that the algorithm does not terminate until all violations to minimal dishonesty are corrected (Lemma \ref{lem:InvarientDishonest}). Thus there is at least one minimally dishonest equilibrium $\PP$ that yields the matching $\mu$.

Algorithm \ref{Nash}, defined below,  finds a minimally dishonest equilibrium.
Let $Inv(r,{\SP},{\PP})$ be the set of violations to the minimally dishonest criterion given mechanism $r(\cdot)$, sincere profiles $\SP$, and equilibrium $\PP$. Formally, $Inv(r,\SP,\PP)$ is the set of $\{y,\PP'_y\}$ where $K({\PP}'_y, \SP_y)<K({\PP}_y, \SP_y)$ and $y$ obtains at least as good an outcome by submitting $\PP'_y$. 
Let $Inv'(r,{\SP},{\PP})\subseteq Inv(r,{\SP},{\PP})$ be the set of $\{y,{\PP}'_y\}\in Inv(r,{\SP},{\PP})$ where ${\PP}'_y$ agrees with $\SP_y$ up to $y$'s partner in $r(\PP)$.

We now have sufficient definitions to give an algorithm than finds a minimally dishonest equilibrium that yields an arbitrary sincere stable matching for Fully-Randomized mechanisms.  
Let $\mu$ be an arbitrary stable matching with respect to the sincere preferences. 
For each woman $w$, $m \PP_w m'$ if $m \SP_w m'$ for all $m,m'\in M$ and $w\PP_w m$ if $\mu(m) \SP_w m$ (i.e. she indicates she is only willing to match someone at least as good as her partner in $\mu$).
Equivalently, $w$ truncates $\SP_w$ after $\mu(w)$. 
Man $m$ also truncates $\SP_m$ after $\mu(m)$.
When given a Fully-Randomized algorithm $r(\cdot)$, sincere preference $\SP$ and the putative preferences $\PP^{\mu}$,  Algorithm \ref{Nash} will output a minimally dishonest equilibrium for $\SP$ that yields the sincerely stable matching $\mu$.

\begin{algorithm} \centering\caption{Equilibrium finding algorithm for Fully-Randomized and Gale-Shapley algorithms}\label{Nash}
	\begin{algorithmic}[1]
		\Procedure{EquilibriumFind}{}
		\While{$Inv'(r,\SP,\PP)\neq \emptyset$}
		\State Select $\{y,\PP'_y\}\in Inv'(r,\SP,\PP)$
		\State $\PP \gets [\PP_{-y},\PP'_y]$
		\EndWhile
		\State Output $\PP$
		\EndProcedure
	\end{algorithmic}
\end{algorithm}

We first show that the algorithm terminates regardless of the input.

\begin{lemma}Algorithm \ref{Nash}
	terminates.\label{lem:AlgorithmTerminates}\end{lemma}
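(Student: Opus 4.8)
The plan is to exhibit a strictly decreasing, nonnegative, integer-valued potential function that drops by at least one with each iteration of the while loop. The natural candidate is the total Kendall Tau distance between the putative and sincere profiles, $\Phi(\PP) = \sum_{y \in M \cup W} K(\PP_y, \SP_y)$. This quantity is a finite nonnegative integer for every profile, since each individual's list is finite and $K$ counts inverted pairs. I would first observe that $\Phi$ is bounded below by $0$, so establishing that each iteration strictly decreases $\Phi$ immediately yields termination after at most $\Phi(\PP^\mu)$ iterations.

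The key step is to verify that the body of the loop strictly decreases $\Phi$. When $\{y, \PP'_y\} \in Inv'(r, \SP, \PP)$ is selected, the definition of $Inv$ requires $K(\PP'_y, \SP_y) < K(\PP_y, \SP_y)$; that is, the replacement $\PP'_y$ is strictly more honest for individual $y$ than the current $\PP_y$. The update $\PP \gets [\PP_{-y}, \PP'_y]$ changes only the $y$-th component of the profile, leaving $K(\PP_z, \SP_z)$ unchanged for every $z \neq y$. Hence the update changes $\Phi$ by exactly $K(\PP'_y, \SP_y) - K(\PP_y, \SP_y) < 0$, and since both terms are integers this difference is at most $-1$. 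Therefore $\Phi$ strictly decreases by at least one with each pass through the loop.

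The remaining point to address is that the loop is well-defined at each step: whenever $Inv'(r, \SP, \PP) \neq \emptyset$, line 3 can actually select an element, which is immediate. Combining the strict monotone decrease with the lower bound of $0$, the loop can execute at most $\Phi(\PP^\mu)$ times before $Inv'(r, \SP, \PP)$ becomes empty and the while condition fails, at which point the algorithm outputs $\PP$ and halts. I would note that the argument makes no use of the specifics of $r(\cdot)$ beyond the membership condition defining $Inv'$, so it applies uniformly to the Fully-Randomized and Gale-Shapley settings referenced in the algorithm's caption.

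I do not anticipate a serious obstacle here, as termination is essentially a bookkeeping argument once the right potential is identified. The only subtlety worth flagging is that the algorithm iterates over $Inv'$ rather than $Inv$; since $Inv' \subseteq Inv$, any selected violation still satisfies the strict inequality $K(\PP'_y, \SP_y) < K(\PP_y, \SP_y)$ that drives the potential down, so restricting to $Inv'$ does not disrupt the monotonicity argument. The genuinely harder claims, namely that the output is in fact an equilibrium yielding $\mu$ and that it is minimally dishonest, are deferred to Lemmas \ref{lem:InvarientEquil} and \ref{lem:InvarientDishonest} and are not needed for termination.
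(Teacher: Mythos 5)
Your proposal is correct and follows essentially the same argument as the paper: both use the potential function $\sum_{y\in M\cup W} K(\SP_y,\PP_y)$, observe it is a nonnegative integer that drops by at least one per iteration (only $y$'s term changes, and the definition of $Inv'\subseteq Inv$ forces a strict decrease), and conclude termination. The paper additionally records the explicit bound $K(\SP_m,\PP_m)\leq \binom{|W|+1}{2}$ and $K(\SP_w,\PP_w)\leq \binom{|M|+1}{2}$ to get an $O(|M||W|^2+|M|^2|W|)$ iteration count, but this is a refinement rather than a difference in method.
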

\begin{proof}Consider the potential function
\begin{align}
\phi(\SP, \PP)=\sum_{y\in M\cup W} K(\SP_y,\PP_y).
\end{align}
By definition $\phi(\SP,\PP)\in \mathbb{Z}_{\geq 0}$ in each iteration of Algorithm \ref{Nash}. 
If individual $y$ updates their preferences in an iteration then $ K({\SP}_y, {\PP}_y)$ 
decreases by at least one and $K({\SP}_z, {\PP}_z)$ remains unchanged for $z\neq y$. 
Therefore Algorithm \ref{Nash} must terminate. 
Furthermore, $K(\SP_m,\PP_m)\leq {\binom{|W|+1}{2}}$ and 
$K(\SP_w, \PP_w)\leq {\binom{|M|+1}{2}}$ and Algorithm \ref{Nash} 
terminates in $O(|M||W|^2+|M|^2|W|)$ iterations.
\end{proof}

Next we show that at the end of each iteration, $\PP$ is a equilibrium and that $\mu$ is the only matching stable with respect to $\PP$. Thus, when Algorithm \ref{Nash} terminates, it outputs an equilibrium that yields the sincere stable matching $\mu$.

\begin{lemma}\label{lem:InvarientEquil} 
	Suppose Algorithm \ref{Nash} is given a Fully-Randomized mechanism $r(\cdot)$, $\SP$, and $\PP^\mu$ where $\mu$ is sincerely stable.  At the end of each iteration, $\PP$ is an equilibrium and $\mu$ is the only matching stable with respect to $\PP$.
\end{lemma}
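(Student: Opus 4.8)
The plan is to prove both assertions by induction on the iterations of Algorithm \ref{Nash}, carrying along a stronger invariant: in addition to uniqueness and the equilibrium property, I would track that every individual $z$ reports \emph{honestly down through $\mu(z)$}, meaning the prefix of $\PP_z$ ending at $\mu(z)$ coincides with the corresponding prefix of $\SP_z$. This last property is essentially free to maintain: it holds for the truncated input $\PP^\mu$, and each update replaces $\PP_y$ by some $\PP'_y\in Inv'$, which by the very definition of $Inv'$ still agrees with $\SP_y$ through $y$'s partner $\mu(y)$ in $r(\PP)$. For the base case $\PP=\PP^\mu$, uniqueness is the standard truncation fact: any second stable matching $\mu'\neq\mu$ would, by individual rationality under the truncation and the Rural Hospital theorem, pair someone with a partner they strictly prefer to their $\mu$-partner on both sides, producing a sincere blocking pair of $\mu$ and contradicting sincere stability; and $\PP^\mu$ is an equilibrium because every woman declares unacceptable all men she ranks below $\mu(w)$, so no man can reach a woman he prefers to $\mu(m)$ without creating a sincere blocking pair of $\mu$.

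For the inductive step, write $Q:=[\PP_{-y},\PP'_y]$ for the profile after one update. Showing $\mu$ stays stable under $Q$ is routine, since $\PP_y$ and $\PP'_y$ share their prefix through $\mu(y)$: any blocking pair or individual-rationality violation of $\mu$ under $Q$ would live entirely in that common prefix and hence also violate stability under $\PP$, impossible by hypothesis. The real content is uniqueness, and here I would combine three ingredients. First, $\{y,\PP'_y\}\in Inv$ guarantees every matching in the support of $r(Q)$ assigns $y$ a partner at least as good as $\mu(y)$. Second, the inductive hypothesis that $\PP$ is an equilibrium forbids $y$ from strictly improving via the deviation $\PP'_y$; with the first point this forces every supported matching to assign $y$ \emph{exactly} $\mu(y)$, i.e. $p_{y\mu(y)}(Q)=1$. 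Third, since $r$ is Fully-Randomized, $p_{y\mu(y)}(Q)=1$ upgrades to $\mu''(y)=\mu(y)$ for \emph{every} $\mu''$ stable with respect to $Q$. Finally, any such $\mu''$ is stable with respect to $\PP$ as well --- a blocking pair $\{y,b\}$ under $\PP$ would need $b\,\PP_y\,\mu(y)$, hence $b\,\PP'_y\,\mu(y)$ by the shared prefix, so it would block under $Q$ too --- whence $\mu''=\mu$ by the inductive uniqueness. Thus $\mu$ is the unique stable matching of $Q$, so $r(Q)=\mu$ by Lemma \ref{lem:deterministic}.

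It remains to re-establish that $Q$ is an equilibrium. I would argue by contradiction: suppose a man $m$ reports $\PP''_m$ and is matched to some $w$ with $w\,\SP_m\,\mu(m)$ in a matching $\nu$ stable with respect to $[\,Q_{-m},\PP''_m]$ (women being symmetric). Sincere stability of $\mu$ gives $\mu(w)\,\SP_w\,m$, and because every agent is honest down to their $\mu$-partner, $w$ ranks $\mu(w)$ above $m$; stability of $\nu$ then forces the man $\mu(w)$ to weakly prefer his $\nu$-partner to $w$, and iterating yields a chain (a rotation) of men all weakly better off under $\nu$ than under $\mu$, with the matched women strictly worse off. Applying this rotation to $\mu$ produces a distinct matching $\mu^\ast$, and using honesty down to $\mu$ together with sincere stability of $\mu$ one verifies $\mu^\ast$ has no blocking pair under $Q$; hence $\mu^\ast\neq\mu$ is a second stable matching of $Q$, contradicting the uniqueness just proved. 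Therefore no profitable deviation exists and $Q$ is an equilibrium, closing the induction.

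The hard part will be the rotation/chain argument of the third paragraph. Unlike the base case --- where a fully truncated woman is simply \emph{unwilling} to accept a man below her $\mu$-partner, closing the argument in one step --- after several un-truncations a woman may now be willing to accept such a man, so one can no longer stop at a single unwilling partner and must instead follow the induced chain and confirm that the rematched profile is genuinely blocking-pair-free under $Q$ in order to contradict uniqueness. A secondary subtlety worth pinning down carefully is the exact reading of $y$'s payoff over the randomized outcome $r(Q)$ that lets me pass from ``$Inv$ yields no worse partner'' and ``equilibrium yields no better partner'' to the sharp identity $p_{y\mu(y)}(Q)=1$ --- precisely the hypothesis the Fully-Randomized property needs in order to convert a probability-one assignment into a statement about every stable matching.
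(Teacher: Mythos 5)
Your base case and your uniqueness-preservation step are essentially the paper's own argument: the paper also carries the prefix-honesty invariant (implicitly, through the definition of $Inv'$), combines ``$Inv$ gives no worse partner'' with ``equilibrium gives no better partner'' to conclude $p_{y\mu(y)}([\PP_{-y},\PP'_y])=1$, uses the Fully-Randomized property to upgrade this to a statement about \emph{every} matching stable with respect to the new profile, and then transfers blocking pairs through the shared prefix to invoke the inductive uniqueness. That portion of your proposal is correct and faithful to the paper.

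The genuine gap is in your third paragraph, exactly where you suspect it. The paper does not re-derive equilibrium-preservation by hand: since $\mu$ is the unique stable matching of $Q$, it is simultaneously man-optimal and woman-optimal for $Q$, and the paper then invokes the classical non-manipulability theorem (cited as \cite{GaleShapley62} here and as \cite{Demange87} in Theorem \ref{thm:MDImplyStrong}): no single agent can misreport under any stable mechanism and be matched, in any stable matching of the misreported profile, to a partner strictly better than her optimal stable partner; prefix honesty then converts ``no putatively better partner'' into ``no sincerely better partner.'' Your chain-and-rotation substitute fails at the verification step. Trace your cycle $C$ and form $\mu^\ast$; a pair $\{a,w_j\}$ with $a\notin C$ and $w_j\in C$ can block $\mu^\ast$ under $Q$ while blocking neither $\mu$ under $Q$ nor $\nu$ under the deviated profile: take $\mu(w_j)\,Q_{w_j}\,a\,Q_{w_j}\,\nu(w_j)$ (so $\{a,w_j\}$ does not block $\mu$) together with $\nu(a)\,Q_a\,w_j\,Q_a\,\mu(a)$ (a man outside your traced cycle who also improved under $\nu$, so $\{a,w_j\}$ does not block $\nu$ either). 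Neither ``honesty down to $\mu$'' nor sincere stability of $\mu$ excludes this configuration, so the claim ``one verifies $\mu^\ast$ has no blocking pair under $Q$'' cannot be established with the tools you allow yourself. Repairing this is precisely the content of the Gale--Sotomayor Blocking Lemma, which produces a blocking pair of $\nu$ whose man lies \emph{outside} the set of improved men --- hence is not the deviator and reports the same list in both profiles --- contradicting stability of $\nu$ directly, with no need to construct $\mu^\ast$. So either cite that lemma (equivalently, the Demange--Gale--Sotomayor theorem, as the paper effectively does) or prove it; the single-cycle rotation as sketched is not sufficient.
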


\begin{proof}By construction of $\PP^\mu$, $\mu$ is the only stable matching at the beginning of the first iteration.  
Thus $\mu$ is both man and woman-optimal at the beginning of an iteration. 
No man can alter his preferences to obtain an outcome he prefers to the man-optimal matching \cite{GaleShapley62}. 
Therefore no man $m$ can alter $\PP^\mu_m$ to obtain a matching he prefers with respect to $\PP^\mu$.  
By construction of $\PP^\mu$, $m$ also cannot alter $\PP^\mu_m$ to obtain a matching he prefers with respect to $\SP_m$. 
Symmetrically, no woman can alter her preference to obtain a better outcome and $\PP^\mu$ is an equilibrium.

We now show that if $\PP$ is an equilibrium with the unique stable matching $\mu$ at the beginning of an iteration, then the condition holds at the end of the iteration completing the proof of the lemma.

Suppose that $\{y,\PP'_y\}\in Inv'(r, \SP, \PP)$ is updated in an iteration.  Let $\PP'=[\PP_{-y},\PP'_y]$ be the preference profile at the end of the iteration. Since $\PP$ is an equilibrium, $y$ cannot receive a better partner with respect to $\PP'$. Since $\{y,\PP'_y\}\in Inv'(r, \SP, \PP)$, $y$ does not obtain a worse partner with respect to $\PP'$ and $y$ is matched to $\mu(y)$ in every matching in $r(\PP')$.

We now claim that $\mu$ is the only matching stable with respect to $\PP'$. 
Suppose instead $\mu'\neq \mu$ is stable with respect to $\PP'$ but not $\PP$. Since $\mu'$ is not stable with respect to $\PP'$ there must be a pair $\{z,b\}$ that blocks $\mu'$ with respect to $\PP$ but not $\PP'$. 
Since only $y$ changed her preferences, $b=y$ and $\{z,y\}$ blocks $\mu'$ with respect to $\PP'$ implying $z\PP'_y \mu'(y)$ and $y\PP'_z \mu'(z)$.
If $\mu(y)=\mu'(y)$ then by construction of $Inv'(r,\SP,\PP)$, $z\PP_y \mu'(y)$ if and only if $z\PP'_y \mu'(y)$ and $\{z,y\}$ blocks $\mu'$ with respect to $\PP'$, a contradiction. 
If $\mu(y)\neq \mu'(y)$ then then $\mu(y)\PP_y \mu'(y)$ since $\PP$ is an equilibrium and $r(\cdot)$ being Fully-Randomized implies $y$ has positive probability of obtaining a worse outcome, contradicting the construction of $Inv'(r,\SP,\PP)$. 
Therefore $\mu$ is the only matching stable with respect to $\PP'$. 

It remains to show that $\PP'$ is still an equilibrium. This again follows directly from \cite{GaleShapley62} and the lemma holds.\end{proof}

By Lemmas \ref{lem:AlgorithmTerminates} and \ref{lem:InvarientEquil}, Algorithm \ref{Nash} outputs an equilibrium $\PP$ where $r(\PP)=\mu$.
It only remains to show that each individual is minimally dishonest. 
All individuals are minimally dishonest if and only if $Inv(r, \SP, \PP)=\emptyset$. Thus it suffices to show that in each iteration $Inv(r, \SP, \PP)=\emptyset$ if and only if $Inv'(r, \SP, \PP)=\emptyset$.

\begin{lemma}\label{lem:InvarientDishonest}Suppose Algorithm \ref{Nash} is given input $\SP$, $\PP^\mu$ and a Fully-Randomized $r(\cdot)$. In each iteration, $Inv(r, \SP, \PP)=\emptyset$ if and only if $Inv'(r, \SP, \PP)=\emptyset$.
\end{lemma}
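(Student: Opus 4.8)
The plan is to prove the two directions separately. The implication $Inv(r,\SP,\PP)=\emptyset \Rightarrow Inv'(r,\SP,\PP)=\emptyset$ is immediate, since $Inv'(r,\SP,\PP)\subseteq Inv(r,\SP,\PP)$ by construction. All of the work is in the contrapositive of the reverse implication: assuming $Inv(r,\SP,\PP)\neq\emptyset$, I must exhibit a member of $Inv'(r,\SP,\PP)$. So fix a violation $\{y,\PP'_y\}\in Inv(r,\SP,\PP)$. By Lemma \ref{lem:InvarientEquil}, at the start of the iteration $\PP$ is an equilibrium whose unique stable matching is $\mu$; hence $y$'s current partner is $\mu(y)$, and because $\PP$ is an equilibrium $y$ cannot strictly improve by any deviation. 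Consequently the weakly profitable deviation $\PP'_y$ cannot leave $y$ strictly better off, and since $r(\cdot)$ is Fully-Randomized together with the Rural Hospital theorem, ``at least as good'' forces $y$'s partner to be weakly preferred to $\mu(y)$ in \emph{every} matching in $supp(r([\PP_{-y},\PP'_y]))$.

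Next I would convert $\PP'_y$ into an honest-up-to-spouse list without destroying either property. Let $A=\{a:a\,\SP_y\,\mu(y)\}$ be the set of individuals $y$ sincerely prefers to its partner. I would transform $\PP'_y$ by a sequence of adjacent transpositions, each of which corrects a sincere inversion in which the upward-moved individual is sincerely weakly preferred to $\mu(y)$ (that is, lies in $A\cup\{\mu(y)\}$). Every such swap strictly decreases $K(\cdot,\SP_y)$, so when no such inversion remains the resulting list $\PP''_y$ lists exactly $A$ in sincere order followed by $\mu(y)$; thus $\PP''_y$ agrees with $\SP_y$ up to $y$'s partner, and $K(\PP''_y,\SP_y)\le K(\PP'_y,\SP_y)<K(\PP_y,\SP_y)$, so $\PP''_y$ is strictly more honest than $\PP_y$.

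It then remains to show that each transposition preserves the property that $y$'s partner is weakly preferred to $\mu(y)$ in every stable matching, so that $\{y,\PP''_y\}\in Inv'(r,\SP,\PP)$. Here I would reuse the blocking-pair argument from the proof of Corollary \ref{cor:CompleteOrder}: swapping an adjacent pair $y_1,y_2$ in $y$'s list can change the stability status only of matchings in which $y$ is matched to $y_1$ or $y_2$, since for any matching pairing $y$ with someone else the set of individuals $y$ ranks above its current partner is unchanged. Thus every matching pairing $y$ strictly below $\mu(y)$ stays unstable, and the only partner that can \emph{newly} become available to $y$ is the upward-moved individual, which is sincerely weakly preferred to $\mu(y)$. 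Hence each swap leaves $y$ at least as well off, and induction from the at-least-as-good list $\PP'_y$ gives the claim. Crucially, this argument uses only the uniqueness of $\mu$ (Corollary \ref{cor:one}/Lemma \ref{lem:InvarientEquil}) and individual rationality, not monotonicity or INS, so it is valid for the Fully-Randomized mechanism.

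I expect the main obstacle to be exactly this last step. A swap that enlarges the set of individuals $y$ declares acceptable above $\mu(y)$ can create new blocking pairs and genuinely alter the set of stable matchings, so one cannot treat $y$'s outcome as fixed; the delicate point is ruling out that some swap introduces a stable matching pairing $y$ below $\mu(y)$ or leaving $y$ unmatched. This is where the uniqueness of $\mu$ before the swap, the equilibrium's no-strict-improvement guarantee, the Fully-Randomized property, and the Rural Hospital theorem must be combined to pin every stable partner of $y$ to be at least $\mu(y)$.
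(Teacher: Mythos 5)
Your proposal is correct and takes essentially the same route as the paper, whose own proof is only a pointer: it instructs the reader to modify a given $\{y,\PP'_y\}\in Inv(r,\SP,\PP)$ into a member of $Inv'(r,\SP,\PP)$ by reusing the techniques of Theorem \ref{thm:MGhasEQ}, Corollary \ref{cor:one}, and Corollary \ref{cor:CompleteOrder} -- which is precisely your adjacent-transposition construction and blocking-pair preservation argument. The one step you flag as the main obstacle goes through exactly as you anticipate, by the same reasoning the paper uses inside Lemma \ref{lem:InvarientEquil}: the equilibrium property of $\PP$ together with the violation force every matching in $supp\left(r\left([\PP_{-y},\PP'_y]\right)\right)$ to match $y$ to exactly $\mu(y)$, so $p_{y\mu(y)}\left([\PP_{-y},\PP'_y]\right)=1$, and the Fully-Randomized property then lifts this from the support to all matchings stable with respect to $[\PP_{-y},\PP'_y]$, which is the base case your induction needs (the Rural Hospital theorem is not actually required).
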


\begin{proof} Since $Inv'(r,\SP,\PP)\subseteq Inv(r,\SP,\PP)$, the second direction holds immediately. Suppose that $\{y,\PP'_y\} \in Inv(r,\SP,\PP)$ and let $\PP'=[\PP_{-y},\PP'_y]$.

The first direction follows by selecting a  $\{y,\PP'_y\}\in Inv(r,\SP,\PP)$ and applying the proof techniques established in the second part of Theorem \ref{thm:MGhasEQ} and Corollaries \ref{cor:one} and \ref{cor:CompleteOrder} to show that we can modify $\{y,\PP'_y\}$ so that it is also in $Inv'(r,\SP,\PP)$.  
\end{proof}


\section{The Gale-Shapley algorithm}\label{sec:sufficient}

While (locally) minimally dishonest equilibria always exist when using Fully-Randomized mechanisms, this does not necessarily hold for monotonic INS mechanisms.
In Proposition \ref{prop:NoEquil}, we give a monotonic INS mechanism and a set of preferences where no (locally) minimally dishonest equilibrium exists. 
We then show that the Gale-Shapley algorithm always has at least one minimally dishonest equilibrium and that every equilibrium yields the sincere woman-optimal matching. 

\begin{restatable}{proposition}{noequil}
	If $r(\cdot)$ selects an egalitarian stable matching uniformly at random with respect to the submitted preferences, then the there exists a sincere $\SP$ with no (locally) minimally dishonest equilibrium.\label{prop:NoEquil} 
\end{restatable}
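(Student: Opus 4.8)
The plan is to exhibit one explicit sincere profile $\SP$ and show directly that it admits no (locally) minimally dishonest equilibrium under the egalitarian-uniform mechanism. Since that mechanism is monotonic and INS, everything proved in Section \ref{sec:Mono} applies: by Theorem \ref{thm:Main} any (locally) minimally dishonest equilibrium $\PP$ must output a sincerely stable $\mu$, by Corollary \ref{cor:one} $\mu$ is the \emph{unique} matching stable with respect to $\PP$, and by Corollary \ref{cor:CompleteOrder} every agent reports honestly down through their $\mu$-partner. Thus the only latitude an agent has is in truncating or reordering the partners ranked below $\mu(y)$, and the problem collapses to a finite question: for each sincerely stable $\mu$, is there a way to truncate so that $\mu$ becomes the unique stable matching while no single agent can relax their truncation without risking a strictly worse partner?

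First I would enumerate the stable matchings of $\SP$ and, for each candidate outcome $\mu$, identify the agents whose truncations are necessary to destabilize the competing stable matchings (an agent must exclude a partner $p$ in order to kill a matching that pairs them with $p$). For each such forcing agent $y$ I would take the one-adjacent-swap relaxation $\PP'_y$ that re-admits the next excluded partner, form $[\PP_{-y},\PP'_y]$, and compute the egalitarian costs of every matching stable with respect to this relaxed profile. The aim is to show that $\mu$ remains the \emph{strictly} minimum-cost stable matching, so the egalitarian-uniform mechanism returns $\mu$ with probability one; then $y$ keeps the partner $\mu(y)$ while being strictly more honest, contradicting (local) minimal dishonesty. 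Ranging over all sincerely stable $\mu$ then rules out every equilibrium.

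The delicate point, and the reason the construction must be chosen carefully, is the man-optimal and woman-optimal matchings. Relaxing a forcing truncation for such an extreme matching re-admits a worse partner for the deviator and typically revives a \emph{more balanced} competing matching of strictly lower egalitarian cost; the mechanism would then select that lower-cost matching, hand the deviator a strictly worse partner, and thereby \emph{protect} the extreme matching as a genuine minimally dishonest equilibrium. Indeed the profile of Theorem \ref{thm:Neg} is not a witness for precisely this reason: its woman-optimal matching can be forced by a single truncation whose relaxation revives the egalitarian matching as the strict cost-minimizer, which is strictly worse for the deviating woman, so that equilibrium survives. The main obstacle is therefore to engineer the cost landscape so that, for \emph{every} sincerely stable matching and every necessary truncation, relaxing the truncation revives only matchings of strictly higher egalitarian cost, keeping the forced matching cost-optimal and leaving the deviator unpunished. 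Establishing these strict cost inequalities simultaneously across all stable matchings is the crux; strictness is essential, since a tie would place a worse matching in the support and rescue the equilibrium.

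Finally I would contrast this with the fully-randomized case. Theorem \ref{thm:MGhasEQ} guarantees a (locally) minimally dishonest equilibrium for every sincerely stable matching, so non-existence here is a genuine feature of egalitarian selection: the egalitarian rule places probability one on the unique cost-minimizer, which is exactly what lets an agent increase honesty with no downside, whereas a fully-randomized rule spreads probability across stable matchings and always exposes the deviator to a strictly worse outcome.
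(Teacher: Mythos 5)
Your high-level strategy coincides with the paper's: restrict any (locally) minimally dishonest equilibrium $\PP$ via Theorem \ref{thm:Main}, Corollary \ref{cor:one} and Corollary \ref{cor:CompleteOrder}, observe that uniqueness of the putatively stable matching forces some agent to exclude a rival stable partner, then replace that agent's list by a single more-honest swap and show the forced matching remains the \emph{strict} egalitarian cost minimizer of the relaxed profile, so the deviator loses nothing and minimal dishonesty is contradicted. However, there is a genuine gap: the proposition is an existence statement, and you never exhibit the sincere profile $\SP$. Everything after ``First I would enumerate'' is a search procedure plus a list of inequalities an unknown witness must satisfy; you yourself identify ``establishing these strict cost inequalities simultaneously across all stable matchings'' as the crux and leave it unestablished. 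A blind reader cannot extract a proof from this, because the entire mathematical content of the proposition is the witness and its finite verification.

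For the record, the paper's witness (Table \ref{tab:NoEquil}) is the $3\times 3$ cyclic profile whose only stable matchings are the man-optimal $\mu_1$ and the woman-optimal $\mu_2$, both of egalitarian cost $9$. Assuming WLOG $r(\PP)=\mu_1$, Corollaries \ref{cor:one} and \ref{cor:CompleteOrder} plus minimal dishonesty force everyone except one man, say $m_1$, to be fully honest and force $\PP_{m_1}$ to be $(w_1,w_3,\mathbf{m_1},w_2)$ or $(w_1,\mathbf{m_1},w_2,w_3)$; swapping $\mathbf{m_1}$ with $w_2$ (respectively $w_3$) gives the more honest $(w_1,w_3,w_2,\mathbf{m_1})$, under which the stable matchings are $\mu_1$ and $\mu_2$ with putative costs $9$ and $10$, so $r$ still returns $\mu_1$ deterministically --- contradiction. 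This example also corrects your ``delicate point'': both stable matchings of the witness \emph{are} the extreme ones, and both are killed, because the more-honest relaxation retains one residual inversion ($w_3$ ahead of $w_2$) that keeps the rival matching's putative cost strictly higher. What matters is that the relaxation preserves some residual dishonesty against the rival partner, not that extreme matchings be avoided. (Your side observation that the profile of Theorem \ref{thm:Neg} is not a witness appears to be correct --- its woman-optimal matching survives as a minimally dishonest equilibrium --- but a sanity check on a non-witness does not substitute for producing a witness.)
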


\begin{proof}
Similar to Theorem \ref{thm:Neg}, we present a sincere profile $\SP$ and examine a (locally) minimally dishonest equilibrium $\PP$.
This mechanism is both monotonic and INS. 
Using these properties we are able to determine $\PP$ based on $\SP$.  
We then show there are two matchings stable with respect to $\PP$ contradicting Corollary \ref{cor:one} and therefore $\PP$ cannot exist. 
Consider the sincere preferences in Table \ref{tab:NoEquil}.

\begin{table}[ht]	
	\centering\caption	
	{Preferences $\SP$ with no Minimally Dishonest Equilibrium. \label{tab:NoEquil}}	
	{\begin{tabular}{|r l |r l |}	
			\hline
			${m_1}:$&$w_1,w_2,w_3,\mathbf{m_1}$&
			${w_1}:$&$m_3,m_1,m_2,\mathbf{w_1}$\\
			${m_2}:$&$w_2,w_3,w_1,\mathbf{m_2}$&
			${w_2}:$&$m_1,m_2,m_3,\mathbf{w_2}$\\
			${m_3}:$&$w_3,w_1,w_2,\mathbf{m_3}$&
			${w_3}:$&$m_2,m_3,m_1,\mathbf{w_3}$\\
			\hline
	\end{tabular} }	
	{}	
\end{table}

With respect to $\SP$, there are two stable matchings $\mu_1$ and $\mu_2$ where $\mu_1(m_1)=w_1$, $\mu_1(m_2)=w_2$, and $\mu_1(m_3)=w_3$; and $\mu_2(m_1)=w_2$, $\mu_2(m_2)=w_3$, and $\mu_2(m_3)=w_1$. 
Suppose $\PP$ is a minimally dishonest equilibrium.
By Theorem \ref{thm:Main}, $r(\PP)$ is either $\mu_1$ or $\mu_2$.  Without loss of generality suppose that $r(\PP)=\mu_1$. 

According to Corollary \ref{cor:CompleteOrder}, each man is honest about his first selection and each woman is honest about her first two selections.  Therefore $\PP$ is consistent with Table \ref{tab:NoEquil2}. 

\begin{table}[ht]	
	\centering\caption	
	{Beginning of $\PP$. \label{tab:NoEquil2}}	
	{\begin{tabular}{|r l |r l |}	
			\hline
			$\PP_{m_1}:$&$w_1,\mathbf{m_1}$&
			$\PP_{w_1}:$&$m_3,m_1,\mathbf{w_1}$\\
			$\PP_{m_2}:$&$w_2,\mathbf{m_2}$&
			$\PP_{w_2}:$&$m_1,m_2,\mathbf{w_2}$\\
			$\PP_{m_3}:$&$w_3,\mathbf{m_3}$&
			$\PP_{w_3}:$&$m_2,m_3,\mathbf{w_3}$\\
			\hline
	\end{tabular} }	
	{}	
\end{table}

First we claim that at least one man excludes his woman-optimal partner.  If not, then $\mu_2$ is stable with respect to $\PP$ contradicting Corollary \ref{cor:one}.  Without loss of generality we assume that this man is $m_1$ and $\PP$ is consistent with Table \ref{tab:NoEquil3}.  

\begin{table}[ht]	
	\centering\caption	
	{Updated $\PP$. \label{tab:NoEquil3}}	
	{\begin{tabular}{|r l |r l |}	
			\hline
			$\PP_{m_1}:$&$w_1,\mathbf{m_1},w_2$&
			$\PP_{w_1}:$&$m_3,m_1,\mathbf{w_1}$\\
			$\PP_{m_2}:$&$w_2,\mathbf{m_2}$&
			$\PP_{w_2}:$&$m_1,m_2,\mathbf{w_2}$\\
			$\PP_{m_3}:$&$w_3,\mathbf{m_3}$&
			$\PP_{w_3}:$&$m_2,m_3,\mathbf{w_3}$\\
			\hline
	\end{tabular} }	
	{}	
\end{table}

Similar to the proof of Theorem \ref{thm:Neg} we can conclude that $\mu_1$ is the woman-optimal matching regardless of how $m_2, m_3, w_1, w_2$ and $w_3$ fill out the remainder of their preferences.  Therefore by minimal dishonesty, everyone but $m_1$ is honest and the $\PP$ is consistent with Table \ref{tab:NoEquil4}. 
\begin{table}[ht]	
	\centering\caption	
	{Minimally Dishonest $\PP$.\label{tab:NoEquil4}}	
	{\begin{tabular}{|r l |r l |}	
			\hline
			$\PP_{m_1}:$&$w_1,\mathbf{m_1},w_2$&
			$\PP_{w_1}:$&$m_3,m_1,m_2,\mathbf{w_1}$\\
			$\PP_{m_2}:$&$w_2,w_3,w_1,\mathbf{m_2}$&
			$\PP_{w_2}:$&$m_1,m_2,m_3,\mathbf{w_2}$\\
			$\PP_{m_3}:$&$w_3,w_1,w_2,\mathbf{m_3}$&
			$\PP_{w_3}:$&$m_2,m_3,m_1,\mathbf{w_3}$\\
			\hline
	\end{tabular} }	
	{}	
\end{table}	

Moreover, $m_1$ submits either ($\PP_{m_1}:\ w_1, w_3, \mathbf{m_1}, w_2$) or ($\PP_{m_1}:\ w_1, \mathbf{m_1}, w_2, w_3$). 
In either case, $m_1$ can be more honest by swapping $\mathbf{m_1}$ with $w_2$ or $w_3$ respectively, instead submitting ($\PP_{m_1}': \  w_1, w_3, w_2, \mathbf{m_1}$).

Only $\mu_1$ and $\mu_2$ are stable with respect to $[\PP_{-m_1},\PP'_{m_1}]$. Moreover, $r([\PP_{-m_1},\PP'_{m_1}])= \mu_1$.  
Thus $m_1$ can get at least as good an outcome by submitting the more honest $\PP'_{m_1}$ contradicting that $m_1$ is (locally) minimally dishonest. 
\end{proof}

In this section, we also characterize the set of minimally dishonest equilibria obtained when $r(\cdot)$ is the Gale-Shapley algorithm. 
We prove that the sincere woman-optimal matching will always be obtained when individuals are (locally) minimally dishonest.

\begin{restatable}{theorem}{GSWom}\label{thm:GSFavorsWomen}Suppose individuals are strategic and (locally) minimally dishonest. If $r(\cdot)$ is the Gale-Shapley (man-optimal) algorithm, then (i) there always exists a (locally) minimally dishonest equilibrium and (ii) every minimally dishonest equilibrium yields the sincere women-optimal matching.\end{restatable}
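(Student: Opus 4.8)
The plan is to exploit that the Gale–Shapley (man-optimal) algorithm is deterministic, monotonic, and INS, so all of the structural machinery already established applies: by Theorem~\ref{thm:Main} the outcome $\mu=r(\PP)$ of any (locally) minimally dishonest equilibrium $\PP$ is stable with respect to $\SP$, by Corollary~\ref{cor:one} $\mu$ is the unique matching stable with respect to $\PP$, and by Corollary~\ref{cor:CompleteOrder} every individual is honest up to their assigned partner. The first reduction I would carry out is that \emph{every man is completely honest} at such an equilibrium. The key observation is that in man-proposing Gale–Shapley a man never proposes to a woman he ranks below his eventual partner $\mu(m)$, so the entire execution of the algorithm is independent of how $m$ orders the women below $\mu(m)$ (and of where his self-marker $\mathbf{m}$ sits below $\mu(m)$). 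Hence $m$ can replace his tail by the sincere order without changing $\mu$ at all; since he is already honest through $\mu(m)$ by Corollary~\ref{cor:CompleteOrder}, this would be a strictly more honest report yielding the identical outcome, so (locally) minimal dishonesty forces $\PP_m=\SP_m$. A single adjacent swap suffices to expose any tail dishonesty, so this holds for the local notion as well.

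For part (ii), with all men honest I would suppose for contradiction that $\mu\neq\mu_W$, where $\mu_W$ denotes the sincere woman-optimal stable matching. Because $\mu_W$ is simultaneously the best stable partner for every woman, each woman weakly prefers $\mu_W(w)$ to $\mu(w)$, with some woman $w_0$ strictly preferring $\mu_W(w_0)\,\SP_{w_0}\,\mu(w_0)$. Since each woman is honest through $\mu(w)$, and $\mu_W(w)$ sits weakly above $\mu(w)$, every woman's report agrees with $\SP_w$ at least down through $\mu_W(w)$. I would then consider $w_0$ deviating by truncating her honest list immediately after $\mu_W(w_0)$, and argue this secures her a partner she weakly prefers to $\mu_W(w_0)$, hence strictly preferred to $\mu(w_0)$, contradicting that $\PP$ is a Nash equilibrium. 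The value of this deviation I would establish by comparison with the benchmark profile $Q$ in which men are honest and \emph{every} woman truncates exactly after her woman-optimal partner, under which man-proposing Gale–Shapley yields exactly $\mu_W$ \cite{GaleSotomayor85}. In the equilibrium $\PP$ the other women's reports are honest down to $\mu(w)$, hence no longer than their truncations in $Q$; invoking monotonicity of man-proposing Gale–Shapley in the lengths of the other agents' acceptable sets (shorter lists force more proposals toward $w_0$), $w_0$'s guaranteed partner under her truncation is weakly better than under $Q$, namely at least $\mu_W(w_0)$.

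For part (i) I would exhibit the explicit profile $\PP^{\ast}$ with men honest and each woman truncated after $\mu_W(w)$. Then $\mu_W$ is the unique stable matching with respect to $\PP^{\ast}$ and $r(\PP^{\ast})=\mu_W$. It is a Nash equilibrium: each man receives his man-optimal partner with respect to $\PP^{\ast}$, and no man can improve on his man-optimal partner \cite{GaleShapley62}, while each woman receives her woman-optimal partner and, against honest men, cannot obtain anything she prefers to her woman-optimal stable partner. To guarantee (locally) minimal dishonesty I would feed $\PP^{\mu_W}$ to Algorithm~\ref{Nash}: Lemma~\ref{lem:AlgorithmTerminates} gives termination, and Lemmas~\ref{lem:InvarientEquil} and~\ref{lem:InvarientDishonest} carry over to Gale–Shapley, where the woman-side of the equilibrium invariant is validated precisely because $\mu_W$ is woman-optimal and women cannot beat it against honest men. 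The output is the desired (locally) minimally dishonest equilibrium with outcome $\mu_W$; by part (ii) it could not be anything else.

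The main obstacle is the monotonicity comparison at the heart of part (ii): rigorously converting ``the other women report lists no longer than in $Q$'' into ``$w_0$ can secure at least $\mu_W(w_0)$,'' measured with respect to her \emph{true} preferences and while ruling out the possibility that her truncation leaves her unmatched. This requires a careful analysis of how shortening one set of agents' acceptable lists propagates through the man-proposing execution, and it is the same underlying fact---that against honest men a woman can force her woman-optimal partner but never exceed it---that powers both the profitable deviation in (ii) and the best-response verification in (i). I expect establishing this comparison cleanly, rather than the bookkeeping around honesty and the algorithm, to be where the real work lies.
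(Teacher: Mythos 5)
Your part (i) and your reduction that all men report honestly coincide with the paper's proof: the paper likewise feeds the profile with honest men and women truncated at their woman-optimal partners into Algorithm~\ref{Nash}, and likewise observes that man-proposing Gale--Shapley never reads a man's list past $\mu(m)$, so minimal dishonesty forces $\PP_m=\SP_m$. The genuine gap is at the core of part (ii). Your deviation argument hinges on the claim that at equilibrium the other women's reports are ``no longer than their truncations in $Q$,'' but the inclusion goes the other way: under your contradiction hypothesis each woman $w$ has $\mu_W(w)$ weakly sincerely above $\mu(w)$, and Corollary~\ref{cor:CompleteOrder} makes $\PP_w$ honest all the way down through $\mu(w)$, so $w$'s putative acceptable set \emph{contains} her list in $Q$ (all men weakly above $\mu_W(w)$, plus possibly more below $\mu(w)$). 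Granting your own comparative static---that shorter lists by the others push more rejected men toward $w_0$---the move from $Q_{-w_0}$ to the weakly longer lists $\PP_{-w_0}$ therefore weakly \emph{hurts} $w_0$, so the comparison with $Q$ delivers only an upper bound of $\mu_W(w_0)$ on what her truncation achieves, not the lower bound you need; in particular it cannot rule out that the truncation leaves her unmatched. That is precisely the step you defer to ``careful analysis,'' so the profitable deviation, and hence part (ii), is not established.

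The step can be closed without any list-length monotonicity. Show that $\mu_W$ is stable with respect to the deviated profile $[\PP_{-w_0},\PP'_{w_0}]$, where $\PP'_{w_0}$ is $w_0$'s sincere list truncated after $\mu_W(w_0)$: any putative blocking pair $\{m,w\}$ satisfies $w\SP_m\mu_W(m)$ because men are honest, and $m\SP_w\mu_W(w)$ because $\mu_W(w)$ lies inside $w$'s honest prefix (or inside $w_0$'s sincere truncation), so the pair would block $\mu_W$ sincerely---impossible. The Rural Hospital theorem then forces every matching stable with respect to the deviated profile, including the Gale--Shapley output, to match $w_0$, and her truncated list contains only men she sincerely weakly prefers to $\mu_W(w_0)$, giving the strict improvement over $\mu(w_0)$ that contradicts Nash equilibrium. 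The paper's own proof short-circuits even this: the same blocking-pair transfer applied to $\PP$ itself shows that $\mu_W$ is stable with respect to the \emph{equilibrium} profile, and Corollary~\ref{cor:one} says $\PP$ admits exactly one putative stable matching, namely $r(\PP)=\mu$; hence $\mu=\mu_W$ directly, with no deviation, no Rural Hospital theorem, and no analysis of the proposal dynamics at all.
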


\begin{proof} 
The proof of existence follows identically to Theorem \ref{thm:MGhasEQ}. 
Let $\PP_m=\SP_m$ for all $m$ and $\PP_w$ be obtained by truncating $\SP_w$ after $w$'s sincere woman optimal partner for all $w$. 
Only Lemma \ref{lem:InvarientEquil} specifically uses properties of the truly random mechanism. 
Specifically, Lemma \ref{lem:InvarientEquil} relies on $r(\cdot)$ selecting a matching that is not woman-optimal with positive probability. 
This property also holds for the Gale-Shapley (man-optimal) algorithm.
Therefore, when given the Gale-Shapley algorithm, $\SP$, and $\PP$ Algorithm \ref{Nash} outputs a (locally) minimally dishonest equilibrium $\PP$ where $r(\PP)$ is the sincere woman-optimal matching.

It remains to show that if $\PP$ is a (locally) minimally dishonest equilibrium then $r(\PP)$ is the sincere woman-optimal stable matching.  
The Gale-Shapley algorithm is monotonic and INS and thus any equilibrium must yield a sincerely stable matching. 
Suppose that $r(\PP)=\mu$ for a (locally) minimally dishonest equilibrium $\PP$.

We begin by showing that the (locally) minimally dishonest refinement implies that all men will be honest at every equilibrium.  
By Corollary \ref{cor:CompleteOrder}, $\PP_m$ agrees with $\SP_m$ up to $\mu(m)$ for man $m$.  
The Gale-Shapley algorithm only examines $\PP_m$ through $\mu(m)$, therefore the outcome is the same regardless of what appears after $\mu(m)$ in $\PP_m$.  
Therefore by (locally) minimal dishonesty, everything that appears after $\mu(m)$ in $\PP_m$ is consistent with $\SP_m$ and $m$ is completely honest.

Now since the men are honest and every woman is honest up to her partner in $\mu$, by Corollary \ref{cor:CompleteOrder},  the woman-optimal matching is stable with respect to the putative preferences.  
By Corollary \ref{cor:one}, there is only one matching stable with respect to the putative preferences and therefore $\mu$ is the woman-optimal matching.
\end{proof}

\section{Extensions}\label{sec:extensions}

In this section we examine extensions of SSM and minimal dishonesty.  We first consider the college admissions problem. In this setting we label the women as ``colleges" and men as ``students".  Unlike the stable matching problem, each college is allowed to match multiple students.  Each college has a quota indicating the maximum number of students that they are willing to match. We show that no stable matching mechanism can guarantee stability in the college admissions problem.

The student placement problem is a special case of the college admissions problem where each college is honest. 
Since the college admissions problem can be modeled by a stable matching problem by duplicating the colleges, the student placement problem is also a special case of the stable matching problem where all women are honest. 
We show that Theorem \ref{thm:Main} does not extend to this setting.
We then consider a more general model of the stable matching problem where any subset of individuals are allowed to be honest and show that the Gale-Shapley and Fully-Randomized algorithms still guarantee stability. 

We also examine coalitions of players and show that every minimally dishonest equilibrium is a strong equilibrium and therefore all our results extend when players are allowed to collude.
We then compare the minimally dishonest best response to a truncated best response.  
While the set of submitted preferences have similar properties in both settings, we establish that the two solution concepts are distinct.

We also consider the partial honesty and truncation refinements  discussed in Section \ref{sec:other}.
We show that the partial honesty refinement fails to guarantee stability when individuals are strategic.
While the truncation refinement guarantees stability, we show that the Nash equilibria are distinct from those obtained by minimal dishonesty.

Finally, we consider an alternative method of submitting preferences.  
Specifically, we consider how the equilibrium changes if a woman $w$ submits a total ordering on  the subset of $M$ she is willing to match (a truncation) instead of a total ordering on $M\cup \{w\}$.  
This alternative method of submitting preferences results in two new ways to measure honesty.
We show our results still hold in this setting.

\subsection{College admissions problem}

Roth has long claimed that the admissions problem is significantly different than the stable matching problem by showing that unlike the stable matching problem, a single college is able to alter their preferences to obtain a matching that they prefer to the college-optimal matching \cite{Roth85}. Very few of our results extend to the college admissions problem. 
We provide the unsettling result that  no stable matching mechanism can guarantee a sincere stable matching is selected at a (locally) minimally dishonest equilibrium and, like Roth, we must emphasize that the college admissions problem is different than the stable matching problem.

Without knowing colleges preferences between groups of students, we cannot cannot guarantee that a matching is deterministically selected at every equilibrium (Lemma \ref{lem:deterministic}).  Consider a set of preferences where a college has a $50\%$ chance of obtaining their 1st and 4th choice in students and  a $50\%$ chance of obtaining their 2nd and 3rd choice in students.  If the college is indifferent between these two outcomes, these preferences may correspond to an equilibrium. 

The most unsettling disparity between the stable matching problem and the college admissions problem is that we may obtain a matching at an equilibrium that is not sincerely stable, as shown in Theorem \ref{thm:CollegeAdmission}.  Mimicking Roth, Theorem \ref{thm:CollegeAdmission} demonstrates there is a set of preferences where a college can alter their preferences to obtain a matching that they prefer to the college-optimal matching even at a (locally) minimally dishonest equilibrium, regardless of the stable matching mechanism used. This implies that no stable matching mechanism  guarantees that a sincere stable matching is selected at every (locally) minimally dishonest equilibrium.

\begin{theorem}\label{thm:CollegeAdmission} When students and colleges are strategic, no stable matching mechanism always yields an equilibrium matching that is sincerely stable, with or without the (locally) minimal dishonesty refinement.
\end{theorem}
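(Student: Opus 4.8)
The plan is to mimic Roth's demonstration \cite{Roth85} that, in the college admissions problem, a college can misreport its preferences to secure a cohort of students it strictly prefers to the one it receives in the college-optimal stable matching. The single structural fact I will lean on is that, for responsive college preferences, the college-optimal stable matching ranks each college's assigned cohort weakly above the cohort it receives in any other sincerely stable matching. Consequently, if I can exhibit an instance and a college $C$ (with quota at least two) that obtains, via a lie, a matching $\mu^*$ in which $C$ strictly prefers its cohort $\mu^*(C)$ to $\mu_C(C)$, where $\mu_C$ is the college-optimal sincerely stable matching, then $\mu^*$ cannot itself be sincerely stable: otherwise college-optimality would force $\mu_C(C)$ to be weakly preferred to $\mu^*(C)$, a contradiction. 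This converts ``$C$ beats college-optimal'' directly into ``the equilibrium outcome is sincerely unstable.''

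Concretely, I would first write down an explicit admissions instance (colleges cast as polyandrous women, students as men, as in Section \ref{sec:extensions}) in which honest play yields $\mu_C$ but $C$ has a profitable deviation $\spi_C$. The deviation has $C$ under-report a student it genuinely wants (declaring it unacceptable, or ranking it below its assigned cohort), which reshapes the rejection chains so that $C$ captures the more preferred cohort $\mu^*(C)$. I would choose the instance so that $\mu^*$ is the \emph{unique} matching stable with respect to the reported profile $\PP$ defined by $\PP_i=\SP_i$ for $i\neq C$ and $\PP_C=\spi_C$. Uniqueness is what makes the conclusion hold ``regardless of the stable matching mechanism'': every $r(\cdot)$ must output $\mu^*$, and since $p_{Cj}(\PP)\in\{0,1\}$ for all $j$ the non-determinism obstruction noted before Lemma \ref{lem:deterministic} never arises.

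Next I would verify that $\PP$ is a (locally) minimally dishonest Nash equilibrium. For the equilibrium property, every student and every college other than $C$ is honest and already best-responding against $\PP$ (students cannot improve on their stable assignment, and the remaining colleges are at their best forceable cohorts); $C$ is best-responding because $\mu^*(C)$ is the most preferred cohort it can force. For minimal dishonesty, all agents but $C$ report truthfully, and I would check that $C$'s lie cannot be made more honest in Kendall-Tau distance without reintroducing a stable matching in which $C$ does strictly worse, so no more-honest $\PP_C'$ yields an outcome $C$ weakly prefers. Because this one profile is simultaneously an equilibrium and minimally dishonest, it settles the theorem ``with or without'' the refinement in a single stroke: dropping the refinement only enlarges the equilibrium set.

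The main obstacle is the construction itself. Unlike the one-to-one setting -- where Corollary \ref{cor:CompleteOrder} forces honesty up to one's spouse and Theorem \ref{thm:Main} then delivers stability -- a college of quota at least two evaluates \emph{sets}, and the delicate point is to arrange preferences so that sacrificing one wanted student strictly improves the captured cohort while still leaving $\mu^*$ as the unique reported-stable matching; this is exactly the phenomenon that is impossible for a single agent and is the crux of Roth's separation. A secondary obstacle is confirming that the equilibrium is robust: I must rule out profitable deviations by the displaced student and by the other colleges, and confirm that $C$'s lie is genuinely minimal, since a non-minimal lie would be excluded by the refinement and would therefore fail to establish the ``with refinement'' half of the claim.
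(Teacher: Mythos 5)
Your plan has a genuine gap, and it sits exactly where you placed your confidence: the claim that one reported profile $\PP=[\SP_{-C},\spi_C]$ can be certified as a (locally) minimally dishonest equilibrium \emph{regardless of the mechanism}. Uniqueness of the reported-stable matching $\mu^*$ does pin down $r(\PP)$ for every $r$, but both the Nash property and, fatally, minimal dishonesty are defined through the mechanism's behavior at \emph{deviation} profiles: by Definition \ref{def:GlobalMD}, $C$ is minimally dishonest only if every strictly more honest report $\spi'_C$ places a strictly worse matching in the \emph{support} of $r([\SP_{-C},\spi'_C])$. Your verification criterion --- ``no more-honest report can be made without reintroducing a stable matching in which $C$ does strictly worse'' --- checks the stable \emph{set} at the deviation, not the support, and these differ for deterministic mechanisms. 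Concretely, in Roth's instance ($\SP_{c_1}: s_1,s_2,s_3,s_4,\mathbf{c_1}$, quota $2$): to make $\mu^*$ with $\mu^*(c_1)=\{s_1,s_4\}$ the \emph{unique} reported-stable matching, $c_1$ must declare $s_3$ unacceptable (otherwise the sincere stable matching $\mu$ with $\mu(c_1)=\{s_3,s_4\}$ remains stable, since $s_1$ and $s_2$ sit at their sincere first choices), which costs at least the two inversions $\{s_3,s_4\}$ and $\{s_3,\mathbf{c_1}\}$. But the strictly more honest one-inversion report $(s_1,s_2,s_4,s_3,\mathbf{c_1})$ keeps $\mu^*$ reported-stable, and the college-proposing deferred acceptance algorithm outputs exactly $\mu^*$ on that profile: its support is $\{\mu^*\}$, containing nothing worse for $c_1$. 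So under the college-optimal mechanism your $C$ is not minimally dishonest at $\PP$; conversely, the one-inversion lie is not even a Nash equilibrium under the student-proposing mechanism (there it yields $\mu$, and $c_1$ profitably deviates to the truncation). No single profile serves both mechanisms, so the ``in a single stroke'' step cannot be completed.

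The paper escapes this by never constructing an equilibrium at all. It fixes an arbitrary mechanism $r$, supposes for contradiction that some (locally) minimally dishonest equilibrium $\PP$ satisfies $r(\PP)=\mu$, the unique sincerely stable matching of Roth's instance, and then uses the equilibrium property together with minimal dishonesty to force $\PP=\SP$ in stages (uniqueness of the reported-stable matching, honesty of every quota-one agent up to their $\mu$-partner, then full honesty of $c_1$, then of everyone). Since $\SP$ is not a Nash equilibrium --- Roth's deviation $(\spi'_{c_1}: s_1,s_4,\mathbf{c_1},s_2,s_3)$ wins $c_1$ the cohort $\{s_1,s_4\}$ \cite{Roth85} --- no such equilibrium exists for any $r$. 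Note this also proves the theorem under the reading ``no minimally dishonest equilibrium yields a sincerely stable matching,'' which covers mechanisms admitting no equilibrium whatsoever; your route needs the strictly stronger claim that a sincerely unstable equilibrium \emph{exists} for every mechanism, and that is precisely the part that cannot be made mechanism-independent. A constructive argument can only be carried out per fixed mechanism, as the paper does elsewhere (e.g., Proposition \ref{prop:PartialHonesty}) for the uniform-random rule.
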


\begin{proof} The result without the refinement follows immediately from Lemma \ref{lem:deterministic} where each college has a capacity of one. Now consider the sincere preferences $\SP$ given in Table \ref{tab:College} where college $c_1$ has capacity for two students and colleges $c_2$ and $c_3$ have room for only one student.
These preferences were originally presented in \cite{Roth85} to show that a college can manipulate its preferences to obtain an outcome superior to the college-optimal stable matching.
The only stable matching with respect to these preferences is $\mu$ where $\mu(c_1)=\{s_3,s_4\}$, $\mu(c_2)=s_2$, and $\mu(c_3)=s_1$.

\begin{table}[ht]	
	\centering\caption	
	{Preferences $\SP$ for Theorem \ref{thm:CollegeAdmission}. \label{tab:College}}	
	{\begin{tabular}{|r l |r l |}	
			\hline
			$\SP_{c_1}:$&$s_1,s_2,s_3,s_4,\mathbf{c_1}$&
			$\SP_{s_1}:$&$c_3,c_1,c_2,\mathbf{s_1}$\\
			$\SP_{c_2}:$&$s_1,s_2,s_3,s_4,\mathbf{c_2}$&
			$\SP_{s_2}:$&$c_2,c_1,c_3,\mathbf{s_2}$\\
			$\SP_{c_3}:$&$s_3,s_1,s_2,s_4,\mathbf{c_3}$&
			$\SP_{s_3}:$&$c_1,c_3,c_2,\mathbf{s_3}$\\
			& &
			$\SP_{s_4}:$&$c_1,c_2,c_3,\mathbf{s_4}$\\
			\hline
	\end{tabular} }	
	{}	
\end{table}	

For contradiction, suppose there is a mechanism $r(\cdot)$ and a (locally) minimally dishonest equilibrium $\PP$ where $r(\PP)=\mu$. 
We now that show there are enough completely honest colleges and students to guarantee that $c_1$ is able to alter $\PP_{c_1}$ to obtain a matching that $c_1$ prefers to $\mu$, contradicting that $\PP$ is an equilibrium.

Similar to Theorem \ref{thm:Neg} and Proposition \ref{prop:NoEquil} we gradually reveal the structure of $\PP$ until we are able to show that it is not an equilibrium.
We begin by showing that there is only one matching stable with respect to $\PP$.  
This allows us to extend Corollary \ref{cor:CompleteOrder} to this setting for every individual that is allowed only one spouse (i.e. everyone but $c_1$ is guaranteed to be honest up their $\mu$-partner).  
This implies that everyone will be completely honest ($\SP=\PP$). 
However, we show that $\SP$ is not a Nash equilibrium, a contradiction. 

First we show that $\mu$ is the only matching stable with respect to $\PP$:  Let $\mu'$ be a matching stable with respect to $\PP$.  By the Rural Hospital Theorem, $c_1$ has two partners in $\mu'$.  If $c_1$ submits $s_i \PP'_{c_1} c_1$ if and only if $s_i\in \mu'(c_1)$ then $c_1$ is assigned $\mu'(c_1)$.  $\PP$ is a Nash equilibrium implying $c_1$ cannot strictly prefer this outcome and therefore $\mu(c_1)=\mu'(c_1)$.  This implies $\mu'(c_2)\in \{s_1,s_2\}$.  Through the same reasoning, $\mu(c_2)=\mu'(c_2)$ and $\mu=\mu'$ completing the first claim. 

Next, colleges $c_2$ and $c_3$ and all students are honest up to their $\mu$-partner: This follows in the same fashion as Corollary \ref{cor:CompleteOrder}.  The proof of Corollary \ref{cor:CompleteOrder} only relies on (i) each individual is allowed a single spouse and (ii) there is a single stable matching with respect to $\PP$. In Section \ref{sec:Mono}, we establish (ii) by using either the INS and monotonicity properties or the Fully-Randomized property.  However, we were able to establish this in the previous claim without either of these properties.  Furthermore, colleges $c_2$ and $c_3$ and all students are allowed a single spouse completing the second claim. 

We now show $c_1$ is completely honest.  
Suppose instead $\PP_{c_1}\neq \SP_{c_1}$.
Since $\mu$ matches $c_1$ to $\{s_3,s_4\}$, $s_3,s_4 \PP_{c_1} \mathbf{c_1}$. 
However, since $\PP_{c_1}\neq \SP_{c_1}$, at least one of the following holds: (i) $s_4\PP_{c_1}s_3$, (ii) $\mathbf{c_1}\PP_{c_1}s_1$ (iii) $\mathbf{c_1}\PP_{c_1}s_2$, (iv) $s_2 \PP_{c_1}s_1$, or (v) $s_3\PP_{c_1}s_2\PP_{c_1}\mathbf{c_1}$.
Consider case (i) and suppose $c_1$ swaps $s_4$ and $s_3$ in $\PP_{c_1}$. 
By the previous claim, each of the students honestly report their first choice and therefore $\mu$ remains stable. 
Therefore, by the Rural Hospital theorem $c_1$ is still matched with two students after swapping $s_4$ and $s_3$.  
Since $s_3$ and $s_4$ are his least preferred students, he obtains at least as good of an outcome after becoming more honest, a contradiction to (locally) minimal dishonesty. 
Therefore $s_3\PP_{c_1}s_4$.
An identical argument works for the four remaining cases and therefore $\PP_{c_1}=\SP_{c_1}$. 


Next, we show that everyone else is honest. 
Similar to the proof of Theorem \ref{thm:Neg} we can complete the Gale-Shapley algorithm on $\PP$ without knowing the remaining details of $\PP$.  Using only the details we have available ($c_1$ is honest and everyone else is honest up to their spouse), we are able to conclude that $\mu$ is the only matching stable regardless of how each individual fills out their remaining preferences.  Thus, by (locally) minimal dishonesty, each of these individuals are honest and the minimally dishonest equilibrium is the sincere profile $\SP$. 

However, $\PP=\SP$ is not an equilibrium since college $c_1$ can update his preferences to ($\PP'_{c_1}: s_1,  s_4, \mathbf{c_1}, s_2, s_3$) in order to obtain the matching $\mu'$ where $\mu'(c_1)=\{s_1,s_4\}$, $\mu'(c_2)=s_2$, $\mu'(c_3)=s_3$, a matching that $c_1$ prefers to $\mu$.  This contradicts that $\PP$ is an equilibrium. Therefore there is no mechanism that guarantees a sincere stable matching for the college admissions problem.\end{proof}

\subsection{Truth-tellers and the student placement problem} 

In practice, some individuals may prefer to be honest regardless of whether they can manipulate their preferences to obtain a partner they strictly prefer.  Such individuals are often called  \emph{truth-tellers}, and there is experimental evidence that they exist \cite{Gneezy2005,WouldILie}. 
The student placement problem is an instance of the stable matching (college admissions) problem where all women (colleges) are truth-tellers and where all the men (applicants) are strategic.

Lemma \ref{lem:deterministic} does not hold in this setting and a matching is not always selected deterministically at an equilibrium. If all individuals are truth-tellers and the decision mechanisms selects a matching uniformly at random, then at the unique equilibrium ${\PP}=\SP$ every sincere stable matching has positive probability of being selected. As a result, we do not treat $r({\PP})$ as a singleton in this section. However, through the same proof technique in Lemma \ref{lem:deterministic}, we can prove  that if an individual is strategic then their partner is selected deterministically at every equilibrium.

Corollary \ref{cor:one} also does not necessarily hold in this setting and there may be more than one matching stable with respect to the equilibrium preferences. 
As a result we cannot guarantee that a strategic individual is honest up to their spouse (Corollary \ref{cor:CompleteOrder}). 
Without Corollary \ref{cor:CompleteOrder}, we cannot extend Theorem \ref{thm:Main} to the student placement problem:

\begin{proposition}
	Suppose each individual is either a truth-teller or strategic and (locally) minimally dishonest.
	If $r(\cdot)$ selects an egalitarian stable matching uniformly at random with respect to the submitted preferences, then the outcome SSM may be sincerely unstable.
	\label{prop:sadface}
\end{proposition}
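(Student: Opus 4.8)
The claim is an existence-of-a-bad-example statement (``may be sincerely unstable''), so the plan is to build an explicit counterexample rather than argue in general: I will exhibit a small instance (three men, three women suffices in spirit, as in Proposition~\ref{prop:NoEquil} and Table~\ref{tab:NoEquil}), designate all women as truth-tellers, and write down a putative profile $\PP$ in which the women report honestly and the men lie as little as possible. First I would record the structural facts available in this setting. By the truth-teller analogue of Lemma~\ref{lem:deterministic} noted above, every strategic man's partner is selected deterministically at an equilibrium; since all men are strategic, $r(\PP)$ is in fact a single matching $\mu$. To make $\mu$ sincerely unstable I plant a blocking pair $\{m,w\}$: man $m$ conceals it by ranking $\mu(m)$ above $w$ in $\PP_m$, while the truth-teller $w$ honestly ranks $m$ above $\mu(w)$. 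Because the women are honest, the only way $\mu$ can be stable with respect to $\PP$ but unstable with respect to $\SP$ is through exactly such a concealed man-side inversion.

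The conceptual point that makes this possible---and that explains why Theorem~\ref{thm:Main} breaks---is that Corollary~\ref{cor:one} fails here. Its proof produces, from a second stable matching $\mu_2$, an individual $z$ with $\mu_2(z)\,\PP_z\,\mu_1(z)$ who could profitably deviate; but if $z$ is a truth-teller woman she cannot deviate to enforce her preferred stable partner, so multiple matchings can be stable with respect to $\PP$, and with them Corollary~\ref{cor:CompleteOrder} no longer forces the men to be honest up to their spouses. Accordingly I would choose the preferences so that (i) at least two matchings are stable with respect to $\PP$, one of them being the sincerely unstable $\mu$, with a woman $w$ whose favourite man is ``taken'' so she is stuck with $\mu(w)$; and (ii) the egalitarian-random rule selects $\mu$ uniquely, i.e.\ $\mu$ is the unique minimum-egalitarian-cost stable matching of $\PP$, keeping every man's partner deterministic and $r(\PP)=\mu$.

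I would then verify the two equilibrium-type conditions. For the Nash property, I would check---via the man-optimal/woman-optimal stable matchings of $\PP$---that no man can report anything giving him a partner he strictly prefers to $\mu(m)$. For (locally) minimal dishonesty, the critical individual is the lying man $m$: his only more-honest single swap raises $w$ just above $\mu(m)$, which makes $\{m,w\}$ a blocking pair of $\mu$ and hence changes the stable set of the deviated profile. I must show that under this deviation the egalitarian-random rule gives $m$ \emph{positive probability of a partner strictly worse than $\mu(m)$ and never a strictly better one}. The lever is that moving $w$ up by one position changes $m$'s rank contribution to the egalitarian cost by exactly one, re-balancing which stable matching is egalitarian-optimal, so honesty ``tips'' the selection toward a matching that is worse for $m$ with positive probability; the remaining cases are handled by the same one-swap comparison applied to the other men, each of whom I will have made content at $\mu$ so that honesty is already optimal for them.

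The main obstacle is meeting all of these requirements in a single instance simultaneously, and the genuinely delicate one is the last. Naively exposing a concealed blocking pair tends to \emph{help} the deviator---he would simply grab $w$---which would destroy the equilibrium; to prevent this, $w$ must have a strictly preferred man $m'$ who, once $m$ signals interest, displaces $m$ in the resulting stable matchings, so the induced cascade leaves $m$ with a worse partner under the egalitarian selection while never delivering him $w$. Threading this needle amounts to tuning the egalitarian costs so the lie is \emph{just barely} necessary: the honest report must strictly risk a worse outcome without ever improving on $\mu(m)$. I would close the argument by enumerating the (few) stable matchings of $\PP$ and of each relevant one-swap deviation, comparing their egalitarian costs directly, thereby confirming both that $r(\PP)=\mu$ is sincerely unstable and that $\PP$ is a (locally) minimally dishonest equilibrium.
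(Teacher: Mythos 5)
Your proposal takes essentially the same route as the paper's proof: with the genders reversed (the paper makes the men truth-tellers and a lone strategic woman $w_3$ the liar), the paper exhibits exactly the instance you describe --- a $4\times 4$ profile whose two sincerely stable matchings are tied at egalitarian cost $14$, where $w_3$'s single swap of $m_2$ and $m_3$ conceals the blocking pair $\{m_2,w_3\}$, makes the sincerely unstable matching the unique putative egalitarian optimum (cost $13$), and secures her woman-optimal partner $m_3$ deterministically, whereas the only more honest report restores the $50/50$ lottery that risks the strictly worse $m_4$; every other strategic agent receives a first choice and the truth-tellers cannot deviate, which is precisely your ``content at $\mu$'' and ``blocking partner is taken'' requirements. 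The only caveat is your parenthetical that three men and three women suffice: the paper's construction needs the fourth pair to create the cost slack that threads the needle you correctly identify, but this does not affect the soundness of your plan.
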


\begin{proof}
The sincere preferences are given in Table \ref{tab:Contra}. 
The stable matchings with respect to $\SP$ are $\mu_1$ and $\mu_2$ from Table \ref{tab:ContraStab}.
The egalitarian cost of matching is found by charging $i$ points if an individual is matched to his or her $i$th choice. 
An egalitarian matching is found by selecting the stable matching with the lowest cost.
The only stable marriages are $\mu_1$ and $\mu_2$ and both have an egalitarian cost of 14. 
Therefore $r(\SP)$ selects between $\mu_1$ and $\mu_2$ uniformly at random.  

\begin{table}[ht]	
	\centering\caption	
	{Preferences $\SP$ for Proposition \ref{prop:sadface} \label{tab:Contra}}	
	{\begin{tabular}{|r l |r l |}	
			\hline
			$\SP_{m_1}:$&$w_2,w_1,\mathbf{m_1},w_3, w_4$&
			$\SP_{w_1}:$&$m_1,m_3,m_2,\mathbf{w_1},m_4$\\
			$\SP_{m_2}:$&$w_1,w_3,w_2,\mathbf{m_2}, w_4$&
			$\SP_{w_2}:$&$m_2,m_1,\mathbf{w_2},m_3,m_4$\\
			$\SP_{m_3}:$&$w_4,w_3,\mathbf{m_3},w_1,w_2$&
			$\SP_{w_3}:$&$m_2,m_3,m_4,\mathbf{w_3},m_1$\\
			$\SP_{m_4}:$&$w_3,w_4,\mathbf{m_4},w_1, w_2$&
			$\SP_{w_4}:$&$m_4,m_3,\mathbf{w_4},m_1,m_2$\\
			\hline
	\end{tabular} }	
	{}	
\end{table}	

\begin{table}[ht]	
	\centering\caption	
	{Matchings for Proposition \ref{prop:sadface} \label{tab:ContraStab}}	
	{\begin{tabular}{|r|l |}	
			\hline
			$\mu_1$ & $\mu_1(m_1)=w_2, \mu_1(m_2)=w_1, \mu_1(m_3)=w_4, \mu_1(m_4)=w_3$\\ 
			$\mu_2$ & $\mu_2(m_1)=w_2, \mu_2(m_2)=w_1, \mu_2(m_3)=w_3, \mu_2(m_4)=w_4$\\ 
			$\mu_3$ & $\mu_3(m_1)=w_1, \mu_3(m_2)=w_2, \mu_3(m_3)=w_3, \mu_3(m_4)=w_4$\\ 
			\hline
	\end{tabular} }	
	{}	
\end{table}

Suppose now that $w_3$ is strategic and that all men are truth-tellers.  
Woman $w_3$ can alter her preferences to obtain her woman-optimal partner, $m_3$, with probability one. 
Suppose instead that $w_3$ submits ($\PP_{w_3}: \ m_3, m_2, m_4, \mathbf{w_3}, m_1$). 
With respect to $[\SP_{-w_3},\PP_{w_3}]$ only $\mu_1$ and $\mu_3$ are stable. 
Moreover, the egalitarian cost of $\mu_1$ remains $14$ while the egalitarian cost of $\mu_3$ with respect to $[\SP_{-w_3},\PP_{w_3}]$ is $13$. 
Therefore $\mu_3$ is selected with probability one and $w_3$ is assigned her woman-optimal partner $m_3$. 

It is well known that $w_3$ cannot alter her preferences to obtain anything better than her woman-optimal partner therefore she is providing a best response.  
Moreover, she is (locally) minimally dishonest since $K(\PP_{w_3},\SP_{w_3})=1$. 
No other woman can alter her preferences to get a better outcome since all other women are receiving their first choice. 
No man can alter his preferences since all men are truth-tellers and therefore $[\SP_{-w_3},\PP_{w_3}]$ is a minimally dishonest equilibrium.
The pair $\{m_2,w_3\}$ blocks the matching $\mu_3$ with respect to $\SP$ and therefore $r([\SP_{-w_3},\PP_{w_3}])=\mu_3$ is unstable with respect to $\SP$ completing the proof of the proposition.
\end{proof}

Despite our main result not extending to this setting, we can still give a class of algorithms that guarantees stability when there are truth-tellers and minimally dishonest individuals. Specifically, the Fully-Randomized and Gale-Shapley algorithms still guarantee stability. 

\begin{restatable}{theorem}{MG}\label{thm:MG2}Suppose each individual is either a truth-teller or strategic and minimally dishonest.  If $r(\cdot)$ is Fully-Randomized, then (i) there there always exists a (locally) minimally dishonest equilibrium and (ii) for every (locally) minimally dishonest equilibrium ${\PP}$, $\mu$ is sincerely stable for all $\mu\in r({\PP})$.\end{restatable}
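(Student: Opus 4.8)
The plan is to mirror the proof of Theorem~\ref{thm:MGhasEQ}, isolating the two places where truth-tellers break the argument---the loss of Lemma~\ref{lem:deterministic} and of Corollary~\ref{cor:one}---and showing that the Fully-Randomized hypothesis is exactly strong enough to repair the consequences we actually need. I would treat part (ii) first, since it carries the conceptual weight.

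For part (ii), fix a (locally) minimally dishonest equilibrium $\PP$ and a matching $\mu$ in the support of $r(\PP)$. Although Lemma~\ref{lem:deterministic} fails, the proof technique behind it still shows (as already noted for the student-placement model) that every strategic individual is matched to the same partner in every outcome of $r(\PP)$; call it their \emph{deterministic partner}, which for $\mu$ in the support is $\mu(z)$. The key upgrade is that, because $r(\cdot)$ is Fully-Randomized, $p_{z\mu(z)}(\PP)=1$ forces $z$ to be matched to $\mu(z)$ in \emph{every} matching stable with respect to $\PP$, not merely in the support. This is precisely the property that the egalitarian-uniform mechanism of Proposition~\ref{prop:sadface} lacks, and it lets me recover the honest-up-to-spouse conclusion of Corollary~\ref{cor:CompleteOrder} for each strategic individual without appealing to the (now false) uniqueness of the stable matching. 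In the adjacent-swap argument, any matching that assigns $z$ a partner strictly worse than their deterministic one is unstable with respect to $\PP$, and I would verify that this instability survives the honesty-increasing swap---the only comparisons the swap can reverse involve the two swapped alternatives, and a short case check shows a blocking pair persists---so every matching stable with respect to the swapped profile still matches $z$ at least as well, contradicting minimal dishonesty. Truth-tellers need no such argument: they submit $\SP$ outright.

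With these facts in hand I would close part (ii) exactly as in Theorem~\ref{thm:Main}. Suppose $\{m,w\}$ sincerely blocks $\mu$. For each of $m$ and $w$, whichever the type, either the individual is a truth-teller and their putative list equals the sincere one, or they are strategic, matched by $\mu$ to their deterministic partner and honest up to it; in both cases $w\,\PP_m\,\mu(m)$ and $m\,\PP_w\,\mu(w)$. Hence $\{m,w\}$ blocks $\mu$ with respect to $\PP$, contradicting that $\mu$ lies in the support of $r(\PP)$ and is therefore putatively stable. Individual rationality with respect to $\SP$ follows from the same honest-up-to-spouse comparison, since $\mu(z)\,\PP_z\,z$ together with agreement of the prefixes up to $\mu(z)$ gives $\mu(z)\,\SP_z\,z$.

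For part (i) I would re-run Algorithm~\ref{Nash}, letting strategic individuals truncate their sincere lists just after their partner in a fixed sincerely stable $\mu$ while truth-tellers submit $\SP$, and correcting only strategic individuals' violations. Termination is unchanged: the potential $\phi(\SP,\PP)=\sum_{y}K(\SP_y,\PP_y)$ strictly decreases at each correction, and truth-tellers contribute $0$. The analogues of Lemmas~\ref{lem:InvarientEquil} and~\ref{lem:InvarientDishonest} would be re-established with the invariant ``$\PP$ is an equilibrium and each strategic individual has a deterministic partner across all $\PP$-stable matchings'' replacing ``$\mu$ is the unique stable matching''; sincere stability of the support need not be carried as an invariant, since part (ii) delivers it for free once $\PP$ is an equilibrium. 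I expect two obstacles, in order of difficulty. The primary one is the honest-up-to-spouse re-derivation sketched above: showing that an honesty-increasing adjacent swap cannot cost a strategic individual their deterministic partner once global uniqueness of the stable matching is gone---this is the step where the Fully-Randomized hypothesis is genuinely used. The secondary one is producing a valid starting equilibrium for the algorithm: with uniqueness lost, the man/woman-optimality argument of Lemma~\ref{lem:InvarientEquil} no longer pins down the initial profile, so I would need to verify via the Rural Hospital theorem that the truncated-plus-honest profile is Nash (not merely individually rational) for a suitably chosen $\mu$, and that the invariant is preserved through each iteration.
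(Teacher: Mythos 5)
Your proposal is correct and follows essentially the same route as the paper: your ``key upgrade'' (strategic individuals have deterministic partners by the Lemma~\ref{lem:deterministic} technique, and the Fully-Randomized property then forces agreement at every strategic individual across \emph{all} putatively stable matchings) is precisely the paper's Lemma~\ref{lem:one2}, your recovery of honest-up-to-spouse is Corollary~\ref{cor:CompleteOrder2}, and both parts are then closed exactly as in Theorems~\ref{thm:Main} and~\ref{thm:MGhasEQ}. The only difference is one of detail: the paper simply asserts that existence ``follows identically'' to Theorem~\ref{thm:MGhasEQ}, whereas you explicitly flag and adapt the needed modifications to Algorithm~\ref{Nash} (truth-tellers pinned at $\SP$, the invariant weakened from uniqueness of the stable matching to deterministic partners for strategic individuals).
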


\begin{restatable}{theorem}{GaleTwo}\label{thm:Gale2}Suppose each individual is either a truth-teller or strategic and (locally) minimally dishonest.  If $r(\cdot)$ is the Gale-Shapley (man-optimal) algorithm, then (i) there always exists a (locally) minimally dishonest equilibrium and (ii) every (locally) minimally dishonest equilibrium assigns every strategic woman her sincere women-optimal partner.\end{restatable}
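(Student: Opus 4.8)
The plan is to follow the template of Theorems \ref{thm:GSFavorsWomen} and \ref{thm:MG2}, replacing every appeal to Corollary \ref{cor:one} and Corollary \ref{cor:CompleteOrder} — both of which fail once truth-tellers are present — by arguments that survive. The one structural fact that does carry over is Lemma \ref{lem:PartialOrder}: its proof only uses that a strategic agent's partner is chosen deterministically (automatic here, since $r(\cdot)$ is Gale-Shapley), together with monotonicity, INS, and the agent's own local minimal dishonesty, none of which requires a unique putatively stable matching. Hence every strategic agent $z$ is still honest about the set of partners it ranks above $\mu(z)=r(\PP)(z)$. I would first use this to prove that \emph{every man is honest}: a truth-teller by definition, and a strategic man because, by the same fact invoked in Lemma \ref{lem:InvarientEquil} (no man can obtain a partner he prefers to his man-optimal partner), truthful reporting is weakly optimal, so submitting $\SP_m$ is both more honest than any $\PP_m\neq\SP_m$ and at least as good. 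For local minimal dishonesty I would combine Lemma \ref{lem:PartialOrder} with the observation that reordering the women a man prefers to $\mu(m)$ (all of whom reject him) and rearranging the tail past $\mu(m)$ leave his Gale-Shapley partner unchanged, so any honesty-increasing swap fails to hurt him strictly. Thus at equilibrium the submitted profile is $(\SP_m)_{m\in M}$ on the men's side.

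For existence, part (i), I would mimic the Algorithm \ref{Nash} construction of Theorem \ref{thm:MGhasEQ}. Initialize with all men and all truth-teller women honest and each strategic woman $w$ truncating $\SP_w$ immediately after her sincere woman-optimal partner $m^*_w$. Because the men are honest, the sincere woman-optimal matching is stable for this profile and, by the Rural Hospital theorem, each strategic woman stays matched; since she refuses everyone worse than $m^*_w$ while no sincerely stable matching gives her anything better, the man-optimal outcome assigns her exactly $m^*_w$. This profile is a Nash equilibrium (men play a dominant strategy and each strategic woman already attains her best achievable partner), so running Algorithm \ref{Nash} — whose only mechanism-specific requirement, per Lemma \ref{lem:InvarientEquil}, is that Gale-Shapley selects a non-woman-optimal matching with positive probability, which holds — terminates at a (locally) minimally dishonest equilibrium with the same outcome.

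For the characterization, part (ii), let $\PP$ be a (locally) minimally dishonest equilibrium and $\mu=r(\PP)$; by the first step all men are honest. I would first show $\mu$ is sincerely stable. It is individually rational, since any agent preferring to be single would declare its partner unacceptable. For blocking pairs, a sincere pair $\{m,w\}$ gives $w\,\SP_m\,\mu(m)$, hence $w\,\PP_m\,\mu(m)$ as $m$ is honest, so stability of $\mu$ under $\PP$ forces $\mu(w)\,\PP_w\,m$; if $w$ is a truth-teller this contradicts $m\,\SP_w\,\mu(w)$, and if $w$ is strategic she could move $m$ above $\mu(w)$ and, since the honest $m$ prefers $w$ to $\mu(m)$, capture a partner she sincerely prefers to $\mu(w)$, contradicting equilibrium. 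Sincere stability then yields the upper bound that no woman does better than her sincere woman-optimal partner $m^*_w$, because the sincere woman-optimal matching dominates every sincerely stable one. The lower bound — that each strategic woman does at least as well as $m^*_w$ — would again follow from a best-response deviation in which she truncates her sincere list just after $m^*_w$.

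The hard part will be making this final lower-bound step rigorous, and it is exactly where the loss of Corollary \ref{cor:one} bites. Because the other strategic women may submit false lists, the market a given strategic woman faces is not the sincere one, so the classical guarantee that truncating at one's woman-optimal partner secures that partner does not apply verbatim; it has to be recovered from the simultaneous equilibrium structure — using that, by Lemma \ref{lem:PartialOrder}, every strategic woman is honest on the portion of her list at and above her own assigned partner, and that truth-tellers are fully honest — together with the deferred-acceptance truncation theory. Establishing that these constraints pin the perceived market down tightly enough for the truncation deviation to recover $m^*_w$ is the crux; the remaining bookkeeping parallels Theorems \ref{thm:GSFavorsWomen} and \ref{thm:MG2}.
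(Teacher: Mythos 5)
Your proposal reproduces the paper's existence argument (part (i), via Algorithm \ref{Nash} initialized with honest men, honest truth-tellers, and strategic women truncated at their sincere woman-optimal partners) and the ``all men are honest'' step, but it does not actually prove part (ii): you explicitly leave open the lower bound that every strategic woman receives at least her sincere woman-optimal partner $m^*_w$, calling it ``the crux.'' Since the upper bound ($\mu(w)$ no better than $m^*_w$) is an immediate consequence of sincere stability, that unproven step is the entire content of the theorem, so this is a genuine gap rather than missing bookkeeping. The route you gesture at --- a unilateral truncation by $w$ inside the distorted market $[\PP_{-w},\PP'_w]$ --- cannot be completed with the tools you allow yourself: Lemma \ref{lem:PartialOrder} alone does not even give full prefix honesty of a strategic woman's list (it says nothing about swaps with $\mu(z)$ itself, which is exactly the case Corollary \ref{cor:CompleteOrder} needed Corollary \ref{cor:one} for), so you cannot certify that the partner secured by the truncation is \emph{sincerely} at least $m^*_w$, nor that the stable set of the deviated market contains a matching giving her $m^*_w$ at all.

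The paper closes the gap with a different decomposition, and the missing piece is Lemma \ref{lem:one2}: at a (locally) minimally dishonest equilibrium $\PP$, every \emph{strategic} agent has the same partner in every matching stable with respect to $\PP$. This is the surviving fragment of Corollary \ref{cor:one}, and it lets the paper rerun the proof of Corollary \ref{cor:CompleteOrder} as Corollary \ref{cor:CompleteOrder2} (strategic agents are honest up to their assigned partner). With that in hand the argument of Theorem \ref{thm:GSFavorsWomen} goes through verbatim: all men are honest; hence any putative blocking pair of the \emph{sincere} woman-optimal matching $\mu_W$ translates --- via the men's honesty, the truth-tellers' honesty, and Corollary \ref{cor:CompleteOrder2} --- into a sincere blocking pair, so $\mu_W$ is stable with respect to $\PP$; and then Lemma \ref{lem:one2}, applied where Theorem \ref{thm:GSFavorsWomen} used Corollary \ref{cor:one}, forces $\mu(w)=\mu_W(w)=m^*_w$ for every strategic woman $w$, because both $\mu=r(\PP)$ and $\mu_W$ are stable with respect to $\PP$. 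In short, the lower bound comes from exhibiting $\mu_W$ as a putatively stable matching and invoking equilibrium uniqueness of strategic agents' partners, not from analyzing a truncation deviation; if you prove your carried-over facts in the strengthened form of Lemma \ref{lem:one2} and Corollary \ref{cor:CompleteOrder2}, your outline collapses into the paper's proof.
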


The proofs of Theorems \ref{thm:MG2}  and \ref{thm:Gale2} follow identically to Theorems \ref{thm:MGhasEQ} and \ref{thm:GSFavorsWomen} after we reestablish Corollary \ref{cor:CompleteOrder} for the  Fully-Randomized and Gale-Shapley algorithms.

\begin{lemma}
	Let $r(\cdot)$, $\SP$ and $\PP$ be as in the statement of Theorem \ref{thm:MG2} or \ref{thm:Gale2} . If $z$ is strategic then for all $\mu(z)=\mu'(z)$ for all $\mu$ and $\mu'$ stable with respect to $\PP$. \label{lem:one2}
\end{lemma}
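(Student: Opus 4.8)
The plan is to prove Lemma~\ref{lem:one2} by contradiction, closely adapting the deterministic-outcome argument of Lemma~\ref{lem:deterministic} to the setting where only some individuals are strategic. Suppose $z$ is strategic but there exist two matchings $\mu$ and $\mu'$ stable with respect to $\PP$ with $\mu(z)\neq\mu'(z)$. First I would note that since $z$ is strategic and $\PP$ is a (locally) minimally dishonest equilibrium, $z$ is submitting a best response; the goal is to show $z$ could strictly improve (or contradict the equilibrium/deterministic-selection property) by pinning down her partner. Let $k$ be the more preferred of $\mu(z)$ and $\mu'(z)$ under $\SP_z$ (say $k\,\SP_z\,\mu'(z)$ with $k\in\{\mu(z),\mu'(z)\}$). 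Because $r(\cdot)$ is Fully-Randomized (or Gale-Shapley) and both $\mu,\mu'$ are stable with respect to $\PP$, the individual $z$ is \emph{not} matched to the same spouse in every stable matching, so the Fully-Randomized property forces $p_{zj}(\PP)<1$ for each $j$; hence $z$ faces genuine randomness over at least two distinct spouses at the putative profile.

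The key step is then to have $z$ deviate to the truncated list $\spi'_z$ with $k\,\spi'_z\,z$ and $z\,\spi'_z\,l$ for all $l\neq k$ (i.e.\ $z$ declares everyone but $k$ unacceptable). As in Lemma~\ref{lem:deterministic}, the matching in which $z$ is paired with $k$ remains stable, while any matching pairing $z$ with someone other than $k$ or $z$ fails individual rationality; by the Rural Hospital theorem $z$ is matched to $k$ with probability one under $[\PP_{-z},\spi'_z]$. Since $z$ previously received $\mu'(z)$ (the less preferred of the two spouses) with positive probability, this deviation gives $z$ a weakly better and in fact strictly better expected outcome, contradicting that $\PP$ is a Nash equilibrium. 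This establishes that every strategic $z$ has a deterministically selected partner across all $\PP$-stable matchings, which is exactly the statement of the lemma.

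The main obstacle is handling the interaction with the truth-tellers: unlike in Lemma~\ref{lem:deterministic}, not all agents are strategic, so one must verify that $z$'s deviation argument goes through even though the honest agents' lists are fixed at their sincere preferences and Corollary~\ref{cor:one} is no longer available to assert a unique stable matching globally. The point to be careful about is that the claim is only about $z$'s \emph{own} partner being invariant across $\PP$-stable matchings, not uniqueness of the stable matching itself --- multiple stable matchings may persist (differing in the truth-tellers' assignments), but each must assign $z$ the same spouse. I would therefore emphasize that the deviation-and-Rural-Hospital argument constrains only $z$, and that the Fully-Randomized property (applied to the pair $z,k$) together with the assumed existence of $\mu,\mu'$ with $\mu(z)\neq\mu'(z)$ is what supplies the positive probability of the inferior outcome that the deviation eliminates. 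For the Gale-Shapley case the same conclusion follows because a strategic woman can likewise truncate to force her partner, and the man-optimal outcome is determined once her acceptable set is a singleton.
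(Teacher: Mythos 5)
Your Fully-Randomized argument follows the paper's own route --- the paper simply cites the strategic-agent version of Lemma \ref{lem:deterministic}, which you inline --- but one step fails as written, namely your choice of $k$. The Fully-Randomized property applied to a $z$ with $\mu(z)\neq\mu'(z)$ gives only $p_{zj}(\PP)<1$ for every $j$, hence that at least two distinct spouses lie in $supp(r(\PP))$; it does \emph{not} give $p_{z\mu'(z)}(\PP)>0$, and the better of $\{\mu(z),\mu'(z)\}$ need not belong to the support nor sincerely dominate it. If the support happens to consist of spouses $z$ sincerely prefers to your $k$, forcing $k$ by truncation is a \emph{worse} deviation and yields no contradiction. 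The repair is exactly the choice made in Lemma \ref{lem:deterministic}: take $k$ to be $z$'s sincerely most preferred spouse among those with $p_{zk}(\PP)>0$; forcing that $k$ is weakly better than every support outcome and strictly better than at least one. (Note also that Gale--Shapley is not Fully-Randomized, so the ``$p_{zj}(\PP)<1$'' step has no force there, as your final paragraph implicitly concedes.)

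The genuine gap is in the Gale--Shapley case. Your truncation argument contradicts Nash equilibrium only if the stable partner $w$ forces is \emph{sincerely} strictly better than her current match, which under the man-optimal mechanism is her $\PP_w$-worst stable partner; but nothing at this stage ties $\PP_w$'s ranking of her stable partners to $\SP_w$'s. Concretely, let every agent except $w_1$ be a truth-teller, with sincere (hence reported) lists $\PP_{m_1}:w_1,w_2,w_3$; $\PP_{m_2}:w_2,w_1,w_3$; $\PP_{w_2}:m_1,m_2,m_3$; and $m_3,w_3$ ranking each other first, and let strategic $w_1$, whose sincere list is $m_1,m_2,m_3$, report $m_2,m_1,m_3$. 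Then both the matching pairing $w_1$ with $m_1$ and the one pairing her with $m_2$ are stable with respect to $\PP$, so the conclusion of the lemma fails for $w_1$; yet $\PP$ is a Nash equilibrium, because the man-optimal mechanism hands $w_1$ her sincere favorite $m_1$ and no deviation, truncation or otherwise, can improve her. What excludes this profile is the minimal-dishonesty refinement (reporting honestly still gets her $m_1$, so her lie is gratuitous), which your argument never invokes; the best-response property alone cannot prove the Gale--Shapley half of the lemma. You are also silent on strategic \emph{men} under the man-optimal mechanism, a case the statement covers and which the paper treats by a different argument (after its ``without loss of generality'' relabeling, it argues that honesty is a best response for the proposing side), not by truncation.
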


\begin{proof}
Without loss of generality, suppose $z=w$ is a woman. 
If $r(\cdot)$ always selects the woman-optimal matching then the best response for $w$ is to be honest.  
Therefore the statement of the lemma holds for the Gale-Shapley algorithm when women propose. 
If the statement of the lemma fails to hold for a Fully-Randomized algorithm, then there are at least two stable marriages with respect to $\PP$ implying that that are at least two stable matchings in $supp(r(\PP))$ contradicting Lemma \ref{lem:deterministic}. 
\end{proof}

\begin{corollary}
	Let $r(\cdot)$, $\SP$ and $\PP$ be as in the statement of \ref{thm:MG2} or Theorem \ref{thm:Gale2}  and let $\mu\in r(\PP)$. If $y_1 \SP_z \mu(z)$ and $y_1 \SP_z y_2$, then $y_1 \PP y_2$. \label{cor:CompleteOrder2}
\end{corollary}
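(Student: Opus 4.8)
The plan is to mirror the proof of Corollary~\ref{cor:CompleteOrder}, with the single structural change forced by the truth-teller setting: global uniqueness of the stable matching with respect to $\PP$ (Corollary~\ref{cor:one}) is no longer available, so every appeal to it must be rerouted through Lemma~\ref{lem:one2}, which guarantees that a \emph{strategic} individual still has a deterministic partner across all matchings stable with respect to $\PP$. First I would dispose of the trivial case: if $z$ is a truth-teller then $\PP_z=\SP_z$, so $y_1\SP_z y_2$ immediately gives $y_1\PP_z y_2$. Hence assume $z$ is strategic, fix $\mu\in r(\PP)$, and write $\mu(z)$ for $z$'s (by Lemma~\ref{lem:one2}, unique) partner.

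Suppose for contradiction that $y_1\SP_z\mu(z)$, $y_1\SP_z y_2$, but $y_2\PP_z y_1$. Exactly as in Corollary~\ref{cor:CompleteOrder} I would first reduce to $y_1,y_2$ adjacent in $\PP_z$: if some $y_3$ satisfies $y_2\PP_z y_3\PP_z y_1$, then either $y_3\SP_z y_1$ (continue with the pair $\{y_3,y_2\}$) or $y_1\SP_z y_3$ (continue with $\{y_1,y_3\}$), and in both cases the three hypotheses are inherited, so finitely many steps produce an adjacent inverted pair. Let $\PP'$ be obtained from $\PP$ by swapping the adjacent $y_1,y_2$ in $\PP_z$; since $y_2\PP_z y_1$ but $y_1\SP_z y_2$, this is a valid more-honest swap with $K(\PP'_z,\SP_z)=K(\PP_z,\SP_z)-1$. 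To contradict locally minimal dishonesty (Definition~\ref{def:LocalMD}) it suffices to show that no matching in $supp(r(\PP'))$ gives $z$ a partner strictly worse under $\SP_z$ than $\mu(z)$; equivalently, that every $\mu'$ stable with respect to $\PP'$ has $\mu'(z)$ at least as good as $\mu(z)$.

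So take $\mu'$ stable with respect to $\PP'$ with $\mu(z)\SP_z\mu'(z)$. Because $y_1\SP_z\mu(z)\SP_z\mu'(z)$ we get $\mu'(z)\notin\{y_1,\mu(z)\}$. I would then argue that $\mu'$ is in fact stable with respect to $\PP$, which contradicts Lemma~\ref{lem:one2} since $\mu'(z)\neq\mu(z)$. Only $z$'s list changed between $\PP$ and $\PP'$, so any blocking pair of $\mu'$ for $\PP$ that fails to block for $\PP'$ must have the form $\{z,c\}$ with $\{c,\mu'(z)\}=\{y_1,y_2\}$. Since $\mu'(z)\neq y_1$, the only possibility is $\mu'(z)=y_2$ and $c=y_1$; but that pair would require $y_1\PP_z\mu'(z)=y_2$, which is false because $y_2\PP_z y_1$. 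Hence no new blocking pair appears, $\mu'$ is stable with respect to $\PP$, and we reach the desired contradiction, so $y_1\PP_z y_2$.

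The step I expect to be the main obstacle is precisely this last verification: in the truth-teller setting $\PP$ may admit several stable matchings, so I cannot conclude ``$\mu'\neq\mu\Rightarrow\mu'$ unstable'' as in Corollary~\ref{cor:one}; the whole weight of the argument instead falls on showing that the single swapped comparison cannot manufacture a blocking pair, so that $\mu'$ survives as a stable matching of $\PP$ and collides with the deterministic-partner guarantee of Lemma~\ref{lem:one2}. Note that this blocking-pair persistence argument uses neither monotonicity, INS, nor the Fully-Randomized property directly, so a single proof covers both Theorem~\ref{thm:MG2} and Theorem~\ref{thm:Gale2}, the mechanism class entering only through Lemma~\ref{lem:one2}. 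For the Gale--Shapley case one could alternatively dispatch the sub-case $y_2\neq\mu(z)$ by the monotonicity/INS computation of Lemma~\ref{lem:PartialOrder}, using $p_{z\mu(z)}(\PP)=1$ from Lemma~\ref{lem:one2}, but the unified blocking-pair argument makes that unnecessary.
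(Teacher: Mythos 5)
Your proof is correct and matches the paper's intent exactly: the paper's entire argument for Corollary \ref{cor:CompleteOrder2} is the single assertion that it ``follows in the same fashion as Corollary \ref{cor:CompleteOrder},'' and your write-up is precisely that proof transplanted, with Lemma \ref{lem:one2} supplying the deterministic-partner guarantee wherever Corollary \ref{cor:one} was invoked. Your choice of a single blocking-pair-persistence argument, rather than dispatching the $y_2 \neq \mu(z)$ sub-case through Lemma \ref{lem:PartialOrder}, is also the right reading of ``same fashion,'' since monotonicity/INS are not assumed for Fully-Randomized mechanisms, so one mechanism-agnostic argument covering both Theorem \ref{thm:MG2} and Theorem \ref{thm:Gale2} is exactly what the paper needs.
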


Corollary \ref{cor:CompleteOrder2} follows in the same fashion as Corollary \ref{cor:CompleteOrder}. Theorems \ref{thm:MG2} and \ref{thm:Gale2}  immediately follow.

\subsection{Coalitions}

Gale and Sotomayor also specifically motivate the study of manipulation when collusion is allowed \cite{GusfieldIrving89}. In this section we consider coalitions and \emph{strong equilibria} -- equilibria where no group of individuals can collude such that every member of the group obtains a strictly better outcome. We show that for any monotonic INS stable matching mechanism, every (locally) minimally dishonest equilibrium is also a strong equilibrium. This implies that all the results from previous sections apply even when collusion is allowed. In addition, it implies that the core of the SSM game is non-empty when using the Gale-Shapley algorithm or a monotonic INS Fully-Randomized stable matching mechanism even when we refine the set of equilibria to those where everyone is minimally dishonest.

\begin{theorem}\label{thm:MDImplyStrong}Let $r(\cdot)$ be an arbitrary monotonic INS or Fully-Randomized stable matching mechanism.  Every minimally dishonest equilibrium is a strong equilibrium.\end{theorem}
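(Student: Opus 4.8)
The plan is to argue by contradiction, leaning on the structural facts already extracted for a single minimally dishonest equilibrium and then invoking a Demange--Gale--Sotomayor--style blocking argument. First I would fix a minimally dishonest equilibrium $\PP$ and set $\mu = r(\PP)$. By Theorem~\ref{thm:Main}, $\mu$ is sincerely stable; by Corollary~\ref{cor:one}, $\mu$ is the \emph{unique} matching stable with respect to $\PP$; by Lemma~\ref{lem:deterministic} the outcome is deterministic; and by Corollary~\ref{cor:CompleteOrder} every individual is honest up to their $\mu$-partner, i.e.\ $\PP_y$ and $\SP_y$ agree on everyone ranked at least as high as $\mu(y)$. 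Now suppose, for contradiction, that some coalition $S$ can deviate to $[\PP_{-S},\PP'_S]$ producing an outcome $\mu'$ in which every member is strictly better off sincerely, $\mu'(s)\,\SP_s\,\mu(s)$ for all $s\in S$.

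The second step is to establish the key structural fact: \emph{each coalition member is matched in $\mu'$ to a non-coalition partner who is sincerely worse off.} Fix $s\in S$ and let $t=\mu'(s)$. Since $t\,\SP_s\,\mu(s)$ and $\mu$ is sincerely stable, $\{s,t\}$ cannot block $\mu$, so (strict preferences) $\mu(t)\,\SP_t\,s$. If $t\in S$ then $t$ too strictly improves, giving $s=\mu'(t)\,\SP_t\,\mu(t)$, contradicting $\mu(t)\,\SP_t\,s$; hence $t\notin S$, and moreover $t$ sincerely prefers its $\mu$-partner to its $\mu'$-partner. Thus $\mu'$ maps $S$ injectively into the non-coalition agents $T=\mu'(S)$, with $S\cap T=\emptyset$ and every $t\in T$ preferring $\mu$ to $\mu'$.

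The third step is to convert this into a forbidden blocking pair for $\mu'$ with respect to $\PP'$ (which would contradict that $r(\cdot)$ only outputs putatively stable matchings). Take $t\in T$ and its $\mu$-partner $a=\mu(t)$. Because $t\notin S$ is honest up to $\mu(t)=a$ and $\mu'(t)$ is sincerely \emph{below} $a$, honesty forces $a\,\PP'_t\,\mu'(t)$, so $t$ putatively wants $a$ over its current partner; stability of $\mu'$ under $\PP'$ then forces $a$ to putatively prefer $\mu'(a)$ to $t$. I would now trace the alternating walk in the symmetric difference $\mu\mathbin{\triangle}\mu'$: from $a$ either $a\in S$ (reinvoking the structural fact), or $a\notin S$, in which case honesty up to $\mu(a)=t$ upgrades the putative preference to a \emph{sincere} one, $\mu'(a)\,\SP_a\,\mu(a)$. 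In the latter case, if $\mu'(a)$ also sincerely prefers $a$ to its own $\mu$-partner, then $\{a,\mu'(a)\}$ is a sincere blocking pair for $\mu$ -- the desired contradiction; otherwise the walk continues to the next non-coalition agent who prefers $\mu$, and I repeat.

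The hard part will be \emph{closing this walk}. Because $\mu\mathbin{\triangle}\mu'$ decomposes into simple alternating paths and cycles, the walk terminates, but a perfectly sign-alternating cycle (coalition/``prefers-$\mu'$'' agents alternating with ``prefers-$\mu$'' agents) is internally consistent with both $\mu$ being sincerely stable and $\mu'$ being $\PP'$-stable, so it does not by itself yield a blocking pair -- this is exactly the rotation phenomenon, and it is the main obstacle. To defeat it I would use the two ingredients not yet exploited: the \emph{uniqueness} of $\mu$ under $\PP$ and \emph{full} (not merely local) minimal dishonesty. Concretely, along such a cycle the ``prefers-$\mu$'' non-coalition agents are pushed by $\mu'$ to partners strictly below $\mu(y)$; minimal dishonesty forbids them from gratuitously declaring acceptable anyone they sincerely find unacceptable and from reordering below $\mu(y)$ when it does not change their outcome (since $\mu$ is the unique stable matching under $\PP$, such changes never do), which pins $\PP_{-S}$ tightly enough to their sincere lists that the displaced agents must block $\mu'$ under $\PP'$ with a preferred partner off the cycle. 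Carrying out this last step rigorously -- showing the rotation cannot be simultaneously individually rational and blocking-free under the tightly constrained non-coalition reports -- is where the real work lies, and it is what upgrades the classical coalitional-stability argument to the strategic, minimally dishonest setting. Once the contradiction is reached for monotonic INS mechanisms, the Fully-Randomized case follows identically, since Corollaries~\ref{cor:one} and~\ref{cor:CompleteOrder} were shown to hold there as well.
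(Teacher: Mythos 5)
Your outline runs into a genuine gap, and it is exactly the one you flag yourself: you never close the alternating-cycle case. The paper does not fight this battle at all. Its proof is three lines: by Corollary \ref{cor:one} the equilibrium outcome $\mu=r(\PP)$ is the \emph{unique} putatively stable matching, hence putatively man- and woman-optimal simultaneously; the coalitional non-manipulability theorem of Demange, Gale and Sotomayor \cite{Demange87} then says no coalition can alter $\PP$ so that every member does putatively strictly better than $\mu$; and Corollary \ref{cor:CompleteOrder} converts any sincere strict improvement $\mu'(s)\,\SP_s\,\mu(s)$ into a putative one $\mu'(s)\,\PP_s\,\mu(s)$, so a sincerely improving coalition would contradict \cite{Demange87}. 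Your step 2 (each coalition member's $\mu'$-partner is a non-coalition agent who is sincerely worse off) is correct and is in fact the opening move of the standard proof of that theorem; but the rotation obstruction you hit afterwards is precisely its hard combinatorial core (essentially the Gale--Sotomayor/Hwang blocking lemma). Writing that carrying out this step rigorously ``is where the real work lies'' without carrying it out leaves the proof incomplete, and your sketch of how minimal dishonesty ``pins $\PP_{-S}$ tightly enough'' is a hope, not an argument.

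There is also a smaller, fixable error in step 3. You claim that because $t\notin S$ is ``honest up to $\mu(t)=a$,'' we get $a\,\PP_t\,\mu'(t)$. Corollary \ref{cor:CompleteOrder} does not give this: it orders a pair $y_1,y_2$ only when $y_1$ is \emph{sincerely strictly preferred to the spouse} $\mu(t)$, whereas here the higher-ranked element is the spouse itself and $\mu'(t)$ lies sincerely below it; nothing in the paper's stated lemmas forbids $t$ from putatively ranking $\mu'(t)$ above $\mu(t)$. The claim can be rescued: if the agent immediately above $\mu(t)$ in $\PP_t$ were sincerely below $\mu(t)$, swapping that adjacent pair is strictly more honest, and since the swap only moves the spouse up, every matching stable under the new profile still assigns $t$ to $\mu(t)$, violating (locally) minimal dishonesty; combined with Corollary \ref{cor:CompleteOrder} this yields $a\,\PP_t\,\mu'(t)$. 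But that argument must be supplied, not cited. In short: either prove a blocking-lemma-type statement to finish your walk, or do what the paper does and invoke \cite{Demange87} once uniqueness has made $\mu$ putatively optimal for everyone.
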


\begin{proof}By Corollary \ref{cor:one}, $r(\PP)=\mu$ is the unique putatively stable matching and therefore is putatively both man and woman-optimal.  By \cite{Demange87}, no coalition of men and women can alter $\PP$ such that every one of them prefers the outcome to $\mu$. By Corollary \ref{cor:CompleteOrder}, each individual prefers $\mu$ with respect to $\PP$ if and only if he/she prefers $\mu$ with respect to $\SP$. Therefore no coalition can alter $\PP$ to obtain a matching they all sincerely prefer to $\mu$.  Therefore $\PP$ is a strong equilibrium. \end{proof}

The converse does not necessarily hold, as discussed previously.
For instance, Gale and Sotomayor demonstrated that there are strong equilibria for the Gale-Shapley algorithm that yield stable matchings that are not woman-optimal \cite{GaleSotomayor85}. However, in Theorem \ref{thm:GSFavorsWomen}, we established that all (locally) minimally dishonest equilibria yield the sincere woman-optimal matching.

Every strong equilibrium of SSM results in a sincerely stable matching \cite{Shin96,Sonmez97}, therefore Theorem \ref{thm:MDImplyStrong} provides alternative proofs of Theorems  \ref{thm:Main} and \ref{thm:MGhasEQ}.
Since every (locally) minimally dishonest equilibrium of a monotonic INS or Fully-Randomized $r(\cdot)$ is a strong equilibrium and since every strong equilibrium yields a sincere stable matching, every (locally) minimally dishonest equilibrium yields a sincere stable matching.

If a mechanism has a (locally) minimally dishonest equilibrium, then it also has a strong equilibrium and the core is non-empty.  For instance, when using the Gale-Shapley algorithm or a Fully-Randomized $r(\cdot)$, the SSM game with the (locally) minimally dishonest refinement has a non-empty core.

\begin{corollary}\label{thm:NonEmptyCore} Let $\SP$ be arbitrary and $r(\cdot)$ be the Gale-Shapley Algorithm or a Fully-Randomized stable matching mechanism. The SSM game with $r(\cdot)$ and the (locally) minimally dishonest refinement has a non-empty core.\end{corollary}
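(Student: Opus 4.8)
The plan is to assemble the corollary from three pieces already in hand: existence of a minimally dishonest equilibrium for each of the two mechanism classes, the fact (Theorem~\ref{thm:MDImplyStrong}) that every such equilibrium is strong, and the observation that a strong equilibrium is by definition a core outcome. Since every minimally dishonest equilibrium is also locally minimally dishonest, a single constructed equilibrium will serve for both refinements simultaneously.

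First I would establish existence of a minimally dishonest equilibrium $\PP$ for whichever mechanism is given. In the Fully-Randomized case I would invoke Theorem~\ref{thm:MGhasEQ}: a sincerely stable matching $\mu$ always exists (the Gale-Shapley algorithm applied to $\SP$ produces one), so the theorem yields a (locally) minimally dishonest equilibrium $\PP$ with $r(\PP)=\mu$. In the Gale-Shapley case I would invoke part~(i) of Theorem~\ref{thm:GSFavorsWomen}, which asserts existence directly. In both constructions the equilibrium output by Algorithm~\ref{Nash} is in fact minimally dishonest, hence a fortiori locally minimally dishonest.

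Next I would apply Theorem~\ref{thm:MDImplyStrong}. The Gale-Shapley algorithm is monotonic and INS, and the Fully-Randomized case is covered by that theorem directly, so the equilibrium $\PP$ produced above is a \emph{strong} equilibrium: no coalition of individuals can jointly submit alternative preference lists so that every coalition member obtains a strictly preferred outcome. This is exactly the defining property of a point in the core.

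Finally I would close by matching definitions. The core of the SSM game under a refinement consists of those profiles in the refined strategy space from which no coalition can profitably deviate. Because $\PP$ is minimally dishonest, it lies in the refined strategy space for both the minimally dishonest game and the locally minimally dishonest game, and by the previous step it is a strong equilibrium; hence the core is nonempty in both cases. The only point needing care—rather than any genuine obstacle—is verifying that restricting to (locally) minimally dishonest profiles does not discard the witnessing strong equilibrium; this is immediate since the constructed $\PP$ is itself (locally) minimally dishonest.
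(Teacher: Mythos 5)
Your proposal is correct and follows essentially the same route as the paper: existence of a (locally) minimally dishonest equilibrium from Theorem~\ref{thm:MGhasEQ} (Fully-Randomized) and Theorem~\ref{thm:GSFavorsWomen}(i) (Gale-Shapley), promotion of that equilibrium to a strong equilibrium via Theorem~\ref{thm:MDImplyStrong}, and identification of strong equilibria with core points. The paper compresses this into a one-line citation (together with the references establishing that strong equilibria yield sincerely stable matchings), but the underlying argument is the one you give.
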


Corollary \ref{thm:NonEmptyCore} follows immediately from Theorems \ref{thm:MGhasEQ}, \ref{thm:GSFavorsWomen},  and \cite{Shin96,Sonmez97}.  
\subsection{Distinction between partial honesty and minimally dishonesty}\label{sec:partialhonest}

Partial honesty is another honesty refinement proposed in the voting literature.  It requires the individual to be completely honest unless their dishonesty positively impacts their valuation of the outcome. 

\begin{definition}\label{def:PartialHonesty}  Let $\Pi$  be the sincere preferences and let $\SPi$ be an equilibrium in the Strategic Stable Matching game where $r(\SPi)=\mu_1$.   Individual  $y$ is \emph{partially honest} if $\mu_1(y) \SP_y \mu_2(y)$ for some $\mu_2(y)\in supp\left(r\left([\SPi_{-y},\pi_y]\right)\right)$.
\end{definition}

A Nash equilibrium is a partially honest Nash equilibrium if each individual is partially honest.
This condition requires that individual view honesty in a binary fashion. 
This is in contrast to the experimental evidence from Section \ref{sec:averse} that suggests individuals have a more nuanced view of honesty.
While partial honesty removes some of the absurd equilibria from Corollary \ref{thm:AllIndividuallyRational}, we show that it fails to eliminate them all.

\begin{proposition}\label{prop:PartialHonesty}
	There exists a monotonic, INS, Fully-Randomized stable matching mechanism $r$, sincere $\Pi$, a partially honest Nash equilibrium $\SPi$ where $r(\SPi)$ is not stable with respect to $\Pi$. 
\end{proposition}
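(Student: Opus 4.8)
The plan is to prove this by exhibiting an explicit counterexample, since it is an existence statement. First I would fix the mechanism: take $r(\cdot)$ to be the mechanism that returns the man-optimal matching with probability $\tfrac12$ and the woman-optimal matching with probability $\tfrac12$ (with respect to the submitted profile). This $r$ outputs only putatively stable matchings, and it is monotonic and INS because the Gale--Shapley algorithm is monotonic and INS in either orientation \cite{GaleShapley62} and these properties are preserved under convex combination (for INS, note $p_{ik}=0$ for the average forces $p_{ik}=0$ in both components, so each component's inequality survives averaging). It is Fully-Randomized because if $p_{ij}(\SP)=1$ then $i$ is matched to $j$ in both the man- and woman-optimal matchings, whence $j$ is simultaneously $i$'s best and worst stable partner and so $\mu(i)=j$ in every stable $\mu$. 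Thus a single $r$ meets all three hypotheses at once, and its randomization is genuine exactly when more than one stable matching exists.

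Next I would construct the sincere profile $\SP$ around a \emph{manipulation-backfire} structure for one woman $w$. The target is a matching $\mu$ that is individually rational, is sincerely \emph{unstable} via a single blocking pair $\{m,w\}$ (so $m \SP_w \mu(w)$ and $w \SP_m \mu(m)$), but becomes the \emph{unique} putatively stable matching once $w$ submits a putative list $\PP_w$ that \emph{buries} $m$ below $\mu(w)$ while every other agent reports honestly. Because the block is the only one and is hidden by $w$'s reordering, $\mu$ is uniquely stable with respect to $\PP=[\SP_{-w},\PP_w]$, so by Lemma~\ref{lem:deterministic} and the Fully-Randomized property $r(\PP)=\mu$ deterministically. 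The key design requirement is that $m$ be a man $w$ sincerely prefers to $\mu(w)$ yet can \emph{never obtain in a stable matching}: in every sincere stable matching $m$ is paired with a woman he ranks above $w$, so $w$'s best achievable partner is her woman-optimal partner, which I arrange to equal $\mu(w)$. Crucially, the lie needed here is a genuine \emph{reordering}, not a truncation; truncations preserve sincere stability (Section~\ref{sec:other}), so instability must come from permuting $m$ below $\mu(w)$.

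With the instance in hand I would verify the two remaining properties. For \textbf{Nash}: since $\mu$ is putatively man- and woman-optimal, no man can beat his man-optimal partner and no woman her woman-optimal partner \cite{GaleShapley62}; in particular $w$ cannot reach the unavailable $m$, and the agents who are ``dragged'' into worse partners in $\mu$ (notably $m$ himself) are already at their best achievable partner given the others' honest reports, so none can profitably deviate. For \textbf{partial honesty}: every agent except $w$ is completely honest and hence partially honest, while $w$'s lie is justified because her honest deviation yields the fully sincere profile $\SP$, whose support under $r$ includes the man-optimal (woman-pessimal) matching; there $w$ receives a partner strictly worse than $\mu(w)$, so by the Fully-Randomized property she faces a strictly worse outcome with positive probability. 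Thus $\PP$ is a partially honest Nash equilibrium with $r(\PP)=\mu$ sincerely unstable, which is the claim. I would close by contrasting this with minimal dishonesty: Corollary~\ref{cor:CompleteOrder} forbids $w$ from burying a man she prefers to her spouse, so the same instance admits no such \emph{minimally} dishonest equilibrium, explaining why Theorems~\ref{thm:Main} and~\ref{thm:MGhasEQ} do not degrade to partial honesty.

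The main obstacle will be the combinatorial design of $\SP$, because two demands pull against each other: the pair $\{m,w\}$ must block $\mu$ (so $m$ prefers $w$ to his $\mu$-partner), yet $m$ must be \emph{unavailable} to $w$ in every stable matching (so $m$ prefers his stable partner to $w$). Reconciling these requires $m$ to be ``dragged down'' in $\mu$ from a better stable partner, while that better partner is herself matched in $\mu$ to someone she prefers to $m$ (so she does not re-block). I expect a three-woman market to be too tight to satisfy this together with ``$w$'s man-optimal partner is strictly worse than $\mu(w)$'' and ``$\mu$ is uniquely stable under the buried list,'' so I would search for an explicit $4\times4$ instance and check the man- and woman-optimal matchings and the single profitable-deviation conditions directly by hand.
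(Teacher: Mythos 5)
There is a genuine gap, and it sits exactly at the combinatorial core of your plan: the single-liar, burial-only construction you describe cannot exist. Suppose every agent except $w$ is honest and $w$'s lie buries $m$ below $\mu(w)$, where by your own design $\mu(w)=\mu_W(w)$, her sincere woman-optimal partner (write $\mu_W$ for the sincere woman-optimal matching). Burying can only \emph{shrink} the set of men ranked above $\mu_W(w)$: every man ranked above $\mu_W(w)$ in $\PP_w$ is also ranked above $\mu_W(w)$ in $\SP_w$. But then $\mu_W$ remains stable with respect to $[\SP_{-w},\PP_w]$: it is individually rational; no pair of honest agents blocks it (it is sincerely stable); and a pair $\{m''',w\}$ could block it only if $m''' \PP_w \mu_W(w)$, which forces $m''' \SP_w \mu_W(w)$ and hence $\mu_W(m''')\SP_{m'''}w$ by the sincere stability of $\mu_W$ --- no block. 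Since your target $\mu$ is sincerely unstable while $\mu_W$ is sincerely stable, $\mu\neq\mu_W$, so $\mu$ is \emph{not} the unique putatively stable matching. Either the putative stable set is $\{\mu_W\}$ (the outcome is sincerely stable, so no counterexample), or it contains two matchings that differ at some agent, in which case any Fully-Randomized $r$ gives that agent a nondeterministic partner and, by the contrapositive of Lemma~\ref{lem:deterministic}, $\PP$ is not a pure Nash equilibrium. Your closing paragraph senses this tension but treats it as a matter of finding a large enough instance; no $4\times 4$ (or any) search will resolve it. Indeed, to destroy the putative stability of $\mu_W$ the lie must do the opposite of burying --- promote above $\mu_W(w)$ some man she sincerely ranks below $\mu_W(w)$ who desires her --- and even then one can show a single liar never suffices: the promoted man generates a rotation along which men improve from $\mu_W$ to $\mu$, the sincere blocking pair $\{m,w\}$ forces a disjoint rotation along which men get worse, and overlaying $\mu$ on the improving rotations and $\mu_W$ elsewhere produces a second putatively stable matching, again killing uniqueness and hence the Nash property.

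The paper's proof escapes this obstruction by using \emph{two} liars whose lies are mutually reinforcing. In its instance, $m_1$ and $w_1$ are each other's sincere first choices, yet each declares the other unacceptable (and promotes a sincerely unacceptable partner to the top of the reported list); the unique putatively stable matching $\mu_2$ is then blocked sincerely by exactly the liar pair $\{m_1,w_1\}$. The Nash condition for each liar holds trivially because the \emph{other} liar has declared him/her unacceptable, so no unilateral report can recover the blocking partner; and partial honesty holds for each liar because reverting to honesty, against the other's standing lie, leaves two putatively stable matchings, one strictly worse for the deviator, which the randomized mechanism selects with positive probability. Your mechanism (the $\tfrac12$--$\tfrac12$ mixture of the two Gale--Shapley orientations, with the convex-combination argument for monotonicity/INS and the best-equals-worst argument for Fully-Randomized) and your verification scaffolding (honest agents cannot improve at a profile with a unique putatively stable matching, by \cite{Demange87}; partial honesty via a positive-probability worse outcome upon honest deviation) are both sound and close to the paper's; what is unfixable is the one-liar design those verifications are supposed to be applied to.
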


\begin{proof}
Let $r$ select a stable matching uniformly at random. 
Consider the sincere preferences in Table \ref{tab:Partial Honesty}.
\begin{table}[ht]	
	\centering\caption	
	{Preferences $\SP$ for Proposition \ref{prop:PartialHonesty}.\label{tab:Partial Honesty}}	
	{\begin{tabular}{|c|c|}	
			\hline
			$\SP_{m_1}: \ w_1,w_2,w_3,\mathbf{m_1},w_4$&
			$\SP_{w_1}: \ m_1,m_2,m_3,\mathbf{w_1},m_4$\\
			$\SP_{m_2}: \ w_4,w_1,\mathbf{m_2},w_2,w_3$&
			$\SP_{w_2}: \ m_4,m_1,\mathbf{w_2},m_2,m_3$\\
			$\SP_{m_3}: \ w_1,w_4,\mathbf{m_3},w_2,w_3$&
			$\SP_{w_3}: \ m_1,m_4,\mathbf{w_3},m_2,m_3$\\
			$\SP_{m_4}: \ w_3,w_2,\mathbf{m_3},w_1,w_4$&
			$\SP_{w_4}: \ m_3,m_2,\mathbf{w_3},m_1,m_4$\\
			\hline
	\end{tabular}}
	{}	
\end{table}	

\begin{table}[ht]	
	\centering\caption	
	{Matchings for Proposition \ref{prop:PartialHonesty} \label{tab:PartialHonesty}}	
	{\begin{tabular}{|r|l |}	
			\hline
			$\mu_1$ & $\mu_1(m_1)=w_1, \mu_1(m_3)=w_4, \mu_1(m_4)=w_3$\\ 
			$\mu_2$ & $\mu_2(m_1)=w_2, \mu_2(m_2)=w_1, \mu_2(m_3)=w_4, \mu_2(m_4)=w_3$\\ 
			$\mu_3$ & $\mu_3(m_1)=w_3, \mu_3(m_2)=w_2, \mu_3(m_3)=w_4, \mu_3(m_4)=w_2$\\ 
			\hline
	\end{tabular} }	
	{}	
\end{table}	

With respect to these preferences, the only stable matching is $\mu_1$ from Table \ref{tab:PartialHonesty}. 
Suppose instead that $m_1$ submits $(\SPi_{m_1}: w_4, w_2, \mathbf{m_1}, w_1, w_3)$, that $w_1$ submits $(\SPi_{w_1}: m_4, m_2, \mathbf{w_1}, m_1, m_3)$ and that all other individuals submits $\SPi_y=\Pi_y$.
With respect to these submitted preferences, only $\mu_2$ is stable and therefore $r(\SPi)=\mu_2$. 
As in earlier proofs, it is straightforward to verify that $\SPi$ is a Nash equilibrium. 
Moreover, it is a partially honesty Nash equilibrium;
if $m_1$ instead submits his honest $\Pi_m$ then both $\mu_2$ and $\mu_3$ are stable and $m_1$ obtains a strictly worse outcome. 
Symmetrically, $w_1$ is partially honest completing the proof of the proposition. 
\end{proof}

The preferences given in the proof of Proposition \ref{prop:PartialHonesty} are not absurd because they fail to result in a stable outcome. 
Rather the preferences are unreasonable because of their relationship to the sincere preferences:
partial honesty indicates that it is perfectly reasonable for an individual to indicate they are willing to match someone that they have no interest in. 
Moreover, partial honesty allows $m_1$ to refuse to match his first choice $w_1$ despite $w_1$ most preferring $m_1$.  
Thus we view partial honesty as unfit for describing behavior in the setting of stable matchings. 

\subsection{Distinction between minimal truncation and minimally dishonest}

Another refinement proposed in \cite{Roth99, Ehlers08} is the truncation refinement -- a restriction that only permit man $m$ to submit $\PP_m$ if $w \SP_m w'$ implies $w \PP_m w'$ for all $w,w'\in W$. 
This restriction does not allow individuals to permute their preferences, but it does allow them to declare potential spouses as unacceptable.
A minimally dishonest equilibrium and a minimally truncated equilibrium have similar properties. 
It is straightforward to show that if individuals provide minimally truncated best responses then the corresponding set of equilibria also yield sincerely stable matchings. 
Moreover, Algorithm \ref{Nash} will sometimes output minimally truncated best responses.  
In this section, we establish that the two response functions are distinct. 

\begin{proposition}
	A minimally truncated equilibrium and a (locally) minimally dishonest equilibrium are distinct concepts. \label{prop:distinct}
\end{proposition}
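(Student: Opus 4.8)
The plan is to prove distinctness by exhibiting a single instance in which the minimally dishonest best response of some player is \emph{not} a truncation of that player's sincere list, so that no minimally truncated response can reproduce it. The separating observation is structural: a truncation of $\SP_y$ preserves the sincere relative order of every acceptable partner (the only freedom is the placement of $\mathbf{y}$), so the set of partners a truncation leaves acceptable is always a \emph{prefix} of $\SP_y$ and their mutual order is sincere. Hence it suffices to produce one minimally dishonest equilibrium $\PP$ in which some $\PP_y$ either keeps a sincerely less-preferred partner acceptable while excluding a more-preferred one, or reorders two acceptable partners relative to $\SP_y$. Either feature makes $\PP_y$ unrealizable as a truncation, and therefore places $\PP$ outside the set of minimally truncated equilibria while keeping it inside the minimally dishonest set; equivalently, the two best-response functions disagree at $y$.

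First I would fix a concrete mechanism and small sincere tables (a $3\times 3$ or $4\times 4$ instance suffices) and freeze the submissions of all players but one, say $y$. I would design the market so that $y$'s utility-maximizing outcome is pinned down only after $y$ renders a specific, sincerely higher-ranked partner unacceptable -- to destroy a competing stable matching, which by Corollary \ref{cor:one} any equilibrium must do -- while a sincerely lower-ranked partner must remain acceptable to keep the intended matching $\mu$ stable. By Corollary \ref{cor:CompleteOrder} every player is already honest up to $\mu(y)$, so all of the dishonesty lives below the spouse, which is exactly where truncation and permutation diverge. I would then compute, via $K(\cdot,\cdot)$, that the permutation realizing this non-prefix acceptable set is the unique Kendall--Tau--minimal best response, so minimal dishonesty selects it, whereas the minimally truncated best response, constrained to prefixes, is forced either to retain the offending higher-ranked partner (yielding a different, worse outcome for $y$) or to also discard the lower-ranked partner (a different submission). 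In both cases the two response functions return different lists at $y$, which is all that distinctness requires.

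The hard part will be engineering the instance so that the selective exclusion is simultaneously \emph{necessary} and \emph{strictly more honest} than every admissible truncation: necessary, in that keeping the higher-ranked partner acceptable revives an alternative stable matching and thereby changes $y$'s outcome or violates the uniqueness guaranteed by Corollary \ref{cor:one}; and strictly more honest, in that the non-prefix submission attains a smaller $K(\PP_y,\SP_y)$ than any prefix truncation yielding $y$'s best attainable partner. Verifying both points amounts to enumerating the putatively stable matchings of each candidate submission and comparing their Kendall--Tau counts, and getting these comparisons to be strict---rather than producing the ties that arise in the symmetric cases---is the crux. Once a witnessing instance is in hand the conclusion is immediate: the minimally dishonest and minimally truncated response functions disagree on $y$, so the two refinements are genuinely distinct concepts.
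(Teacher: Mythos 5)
Your overall strategy---exhibit a single instance in which some player's minimally dishonest best response is not a truncation, and certify this by comparing Kendall Tau distances---is exactly the paper's approach, and your structural observation (a truncation keeps the sincere mutual order and leaves a prefix of the sincere list acceptable) is the right separating feature. The problem is that the proposal never actually produces the witness. For a proposition whose entire content is the existence of a separating instance, the instance \emph{is} the proof; you explicitly defer its construction (``the hard part,'' ``the crux'') and leave unresolved precisely the point on which the argument lives or dies, namely whether the strict Kendall Tau comparison can be achieved. As written, nothing has been proved.

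The gap is compounded by the fact that the construction you prioritize---selective exclusion, i.e.\ making a sincerely higher-ranked partner unacceptable while keeping a lower-ranked one acceptable---is exactly the route where the ties you worry about are unavoidable. If sincerely $\mu(w)\,\SP_w\, m_h \,\SP_w\, m_l \,\SP_w\, \mathbf{w}$, the non-prefix list $(\mu(w), m_l, \mathbf{w}, m_h)$ has two inversions ($\{m_h,m_l\}$ and $\{m_h,\mathbf{w}\}$), which is the same count as the prefix truncation $(\mu(w), \mathbf{w}, m_h, m_l)$; so exclusion of a higher-ranked partner is never \emph{strictly} more honest than truncating him away, and your claim that minimal dishonesty ``uniquely selects'' the non-prefix list fails. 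The paper instead uses what you relegate to a secondary alternative: a pure reordering below the spouse. In the standard $3\times 3$ cyclic instance under Gale--Shapley, $w_1$ (with sincere list $m_1,m_2,m_3,\mathbf{w_1}$) obtains her woman-optimal partner $m_1$ either by the truncation $(m_1,\mathbf{w_1},m_2,m_3)$, at Kendall Tau distance $2$, or by merely swapping her two bottom men, $(m_1,m_3,m_2,\mathbf{w_1})$, at distance $1$; the swap creates the blocking pair that destroys the competing stable matching. The comparison $1<2$ is strict, so the minimally truncated equilibrium is not minimally dishonest, and the minimally dishonest response is not a truncation---both directions of distinctness fall out at once, with no ties to fight. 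Your proposal needed to find and verify such an instance, not merely predict its existence.
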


\begin{proof} With respect to the preferences in Table \ref{tab:SincereTrunc} there are two stable matchings $\mu_1$ and $\mu_2$ where $\mu_1(m_1)=w_2, \mu_1(m_2)=w_1$, and $\mu_1(m_3)=w_3$; and $\mu_2(m_i)=w_i$ for all $i$.  
The Gale-Shapley algorithm implies the man-optimal matching $\mu_1$ is selected. 
Woman $w_1$ can adjust her preferences to obtain her woman-optimal partner $\mu_2(w_1)=m_1$.  
Using truncation, $w_1$ can only do this by submitting the list ($\PP^1_{w_1}: m_1, \mathbf{w_1}, m_2, m_3$).
When everyone else is honest this corresponds to a minimally truncated equilibrium.
She can also obtain $m_2$ without truncation by submitting ($\PP^1_{w_1}:  m_1, m_3, m_2, \mathbf{w_1}$). 
Moreover, $d(\SP_{w_1},\PP^2_{w_1})=1$ and therefore $[\SP_{-w_1},\PP^1_{w_1}]$ is a (locally) minimally dishonest equilibrium.  
However, $d(\SP_{w_1},\PP^1_{w_1})=2$ and the truncated best response is not a (locally) minimally dishonest best response.  
Therefore the two strategy concepts are distinct.  \end{proof}

\begin{table}[ht]	
	\centering\caption	
	{Preferences $\SP$ for Proposition \ref{prop:distinct}. \label{tab:SincereTrunc}}	
	{\begin{tabular}{|r l |r l |}	
			\hline
			${\SP}_{m_1}:$&$w_2,w_1,\mathbf{m_1},w_3$&
			${\SP}_{w_1}:$&$m_1,m_2,m_3,\mathbf{w_1}$\\
			${\SP}_{m_2}:$&$w_1,w_2,\mathbf{m_2},w_3$&
			${\SP}_{w_2}:$&$m_2,m_3,m_1,\mathbf{w_2}$\\
			${\SP}_{m_3}:$&$w_1,w_3,\mathbf{m_3},w_2$&
			${\SP}_{w_3}:$&$m_3,m_1,m_2,\mathbf{w_3}$\\
			\hline
	\end{tabular} }	
	{}	
\end{table}

\subsection{Evaluating honesty with truncated preference lists}

If $\mathbf{w}\PP_w m$ then there is no stable matching $\mu$ where $\mu(w)=m$.  
As such, the ordering of the men $\{m: \mathbf{w}\PP_w m\}$ is irrelevant when evaluating the stability of a matching.
For this reason, there is some disparity in the literature as to whether woman $w$ submits a preference list that is a total ordering on $M\cup\{w\}$ or she submits a total ordering on a subset of $M$ (i.e. an incomplete or truncated list). 
This distinction does not alter the decision process for any algorithm commonly referenced in the literature. 
However, it does alter how an individual measures honesty. 
As such, we also considered preferences as truncated lists. 

Most of our results our written so that they only rely on how an individual ranks a spouse they are willing to match.  
As such, these results immediately extend when individuals evaluate honesty with truncated lists.
Several of our proofs however utilize the full structure of the preferences.  
For these results, we have written the proofs so that it is straightforward to adjust when individuals evaluate honesty with truncated lists. 
For instance in Table \ref{tab:P1alt} of Theorem \ref{thm:Neg} we provide two options for $w_1$'s submitted preferences.  
If individuals evaluate honesty with truncated lists then she must submit the first option to be minimally dishonest.

We denote a truncated list with (${\PP_m}:w_1, w_2, \mathbf{m}, \{w_3, w_4\}$) to indicate that $m$ is unwilling to match either $w_3$ or $w_4$.  
Let $D(\Pi^1_m,\Pi^2_m)$ be the set of unordered $\{i,j\}\in W\cup\{m\}$ where $i$ and $j$ appear in a different order in $\Pi^1_m$ and $\Pi^2_m$. 
Let ${\cal R}_k(\SP^1_m,\SP^2_m)$ be the set of $\{i,j\}\in W\cup\{m\}$ where $m$ is unwilling to match $i$ and $j$ in $\SP^k_m$ but prefers $i$ to $j$ in $\SP^{3-k}_m$.
With this definition we present the two common generalizations of the Kendall Tau distance.

\begin{definition}[Kendall Tau with Penalty $p$]\label{def:metric2}
	Let $\Pi^1_m$ and $\Pi^2_m$  be two truncated preference lists over a set and let $p\in [0,1]$.  Then the {\it Kendall Tau distance with penalty $p$} between $\Pi^1_m$ and $\Pi^2_m$ is
	\begin{align}\label{eqn:metric}
	K^{(p)}(\Pi^1_m,\Pi^2_m) = |D(\Pi^1_m,\Pi^2_m)|+p\left(|{\cal R}_1(\Pi^1_m,\Pi^2_m)|+|{\cal R}_2(\Pi^1_m,\Pi^2_m)|\right).
	\end{align}
\end{definition}

\begin{definition}[Hausdorff Distance Based on Kendall Tau]\label{def:metric3}
	Let $\Pi^1_m$ and $\Pi^2_m$  be two truncated preference lists over a set and let $p\in [0,1]$.  Then the {\it Hausdorff distance based on Kendall Tau} between $\Pi^1_m$ and $\Pi^2_m$ is
	\begin{align}\label{eqn:metric2}
	K_{Haus}(\Pi^1_m,\Pi^2_m) = |D(\Pi^1_m,\Pi^2_m)|+max\left\{|{\cal R}_1(\Pi^1_m,\Pi^2_m)|,|{\cal R}_2(\Pi^1_m,\Pi^2_m)|\right\}.
	\end{align}
\end{definition}

When $r(\cdot)$ is monotonic and INS it is straightforward to show that if $\mathbf{z} \SP_z y$ then $\mathbf{z} \PP_z y$ at a (locally) minimally dishonest equilibrium. 
Therefore ${\cal R}_2(\PP_y,\SP_y)=0$ and $K_{Haus}(\PP_y,\SP_y)=K^{(1)}(\PP_y,\SP_y)$ at a minimally dishonest equilibrium. 

Moreover the metric $K(\cdot)$ when individuals are using a total ordering is equivalent to the metric $K^{(0)}(\cdot)$ if individuals use the equivalent truncated list:
Let $\PP_w$ be a total ordering on $M\cup\{w\}$ and let $\PP'_w$ be the equivalent truncated list. 
When working with totally ordered preference lists, (locally) minimal dishonesty implies (i) if $\mathbf{w} \SP_w m$ then $\mathbf{w} \PP_w m$ and (ii) if $m_i \SP_w m_j$, $\mathbf{w} \PP_w m_i$, and $\mathbf{w}\PP_w m_j$ then $m_i \PP_w m_j$.   
As a result, $K(\PP_w,\SP_w)=K^{(0)}(\PP'_w,\SP_w)$. 
Therefore when $p=0$, we obtain the same set of (locally) minimally dishonest equilibria whether individuals use total orderings or truncated lists.  
However, the set of equilibria may be different when $p>0$. 

\begin{proposition}\label{prop:DiffHonest}{Truncated lists may result in different minimally dishonest equilibria when $p>0$.}\end{proposition}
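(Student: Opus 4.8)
The plan is to prove the proposition by exhibiting a single sincere profile together with a putative profile that is a (locally) minimally dishonest equilibrium when honesty is measured by $K^{(0)}$ (equivalently, by total orderings, as just established) but which ceases to be minimally dishonest once $p>0$. Since the preceding discussion shows that the $p=0$ truncated metric reproduces exactly the total-ordering equilibria, producing a single equilibrium in the symmetric difference of the two equilibrium sets suffices.

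Concretely, I would let $r(\cdot)$ select a stable matching uniformly at random and use a three-man, three-woman instance with exactly two stable matchings $\mu$ and $\mu'$, arranged so that one woman $w_1$ has sincere list $(\SP_{w_1}:\ m_1,m_2,m_3,\mathbf{w_1})$ with $\mu(w_1)=m_1$ (her favorite), $\mu'(w_1)=m_2$, and $m_3$ never matched to $w_1$ in any stable matching. This is easy to arrange: pair $m_3$ and $w_3$ as mutual top choices so that they are matched in every stable matching, and let the $\{m_1,m_2,w_1,w_2\}$ block carry the classical man-optimal/woman-optimal rotation, so that $w_1$ receives $m_1$ in the woman-optimal matching and $m_2$ in the man-optimal one. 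All players other than $w_1$ report honestly. To obtain her favorite $m_1$ with certainty, $w_1$ must destroy $\mu'$, which by the stability of $\mu'$ forces her to declare $m_2$ unacceptable; I would then give two ways to do so, namely option A, $(\PP^A_{w_1}:\ m_1,m_3,\mathbf{w_1},\{m_2\})$, which truncates only $m_2$ while keeping $m_3$ acceptable, and option B, $(\PP^B_{w_1}:\ m_1,\mathbf{w_1},\{m_2,m_3\})$, which truncates both $m_2$ and $m_3$.

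The arithmetic of Definition \ref{def:metric2} is the heart of the construction: both options satisfy $|D(\cdot,\SP_{w_1})|=2$, so that $K^{(0)}(\PP^A_{w_1},\SP_{w_1})=K^{(0)}(\PP^B_{w_1},\SP_{w_1})=2$, yet their penalties differ, since $\mathcal{R}_1(\PP^B_{w_1},\SP_{w_1})=\{\{m_2,m_3\}\}$ whereas $\mathcal{R}_1(\PP^A_{w_1},\SP_{w_1})=\emptyset$; hence $K^{(p)}(\PP^A_{w_1},\SP_{w_1})=2<2+p=K^{(p)}(\PP^B_{w_1},\SP_{w_1})$ for every $p>0$. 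I would then verify that both submissions leave $\mu$ as the unique stable matching, so that by Lemma \ref{lem:deterministic} the outcome is $\mu$ deterministically and $w_1$ receives $m_1$ under either one: declaring $m_2$ unacceptable kills the only alternative $\mu'$, and keeping $m_3$ acceptable in option A is harmless because $m_3$ prefers $w_3$ and so never blocks. Consequently both $\PP^B:=[\SP_{-w_1},\PP^B_{w_1}]$ and $\PP^A:=[\SP_{-w_1},\PP^A_{w_1}]$ are Nash equilibria with the same sincerely stable outcome $\mu$.

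It remains to do the honesty bookkeeping at each $p$, which is where the main obstacle lies. For $p=0$ I must show that $\PP^B$ is genuinely minimally dishonest, i.e. that no strictly more honest list secures $m_1$ for $w_1$: because $\mu'$ remains stable whenever $m_2$ is acceptable to $w_1$, any list guaranteeing $m_1$ must place $\mathbf{w_1}$ above $m_2$, and since $m_3$ is sincerely below $m_2$ this single crossing already forces at least two Kendall–Tau disagreements, so the minimum achievable $K^{(0)}$ equals $2$, attained by both options; thus $\PP^B$ is minimally dishonest at $p=0$. For $p>0$, by contrast, option A is strictly more honest than option B and yields $w_1$ the same certain outcome $m_1$, so $\PP^B$ violates minimal dishonesty, while the bound $K^{(p)}\ge 2$ for any list securing $m_1$ shows that $\PP^A$ stays minimally dishonest. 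Therefore $\PP^B$ is a (locally) minimally dishonest equilibrium under total orderings (equivalently $K^{(0)}$) but not under $K^{(p)}$ with $p>0$, so the two notions induce different sets of minimally dishonest equilibria, as claimed. The only delicate points to check carefully are (i) the uniqueness of $\mu$ under each submission and (ii) that no manipulation cheaper than crossing $m_2$ secures $m_1$; both reduce to the stability check on $\mu'$ noted above, and neither requires the full force of Corollaries \ref{cor:one} and \ref{cor:CompleteOrder}, which here merely confirm that $w_1$ remains honest above her spouse $m_1$.
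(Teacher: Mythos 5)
Your proposal is correct, but it reaches the proposition by a genuinely different construction than the paper's. The paper works with the Gale--Shapley algorithm on a $4\times 4$ instance in which the two candidate responses for $w_1$ have \emph{different} $D$-components ($3$ versus $4$): under total orderings the shorter truncation ($K=3$) wins, while under truncated lists the penalty terms flip the comparison only when $3+3p>4+p$, i.e.\ $p>\tfrac12$, so the paper must then pad the instance with $k-4$ extra unacceptable men to make the $\mathcal{R}$-penalty grow quadratically and cover all $p>0$. You instead use a uniform-random-over-stable-matchings mechanism on a $3\times 3$ instance and engineer a \emph{tie} in the $D$-component: both of $w_1$'s outcome-equivalent responses have $|D|=2$, but only option B incurs an $\mathcal{R}_1$ penalty, so $K^{(p)}(\PP^A_{w_1},\SP_{w_1})=2<2+p=K^{(p)}(\PP^B_{w_1},\SP_{w_1})$ for every $p>0$. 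Since Definition \ref{def:GlobalMD} only quantifies over strictly more honest lists, both responses are minimally dishonest at $p=0$ (your lower bound $K\geq 2$ for any list that removes $m_2$, hence destroys $\mu'$, is the key step and is argued correctly), while for $p>0$ option B ceases to be minimally dishonest because A is strictly more honest and equally good. Your approach buys uniformity in $p$ -- one construction, no padding argument -- and simpler arithmetic; the paper's approach buys the phenomenon under the standard Gale--Shapley mechanism and exhibits a strict reversal of which response is more honest (for $p>\tfrac12$), whereas in your example the difference manifests only as one equilibrium disappearing from the truncated-list equilibrium set while the other (option A) persists in both. Both establish that the two equilibrium sets differ, which is all the proposition asserts.
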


\begin{proof}
We consider SSM using the Gale-Shapley algorithm with the sincere profile in Table \ref{tab:Trunc}.  The stable matchings are given in Table \ref{tab:TruncMar}. 

\begin{table}[ht]	
	\centering\caption	
	{Preferences $\SP$ for Proposition  \ref{prop:DiffHonest}. \label{tab:Trunc}}	
	{\begin{tabular}{|r l |r l |}	
			\hline
			$\SP_{m_1}:$&$w_1,w_2,w_3,w_4,\mathbf{m_1}$&
			$\SP_{w_1}:$&$m_2,m_3,m_1,m_4,\mathbf{w_1}$\\
			$\SP_{m_2}:$&$w_2,w_3,w_1,w_4,\mathbf{m_2}$&
			$\SP_{w_2}:$&$m_3,m_1,m_2,m_4,\mathbf{w_2}$\\
			$\SP_{m_3}:$&$w_3,w_1,w_2,w_4,\mathbf{m_3}$&
			$\SP_{w_3}:$&$m_1,m_2,m_3,m_4,\mathbf{w_3}$\\
			$\SP_{m_4}:$&$w_4,w_1,w_2,w_3,\mathbf{m_4}$&
			$\SP_{w_4}:$&$m_4,m_1,m_2,m_3,\mathbf{w_4}$\\
			\hline
	\end{tabular} }	
	{}	
\end{table}	

\begin{table}[ht]	
	\centering\caption	
	{Stable Matchings with Respect to $\SP$. \label{tab:TruncMar}}	
	{\begin{tabular}{|r | l |}	
			\hline
			$\mu_1$&$ \mu_1(m_1)=w_1, \ \mu_1(m_2)=w_2, \ \mu_1(m_3)=w_3, \ \mu_1(m_4)=w_4$\\
			$\mu_2$&$ \mu_2(m_1)=w_2, \ \mu_2(m_2)=w_3, \ \mu_2(m_3)=w_1, \ \mu_2(m_4)=w_4$\\
			$\mu_3$&$ \mu_3(m_1)=w_3, \ \mu_3(m_2)=w_1, \ \mu_3(m_3)=w_2, \ \mu_3(m_4)=w_4$\\
			\hline
	\end{tabular} }	
	{}	
\end{table}

Since we are using the Gale-Shapley algorithm, $r(\SP)=\mu_1$. 
If everyone is honest then the only way that woman $w_1$ can obtain her woman-optimal partner, $m_2$, is by excluding $m_1$ and $m_3$ from her preference list.  
Minimal dishonesty guarantees she will list $m_2$ first.  
Moreover, if she measures honesty with a total ordering then she will correctly order everyone that she indicates she is unwilling to match.  
Thus her preferences must be either $(\PP^1_{w_1}: \ m_2,  m_4, \mathbf{w_1}, m_3, m_1)$ or 
$(\PP^2_{w_1} : \ m_2, \mathbf{w_1}, m_3, m_1, m_4)$. 
$\PP^2_{w_1}$ is the minimally dishonest best response since $K(\SP_{w_1},\PP^1_{w_1})=4$ (the disparities are $\{w_1,m_3\}, \{w_1,m_3\}, \{m_3,m_4\}$, and $\{m_2,m_4\}$) and $K(\SP_{w_1},\PP^2_{w_1}=3)$ ($\{w_1,m_3\}, \{w_1,m_3\}$,  and $\{w_1,m_4\}$).
Moreover, if everyone else is honest it is straightforward to verify that this corresponds to a minimally dishonest equilibrium.  

The equivalent truncated strategies are $(\PP^1_{w_1}: \ m_2, m_4, \mathbf{w_1}, \{m_3, m_1\})$ and $(\PP^2_{w_1} : \ m_2, \mathbf{w_1}, \{m_3,m_1,m_4\})$. 
Similar to before, either $\PP^1_{w_1}$ or $\PP^2_{w_1}$ is her minimally dishonest best response. 
However, $\PP^2_{w_1}$ does not refer to a minimally dishonest best response for $p>\frac{1}{2}$.  
$\PP^1_{w_1}$ is more honest since $K^{(p)}(\SP_{w_1},\PP^1_{w_1})=4+p$ ($D(\SP_{w_1},\PP^1_{w_1})=\{\{w_1, m_1\}, \{w_1,m_3\}, \{m_3,m_4\}, \{m_1,m_4\}\}$ and ${\cal R}(\SP_{w_1},\PP^1_{w_1})=\{m_3, m_1\}$) while $K^{(p)}(\SP_{w_1},\PP^1_{w_1})=3+3p$.  Moreover, it is straightforward to verify that $\PP^1$ is a minimally dishonest equilibrium.  Thus the statement of the proposition holds for $p>\frac{1}{2}$.

We can extend the result for $p>0$, by padding $M$ with $k-4$ additional men that are unwilling to match any woman ($m_i \SP_{m_i} w_j$) for $i> 4$ and for all $j$. 
We extend $\SP_{w_1}$ to ($\SP_{w_1}: m_2,m_3,m_1,m_4,m_5,...,m_k,\mathbf{w_1}$) and have all other women add the additional men to the end of their preference lists. 
Similar to before $w_1$'s minimally dishonest best response is to truncate her list after $m_1$ if she is using a total ordering. 
Now let $\PP_{w_1}$ be her minimally dishonest best response when she uses a truncated list.  
Using minimal dishonesty, it is straightforward to verify that $\PP_{w_1}$ is either ($\PP_{w_1}: \  m_2, m_4, m_5,..., m_{t}, \mathbf{w_1}, \{m_1,m_3,m_{t+1}, m_{t+2},...,m_{k}\}$) for some $t\geq 4$ or ($\PP_{w_1}: \  m_2, \mathbf{w_1}, M\setminus\{m_2\}$). 
Therefore, $K^{(p)}(\SP_{w_1},\PP_{w_1})=k+t-4+p{{k-t+2}\choose{2}}$ where $t\geq 4$ for the former and $t=3$ for the latter.  
For sufficiently large $k$, this is minimized with $t=k$. Thus we can select $k$ so that the minimally dishonest best response is to omit only $m_1$ and $m_3$ completing the proof of the proposition. 
\end{proof}

\section*{Acknowledgments}

Our research has been supported by NSF under grant number
CMMI-1335301. 
The views and conclusions contained in this document are those of
the authors and should not be interpreted as representing the official
policies, either expressed or implied, of the sponsoring organizations,
agencies, or governments.

\bibliographystyle{amsplain}
\bibliography{bib}

\end{document}